\documentclass[12pt]{article}
%%%%%%%%%%%%%%%%%%%%%%%%%%%%%%%%%%%%%%%%%%%%%%%%%%%%%%%%%%%%%%%%%%%%%%%%%%%%

% P A C K A G E S %%%%%%%%%%%%%%%%%%%%%%%%%%%%%%%%%%%%%%%%%%%%%%%%%%%%%%%%%%%%%%%%%%%%%%%%%%%% 
\usepackage[margin=1in]{geometry}
\usepackage{marginnote}
\usepackage[toc,page,title]{appendix}									% provides some facilities for modifying the typesetting of appendix titles.
\usepackage{amsmath}													% AMS math.
\usepackage{amsfonts}													% AMS fonts.
\usepackage{amssymb}													% AMS symbol.
\usepackage{amsthm}														% AMS theorem environments.
\usepackage{amstext}
\usepackage{mathtools}
\usepackage{bm}															% provides access to bold math symbols.
\usepackage{booktabs}
\usepackage{adjustbox}
\usepackage[center,sc,hang]{caption}											% offers customization of captions in floating environments such figure and table.
\usepackage{fancybox}													% gives extensive answers to and solutions for many questions about how to frame or rotate this or that in \LaTeX.
\usepackage[OT4,OT1,T2A,T5,T1]{fontenc}									% loads the encoding definition files for each font encoding given as an option but not already declared.
\usepackage{tikz,pgf,graphicx}
\usepackage{color}
\usepackage{lscape}
\usepackage{makeidx}
\usepackage[sc]{mathpazo}
\usepackage{multirow,subfig}
\usepackage{setspace}
\usepackage[multiple]{footmisc}
\usepackage{microtype}
\usepackage{enumitem}
\usepackage{pstricks}
\usepackage{pstricks-add}
\usepackage{pst-bar}
\usepackage{pst-node}
\usepackage{pst-text}
\usepackage{pst-3d}
\usepackage{pst-coil}
%\ usepackage{pstcol}
\usepackage{pst-plot}
\usepackage{pst-lens}
\usepackage{rotating}
\usepackage{sectsty}
\usepackage{sgame}
\usepackage{soul}
\usepackage{textcomp}
\usepackage[sort]{natbib}
\setlength{\bibsep}{0pt}
\usepackage{authblk}
\usepackage[hyperfootnotes=false,citecolor=Blue,colorlinks=true]{hyperref}
% \usepackage{tgpagella}
% \usepackage[titles]{tocloft}
% \usepackage{wasysym}

% T H E O R E M   E N V I R O N M E N T S %%%%%%%%%%%%%%%%%
% \newtheorem{theorem}{\usefont{T1}{cmr}{m}{sc}\selectfont Theorem}
% \newtheorem{assumption}{Assumption}
% \newtheorem{case}{Case}
% \newtheorem{claim}{Claim}
% \newtheorem{remark}{\usefont{T1}{cmr}{m}{sc}\selectfont Remark}
% \newtheorem{conclusion}{Conclusion}
% \newtheorem{condition}{Condition}
% \newtheorem{corollary}{\usefont{T1}{cmr}{m}{sc}\selectfont Corollary}
% \newtheorem*{corollary1}{\usefont{T1}{cmr}{m}{sc}\selectfont Corollary 1}
% \newtheorem*{corollary2}{\usefont{T1}{cmr}{m}{sc}\selectfont Corollary 2}
% \newtheorem*{corollary3}{\usefont{T1}{cmr}{m}{sc}\selectfont Corollary 3}
% \newtheorem{definition}{\usefont{T1}{cmr}{m}{sc}\selectfont Definition}
% \newtheorem{example}{Example}
% \newtheorem*{example1}{\usefont{T1}{cmr}{m}{sc}\selectfont Example 1}
% \newtheorem*{example2}{\usefont{T1}{cmr}{m}{sc}\selectfont Example 2}
% \newtheorem*{example3}{\usefont{T1}{cmr}{m}{sc}\selectfont Example 3}
% \newtheorem{exercise}{Exercise}
% \newtheorem{lemma}{Lemma}
% \newtheorem{problem}{Problem}
% \newtheorem{proposition}{\usefont{T1}{cmr}{m}{sc}\selectfont Proposition}
% \newtheorem{observation}{\usefont{T1}{cmr}{m}{sc}\selectfont Observation}
% \newtheorem{solution}{Solution}
% \newtheorem{summary}{Summary}
% \newenvironment{proof}{\noindent \textbf{Proof}}{\mbox{ } \hfill \textbf{Q.E.D.} \vspace{.2in}}

%+++++++++++++++++++++++++++++++++++++++++++++++++++++++++++++++++++++++++++

%+++++++++++++++++++++++++++++++++++++++++++++++++++++++++++++++++++++++++++
\psset{xunit=1in,yunit=1in,runit=1in}

\allsectionsfont{\centering\mdseries\scshape}
\captionlabelfont{\mdseries\scshape}

%+++++++++++++++++++++++++++++++++++++++++++++++++++++++++++++++++++++++++++

\newtheorem*{thm*}{Theorem}
\newtheorem{lemma}{Lemma}
\newtheorem{proposition}{Proposition}

\newtheorem{ass}{Assumption}

%+++++++++++++++++++++++++++++++++++++++++++++++++++++++++++++++++++++++++++

%\newcommand{\one}{\mathbf{1}}

\renewcommand{\geq}{\geqslant}
\renewcommand{\leq}{\leqslant}
\newcommand{\wc}{\mkern 2mu\cdot\mkern 2mu}
\newcommand{\od}{\mathop{}\!\mathrm{d}}
\newcommand{\BB}{\mathit{BB}}
\newcommand{\BW}{\mathit{BW}}

\newcommand{\WW}{\mathit{WW}}
\newcommand{\one}{\mathit{C}_1}
\newcommand{\two}{\mathit{C}_2}
\newcommand{\urnone}{\text{urn 1}}
\newcommand{\urntwo}{\text{urn 2}}

\usepackage{accents}
\newcommand{\ubar}[1]{\underaccent{\bar}{#1}}
\DeclareMathOperator{\ibeta}{I}
\DeclareMathOperator{\argmax}{argmax}
\DeclareMathOperator{\Beta}{B}
\DeclareMathOperator{\var}{Var}
\allowdisplaybreaks

%+++++++++++++++++++++++++++++++++++++++++++++++++++++++++++++++++++++++++++

\makeatletter

\newcommand\listoftodos{\clearpage\section*{Todo list} \@starttoc{tdo}}
\newcommand\l@todo[2] {\par\noindent {$\Box$\;#1\dotfill\makebox[1.5em][r]{#2}\;}\par}
\makeatother

% C O L O R S %%%%%%%%%%%%%%%%%%%%%%%%%%%%%%%%%%%%%%
\definecolor{PastelYellow}{rgb}{0.99,.96,0.85}
\definecolor{PastelBlue}  {rgb}{0.88,.89,0.98}
\definecolor{LightYellow} {rgb}{1.,1.,0.88}
\definecolor{PaleGreen}   {rgb}{0.6,0.98,0.6}
\definecolor{PaleGreenB}  {rgb}{0.9,1,0.9}
\definecolor{green1}      {rgb}{0,0.2,0}
\definecolor{green2}      {rgb}{0,0.45,0}
\definecolor{green3}      {rgb}{0,0.7,0}
\definecolor{green4}      {rgb}{0,0.95,0}
\definecolor{LightGray}   {gray}{0.92}
\definecolor{YellowOrange}{cmyk}{0,0.42,1,0}
\definecolor{OliveGreen}  {cmyk}{0.64,0,0.95,0.40}
\definecolor{GreenYellow}   {cmyk}{0.15,0,0.69,0}
\definecolor{Yellow}        {cmyk}{0,0,1,0}
\definecolor{Goldenrod}     {cmyk}{0,0.10,0.84,0}
\definecolor{Dandelion}     {cmyk}{0,0.29,0.84,0}
\definecolor{Apricot}       {cmyk}{0,0.32,0.52,0}
\definecolor{Peach}         {cmyk}{0,0.50,0.70,0}
\definecolor{Melon}         {cmyk}{0,0.46,0.50,0}
\definecolor{YellowOrange}  {cmyk}{0,0.42,1,0}
\definecolor{Orange}        {cmyk}{0,0.61,0.87,0}
\definecolor{BurntOrange}   {cmyk}{0,0.51,1,0}
\definecolor{Bittersweet}   {cmyk}{0,0.75,1,0.24}
\definecolor{RedOrange}     {cmyk}{0,0.77,0.87,0}
\definecolor{Mahogany}      {cmyk}{0,0.85,0.87,0.35}
\definecolor{Maroon}        {cmyk}{0,0.87,0.68,0.32}
\definecolor{BrickRed}      {cmyk}{0,0.89,0.94,0.28}
\definecolor{Red}           {cmyk}{0,1,1,0}
\definecolor{OrangeRed}     {cmyk}{0,1,0.50,0}
\definecolor{RubineRed}     {cmyk}{0,1,0.13,0}
\definecolor{WildStrawberry}{cmyk}{0,0.96,0.39,0}
\definecolor{Salmon}        {cmyk}{0,0.53,0.38,0}
\definecolor{CarnationPink} {cmyk}{0,0.63,0,0}
\definecolor{Magenta}       {cmyk}{0,1,0,0}
\definecolor{VioletRed}     {cmyk}{0,0.81,0,0}
\definecolor{Rhodamine}     {cmyk}{0,0.82,0,0}
\definecolor{Mulberry}      {cmyk}{0.34,0.90,0,0.02}
\definecolor{RedViolet}     {cmyk}{0.07,0.90,0,0.34}
\definecolor{Fuchsia}       {cmyk}{0.47,0.91,0,0.08}
\definecolor{Lavender}      {cmyk}{0,0.48,0,0}
\definecolor{Thistle}       {cmyk}{0.12,0.59,0,0}
\definecolor{Orchid}        {cmyk}{0.32,0.64,0,0}
\definecolor{DarkOrchid}    {cmyk}{0.40,0.80,0.20,0}
\definecolor{Purple}        {cmyk}{0.45,0.86,0,0}
\definecolor{Plum}          {cmyk}{0.50,1,0,0}
\definecolor{Violet}        {cmyk}{0.79,0.88,0,0}
\definecolor{RoyalPurple}   {cmyk}{0.75,0.90,0,0}
\definecolor{BlueViolet}    {cmyk}{0.86,0.91,0,0.04}
\definecolor{Periwinkle}    {cmyk}{0.57,0.55,0,0}
\definecolor{CadetBlue}     {cmyk}{0.62,0.57,0.23,0}
\definecolor{CornflowerBlue}{cmyk}{0.65,0.13,0,0}
\definecolor{MidnightBlue}  {cmyk}{0.98,0.13,0,0.43}
\definecolor{NavyBlue}      {cmyk}{0.94,0.54,0,0}
\definecolor{NB1}           {cmyk}{.85,1,0,0}
\definecolor{NB2}           {cmyk}{.70,.9,0,0}
\definecolor{NB3}           {cmyk}{.55,.8,0,0}
\definecolor{RoyalBlue}     {cmyk}{1,0.50,0,0}
\definecolor{Blue}          {cmyk}{1,1,0,0}
\definecolor{Cerulean}      {cmyk}{0.94,0.11,0,0}
\definecolor{Cyan}          {cmyk}{1,0,0,0}
\definecolor{ProcessBlue}   {cmyk}{0.96,0,0,0}
\definecolor{SkyBlue}       {cmyk}{0.62,0,0.12,0}
\definecolor{Turquoise}     {cmyk}{0.85,0,0.20,0}
\definecolor{TealBlue}      {cmyk}{0.86,0,0.34,0.02}
\definecolor{Aquamarine}    {cmyk}{0.82,0,0.30,0}
\definecolor{BlueGreen}     {cmyk}{0.85,0,0.33,0}
\definecolor{Emerald}       {cmyk}{1,0,0.50,0}
\definecolor{JungleGreen}   {cmyk}{0.99,0,0.52,0}
\definecolor{SeaGreen}      {cmyk}{0.69,0,0.50,0}
\definecolor{Green}         {cmyk}{1,0,1,0}
\definecolor{ForestGreen}   {cmyk}{0.91,0,0.88,0.12}
\definecolor{PineGreen}     {cmyk}{0.92,0,0.59,0.25}
\definecolor{LimeGreen}     {cmyk}{0.50,0,1,0}
\definecolor{YellowGreen}   {cmyk}{0.44,0,0.74,0}
\definecolor{SpringGreen}   {cmyk}{0.26,0,0.76,0}
\definecolor{OliveGreen}    {cmyk}{0.64,0,0.95,0.40}
\definecolor{RawSienna}     {cmyk}{0,0.72,1,0.45}
\definecolor{Sepia}         {cmyk}{0,0.83,1,0.70}
\definecolor{Brown}         {cmyk}{0,0.81,1,0.60}
\definecolor{Tan}           {cmyk}{0.14,0.42,0.56,0}
\definecolor{Gray}          {cmyk}{0,0,0,0.50}
\definecolor{Black}         {cmyk}{0,0,0,1}
\definecolor{White}         {cmyk}{0,0,0,0}

\begin{document}
\pagenumbering{gobble}
\title{Belief Error and Non-Bayesian Social Learning: Experimental Evidence  \thanks{Bo\u{g}a\c{c}han \c{C}elen: bc319@nyu.edu, Sen Geng: gengsen2020@gmail.com, Huihui Li: huihui.pku@gmail.com. We have benefited from the helpful comments of seminar participants at Huazhong University of Science and Technology, Monash University, and Xiamen University, and participants at the conferences of  Xiamen International Workshop on Experimental Economics (2014, 2016), China Meeting of Econometric Society (2017), NYU CESS Meeting (2017), China Greater Bay Area Experimental Economics Workshop (2018), ESA World Meeting (2018). Geng acknowledges financial support from the Fundamental Research Funds for the Central Universities (No. 20720151323) and China's NSF Grant (No. 71703134). }}

\author[1]{Bo\u{g}a\c{c}han \c{C}elen }
\affil[1]{University of Melbourne}
\author[2]{Sen Geng}
\author[2]{Huihui Li}
\affil[2]{Xiamen University}
%\date{This version: September 14, 2018}
\maketitle

\begin{abstract}
This paper experimentally studies whether individuals hold a first-order belief that others apply Bayes' rule to incorporate private information into their beliefs, which is a fundamental assumption in many Bayesian and non-Bayesian social learning models. We design a novel experimental setting in which the first-order belief assumption implies that social information is equivalent to private information. Our main  finding is that participants' reported reservation prices of social information are significantly lower than those of private information, which provides evidence that casts doubt on the first-order belief assumption. We also build a novel belief error model in which participants form a random posterior belief with a Bayesian posterior belief kernel to explain the experimental findings. A structural estimation of the model suggests that participants' sophisticated consideration of others' belief error and their exaggeration of the error both contribute to the difference in reservation prices.\\

\textbf{Keywords: private information, social information, belief error, non-Bayesian social learning }\\

\textbf{JEL: C91, C92, D83}
\end{abstract}

\newpage
\pagenumbering{arabic}

\section{Introduction \label{sec:intro}}
In many settings where individuals have to make a choice without knowing the underlying state that is payoff relevant, they can learn from either observing signals indicative of the underlying state (labeled as private information) or observing predecessors' choices in similar settings (labeled as social information).\footnote{This article confines its discussion of social information to observing others' actions. More generally, social information also includes observing payoffs of others' actions, communication and observation of others' beliefs and opinions.}   A typical social learning environment often involves both private information and social information. How individuals aggregate the two types of information is a continuing focus in the literature on social learning.

As a useful benchmark, Bayesian social learning models assume that it is common knowledge that individuals apply Bayes' rule to incorporate private information and social information into their beliefs.\footnote{See \citet{bikhchandani:92}, \citet{banerjee:92}, \citet{smith:00} for an exploration of information aggregation in settings where each individual observes private information and all predecessors' choices. See \citet{acemoglu:11} for an exploration of information aggregation in settings where each individual observes private information and a stochastically generated subset of predecessors' choices.} Nevertheless, social information often has nested relationships in the sense that predecessors may base their choices on private information and social information, which itself may also originate from private information and social information, and so on. The complexity of the network structure underlying social information and the ignorance about predecessors' information structures are likely to prevent individuals from rationally interpreting social information. Motivated by this observation, non-Bayesian social learning models deviate from the Bayesian benchmark mainly in the direction of relaxing the assumption that individuals incorporate social information in a Bayesian manner. Meantime, non-Bayesian models maintain the assumption that it is common knowledge that individuals apply Bayes' law to incorporate private information, i.e., individuals incorporate private information in a Bayesian manner and hold first-order and higher-order beliefs about others' processing private information in a Bayesian way.

We design a novel experiment to test whether the first-order belief assumption that individuals think that others apply Bayes' law to incorporate private information into their beliefs holds. The multiplayer social learning game in our experiment consists of three stages. In stage 1, either urn 1 or urn 2 is selected for the game with an equal chance. Urn 1 contains $p$ fraction of black balls and $1-p$ fraction of white balls, and urn 2 contains $p$ fraction of white balls and $1-p$ fraction of black balls. The fraction $p$ is public information, but the players do not know which of the two urns is selected. Each player draws a ball from the urn with replacement and privately observes the color of the drawn ball. Then, each player makes an initial binary choice, i.e., guesses whether urn 1 or urn 2 is used. In stage 2, each player receives an endowment and states her reservation price for receiving  either additional private information (i.e., the colors of additional ball draws) or additional social information (i.e., the initial choices made by other players). In stage 3, the Becker-DeGroot-Marschak method \citep{becker:64} is employed to determine whether a player successfully purchases the additional information. Conditional on successful purchase, the player observes the additional information and then makes a second binary choice between urn 1 and urn 2. The payoff of each player is a fixed reward for her correct final choice in addition to her remaining endowment. Our experiment varies three parameters: (1) information type, i.e., additional private information or additional social information, (2) signal quantity, i.e., one or three additional observations, and (3) signal quality, with $p$ increasing from 0.6 to 0.75 to 0.9.

Given our experimental design, the first-order belief assumption implies that social information is identical to private information. Since both the network structure and information generating process are sufficiently simple and are known to individuals, whether an individual views social information as identical to private information essentially depends on whether she believes that others make a Bayesian first choice, i.e., choose urn 1 (or urn 2) after observing a black (or white) ball. In addition to the first-order belief assumption that others apply Bayes' law, two other independent assumptions work together to make her believe that others make a Bayesian first choice. The first is that she believes that no unobservables prevent her and others from agreeing on the state-contingent payoff of each action. While the possibility of such unobservables cannot be completely ruled out, it is reasonable for her to accept the absence of such a possibility given the experimental design. The second is that she believes that others choose optimally given their own beliefs. Since choosing optimally given one's own belief is treated as the basic individual rationality and is used as the implicit guiding principle underlying incentivized economic experiments, an individual in our simple decision task is expected to hold this belief. Therefore, the first-order belief assumption that others apply Bayes' law to incorporate private information is identified with a testable implication that social information has the same value as private information.

Our main finding is that participants' reservation prices for observing additional social information are significantly lower than for observing additional private information with respect to both the mean and empirical distributions of the reservation prices. This finding clearly rejects the null hypothesis about the equivalence of the two types of information and in turn casts doubt on the first-order belief assumption of social learning models. We also find that participants do not always make a Bayesian first choice and, more interestingly, that the frequency of their making Bayesian choices increases as the signal quality increases. This finding casts doubt on the assumption that individuals apply Bayes' law to incorporate private information and suggests that signal quality impacts whether participants follow Bayes' law. In addition, we find that participants generally report higher reservation prices than predicted based on the Bayesian assumption and, in particular, report a considerably positive reservation price for observing one additional signal, which has zero informational value in the Bayesian paradigm.\footnote{A few experimental studies (\citealt{bohm:97}; \citealt{plott:05}; \citealt{cason:14}) show that participants may misperceive the Becker-DeGroot-Marschak elicitation method, that the elicited value may be sensitive to the choice of bounds of the randomly generated number, and that participants may overbid. The concern about the elicitation method may dampen the last finding of the considerably higher reservation price. Nevertheless, this concern should not adversely affect our main finding of the gap in reservation prices because its effect (if any) on the treatment of private information and social information should be similar.}

A key question about the main finding is whether it can be fully explained by participants' sophisticated consideration of others' failure to apply Bayes' law to incorporate private information, i.e., whether the second finding rationalizes the first finding. To address this question and gain deeper insight into these findings, we propose a novel belief error model with two main assumptions. The first assumption is that after receiving information, participants form a random posterior belief with a Bayesian kernel  due to belief error. Specifically, we assume that the value of a participant's random posterior belief follows a beta distribution with a mean equal to the corresponding Bayesian posterior probability and with a nonnegative parameter $\gamma$ measuring the degree of belief error. The model with the assumption predicts that the chance of participants' following Bayes' law rises as the signal quality increases and that there is positive informational value of observing one additional signal. The second assumption is that participants hold the first-order belief that others also form a random posterior belief that follows a beta distribution with  a mean equal to the corresponding Bayesian posterior probability and with a nonnegative parameter $\theta$ measuring the degree of others' belief error in their opinion. From this perspective, the social information setting degenerates into the private information setting when $\theta$ shrinks to zero. Overall, the belief error model predicts that the optimal reservation price is increasing in the value of the posterior belief in the first stage on the interval $[0,\frac{1}{2}]$ but decreasing on the interval $[\frac{1}{2},1]$, indicating that the more uncertain participants feel about the underlying state, the more highly they value additional information.

We employ maximum likelihood estimation to estimate a heterogeneous belief error model. We find that the average first-order belief about others' belief error, $\bar \theta$, is considerably higher than the average belief error $\bar \gamma$. Additional generalized likelihood ratio tests confirm the difference.
% First, the null hypothesis that $\bar \theta =0$ is rejected at the one percent significance level, in favor of the alternative hypothesis that $\bar \theta > 0$. This
Our tests reject $\bar\theta = 0$ in favor of $\bar\theta > 0$, which
suggests that a sophisticated consideration of others' belief error contributes to the gap in reservation prices of private information and social information. In addition, the hypothesis that participants' belief about others' belief error does not exceed the actual average belief error, i.e., $\bar \theta \leq \bar \gamma$, is also rejected, which suggests that an exaggeration of others' belief error also contributes to the gap in reservation prices. % of private information and social information.

Our paper is related to the experimental studies on social learning games initiated by \citet{anderson:97}.  Subsequent experimental studies modify the baseline design and investigate systematic choice behavior off the Bayesian equilibrium path.\footnote{See \citet{hung:01}, \citet{noth:03}, \citet{kubler:04}, and \citet{goeree:07} for example.} Among these studies, \citet{noth:03}, \citet{celen:04}, and \citet{goeree:07} find that private information is overweighted relative to social information based on their model estimation. To circumvent confounding factors, such as specific modeling of the decision process, underlying the interpretation, \citet{weizsacker:10} applies a reduced-form approach to perform a meta analysis of 13 social learning experiments and finds that participants are much less likely to choose optimally in cases where the empirically optimal action contradicts their own signal. Nevertheless, the interpretation of this finding as evidence of participants' overweighting private information is undermined by the varying lucrativeness of choosing optimally.\footnote{One may define the lucrativeness of choosing optimally in any given information set as the increment in expected payoff from choosing optimally rather than choosing sub-optimally or as the expected payoff from choosing the optimal action due to the binary choice and the normalized payoff. In the meta analysis, this value is equal to the fraction of decision rounds with an underlying true state in favor of the optimal action of all decision rounds that include a specific information set. Then, a case where following one's own signal is empirically optimal corresponds to an information set in which the fraction of decision rounds with an underlying true state in favor of one's own signal is greater than one-half. A case where contradicting one's own signal is empirically optimal corresponds to an information set in which the fraction of decision rounds with an underlying true state against one's own signal is greater than one-half. Conceivably, the fraction in the former case is on average larger than in the latter case. Thus, the finding in \citet{weizsacker:10} that participants respond to incentives suggests that participants' greater reluctance to contradict their own signal may be due to the smaller incentive rather than their overweighting of private information.} Our experimental design can be viewed as a truncated version of the sequential social learning design with only the first two positions. The experiment is intentionally designed in this way to identify the effect of individuals' first-order beliefs about others' applying Bayes' law. Our paper contributes to this strand of literature by first providing unequivocal evidence that individuals value private information more than social information and identify their first-order beliefs about others' not applying Bayes' law as a reason. In addition, a recent study by \citet{defilippis:17} investigates the first two positions of a sequential social learning game. They collect  belief data and infer from the model estimation that private information is overweighted only if it contradicts the predecessor' belief.\footnote{\citet{dominitz:09} also collects subjects' belief data in a stylized information cascade experiment and finds that private information is not overweighted.} By contrast, we collect subjects' reservation price data and choice data and provide direct evidence that private information is valued more than social information.

Our paper is also related to the literature on non-Bayesian social learning models. Non-Bayesian social learning models, for the most part, are motivated by the casual observation that social networks are often prohibitively complex; consequently, bounded rational individuals may not be able to make Bayesian inferences on the basis of social information. Models that characterize their incorporation of social information in a certain non-Bayesian manner are then proposed.\footnote{See \citet{bala:98}, \citet{rabin:10}, \citet{guarino:13}, \citet{eyster:10,eyster:14}, and \citet{bohren:16}, among others.} Among these non-Bayesian models, most recently in \citet{molavi:18}, individuals are assumed to process private information in a Bayesian manner, which is treated as common knowledge among them. A class of non-Bayesian social learning models initiated by \citet{degroot:74}, which assumes a linear aggregation of private information and social information, also exists.\footnote{See \citet{demarzo:03}, \citet{golub:10,golub:12}, and \citet{jadbabaie:12}, among others.} By contrast, our belief error model focuses on a basic deviation from the first-order belief assumption and applies to social learning settings where the network structure is sufficiently simple and individuals are fully informed of others' information structure with a simple form. Therefore, our paper should be viewed as complementary to the existing work on non-Bayesian social learning models.

Moreover, our paper is related to the literature on non-Bayesian individual decision making. Individuals have been shown to deviate from the Bayesian updating paradigm systematically in many settings of individual decision making.\footnote{See, for example, \citet{tversky:74}, \citet{holt:09} and the surveys in \citet{camerer:95}, \citet{rabin:98}, and \citet{camerer:04}.} Alternative models that capture a certain deviation from Bayesian updating have been proposed via two typical approaches. In the first approach, non-Bayesian descriptive models are formalized to characterize individuals' misinterpretation of the signal generating process or signals \citep{barberis:98,rabin:99,rabin:02,rabin:10}. In the second approach, non-Bayesian decision models are built on axiomatic foundations and are applied to settings where the prior probability is subjectively determined and even adjusted in the presence of new observations \citep{epstein:06,epstein:08,ortoleva:12}. By contrast, our non-Bayesian decision model applies even when both the prior probability and the signal generating process are given objectively and can be reasonably agreed on by individuals. Moreover, our model can explain the experimental findings that are difficult to reconcile with the aforementioned non-Bayesian decision models, e.g., the monotonic relationship between signal quality and the frequency of making a Bayesian choice.\footnote{For example, \citealp{epstein:10} (in the spirit of \citealp{epstein:06} and \citealp{epstein:08}) assumes that agents' posterior belief is a linearly weighted sum of the Bayesian posterior probability and the prior probability.  This assumption always predicts a Bayesian first choice in our experiment, which is inconsistent with the data. }

Finally, our paper is related to the literature on stochastic binary choice models. A canonical model of stochastic binary choice is the logit model first proposed by \citet{bradley:52}.
The logit model typically assumes that individuals optimally make deterministic choices given their knowledge, but from the perspective of observers, they follow a logistic response function for choice probabilities. The gap has traditionally been reconciled by assuming that individuals evaluate the payoffs of choices with some noise or that an  individual-specific preference shock exists, and that observers are ignorant of them. In contrast to the payoff disturbance foundation, \citet{matejka:15} provides a rational inattention foundation for the logit model. They show that when individuals endogenously determine their learning about the payoffs of choices and the information cost takes the form of the Shannon mutual information cost function, their optimal information and choice strategy naturally implies the logistic choice rule.\footnote{\citet{caplin:15} also provides a rational inattention foundation for agents' stochastic choice without imposing specific assumptions about the function form of the information cost.} We argue in Section \ref{sec:belief-disturbance-model} that applying logit models with either foundation to our setting is inappropriate. Instead, we propose a novel belief error model to generate the stochastic binary choice. In terms of modeling error, our method of making assumptions on random posterior beliefs is different from the typical method of assuming an additively separable error term, as in the logit model.

%Our experimental design features that participants acquire information in an exogenously determined way and moreover, after receiving signals there is little room for some unobservables that may affect their choosing information strategy. This feature makes it inappropriate to apply the logit model with a rational inattention foundation to our setting.\footnote{It is more appropriate to apply the logit model with a rational inattention foundation to settings where individuals' information acquisition is largely unobservable.} Our experimental design also features that agents' reservation prices for additional information are elicited, which are determined by the difference in payoffs between before and after information acquisition. This feature makes it inappropriate to apply the logit model with a payoff disturbance foundation to explain our data since explaining variation in reservation  price by assuming payoff disturbance is a circular explanation.

The remaining parts of the paper proceed as follows. Section \ref{sec:bayesian-benchmark} presents a Bayesian benchmark model of our social learning game. Sections \ref{sec:experimental-design} and \ref{sec:experimental-results} report experimental design and results correspondingly. Section \ref{sec:belief-disturbance-model} proposes a model of belief error and structurally estimates the model parameters using maximum likelihood estimation. Finally, section \ref{sec:conclusion} concludes.

\section{Bayesian Benchmark Model}
\label{sec:bayesian-benchmark}

In the benchmark model, we assume that Bayesian agents share a common prior belief about the state of the world, i.e., state 1 and state 2 occur with probability $\frac{1}{2}$. Each agent has to make a binary action decision whose payoff is contingent on the underlying state. Specifically, action 1 delivers a payoff of 1 in state 1 and a payoff of 0 in state 2, and action 2 delivers a payoff of 0 in state 1 and a payoff of 1 in state 2.

Independent signals $\{s_n\}_{n \ge 0} $, where $ s_n \in  \{B, W \}$, provide incomplete information about the underlying state. We assume that the signal structure is symmetric, i.e.,
$P(s_n=B\mid\text{state 1}) = P(s_n=W\mid\text{state 2}) = p$ and $P(s_n=W\mid\text{state 1}) = P(s_n=B\mid\text{state 2}) = q \equiv 1-p $. We also assume that $p>\frac{1}{2}$.

In a private information setting, an agent is endowed with a private signal $s_0$.  Conditional on the realization of $s_0$, she makes a first choice between the two actions and may also decide to acquire a certain number of additional signals conditional on which she is entitled to make a second choice.  A social information setting is almost identical, except that the additional observations consist of a certain number of other agents' first choices, each of which is made after observing an independent realization of signal $s_0$.

Clearly, a Bayesian agent's posterior belief about state 1 is $p$ after observing a signal $s_0 =B$ and is $q$ after observing a signal $s_0 =W$. Bayesian agents who maximize expected utility should optimally choose action 1 in the former case and action 2 in the latter case. It then can be inferred from observing an agent's first choice of action 1 (action 2) in the social information setting that a signal with the realization of $B$ ($W$) is observed. Thus, the informational content contained in a signal is identical to that inferred by observing an agent's first choice. It is then straightforward that an agent's willingness to pay for observing $n$ additional signals in the private information setting (denoted as $W_{s_0}^{pri}(p,n)$) is equal to her willingness to pay for observing $n$ agents' first choices in the social information setting (denoted as $W_{s_0}^{soc}(p,n)$).

\begin{proposition}
For Bayesian agents who maximize expected utility, $W_{s_0}^{pri}(p,n) =W_{s_0}^{soc}(p,n)$.
\label{prop:zero-information-value}
\end{proposition}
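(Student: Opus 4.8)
The plan is to show that the two information-acquisition problems are, after a relabeling of the observation space, literally the same decision problem, so that any quantity determined by the information structure — in particular the willingness to pay — must coincide. First I would write $W_{s_0}^{pri}(p,n)$ and $W_{s_0}^{soc}(p,n)$ explicitly as values of information: in each setting the willingness to pay equals the expected payoff from the optimal \emph{second} choice (made after observing $s_0$ together with the $n$ additional observations) minus the expected payoff from the optimal \emph{first} choice (made after observing $s_0$ alone). Since $s_0$ is common to both settings, the first-choice term is identical across them, so it suffices to compare the second-choice terms, i.e.\ to compare $\mathbf{E}\big[\max_{a}\Pr(a\text{ correct}\mid s_0,\text{observations})\big]$ in the two settings.

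Second, I would establish the key observation already flagged in the text: because $p>\tfrac12$, a Bayesian expected-utility maximizer who has seen a single signal strictly prefers action $1$ after $B$ and action $2$ after $W$, with no indifference. Hence the map sending an agent's signal realization to her first choice is a bijection $\{B,W\}\to\{1,2\}$, it is the same map for every agent, and it does not depend on the state. Consequently, conditional on the underlying state, an agent's reported first choice has exactly the same distribution as a fresh draw of $s_0$ under the identification $1\leftrightarrow B$, $2\leftrightarrow W$; and since the $n$ other agents observe independent realizations of $s_0$, their first choices are conditionally independent given the state, mirroring the conditional independence of the additional private signals $s_1,\dots,s_n$.

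Third, I would conclude by substitution: under the relabeling, the joint law of $\big(\text{state},\,s_0,\,\text{the }n\text{ social observations}\big)$ equals the joint law of $\big(\text{state},\,s_0,\,s_1,\dots,s_n\big)$. Therefore the posterior belief induced by any realization, the optimal second action it implies, and its expected payoff all agree term by term, so the second-choice values agree and $W_{s_0}^{pri}(p,n)=W_{s_0}^{soc}(p,n)$.

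I do not anticipate a genuine obstacle. The only point requiring care is the bijection step, where one must invoke $p>\tfrac12$ to rule out indifference in the first choice — otherwise a tie-breaking convention would have to be fixed and could in principle make an agent's first choice uninformative about her signal — together with the observation that the $n$ other agents contribute conditionally independent observations that exactly replicate the i.i.d.\ structure of the $n$ additional private signals.
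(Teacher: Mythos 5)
Your argument is correct and is essentially the paper's own: since $p>\tfrac{1}{2}$ rules out indifference, each agent's first choice is a deterministic bijection of her signal, so conditional on the state the $n$ first choices are distributed exactly as $n$ fresh independent signals, and the two information-acquisition problems coincide after relabeling. One small caveat: your opening identification of willingness to pay with a difference of expected payoffs presupposes risk neutrality, whereas the proposition assumes only expected-utility maximization (the paper notes that in general the reservation price is defined by an indifference condition involving utility at four points); this does not affect your conclusion, because your third step shows the two decision problems are isomorphic, so that indifference condition — whatever the utility function — is the same equation in both settings.
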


Since the prior belief about either state is $\frac{1}{2}$ and the signal structure is symmetric,  $W_{s_0}^{pri}(p,n)$ remains the same regardless of the realization of signal $s_0$. For notational simplicity, we use $W(p,n)$ below to indicate $W_{s_0}^{pri}(p,n)$ or $W_{s_0}^{soc}(p,n)$.

 If agents are further assumed to be risk neutral, their willingness to pay for additional observations can be characterized in a closed form, specifically, a form that is monotone in the number of additional observations and concave in the informativeness of signals. We provide the proof of the proposition in the supplementary Appendix B.

\begin{proposition}
Assume that Bayesian agents maximize expected payoff. Then, the following properties hold.
\begin{enumerate}[nosep,label=(\roman*)]
	\item $W(p,1) = 0$ and $W(p,2n)=W(p,2n+1) = \sum_{k=n+1}^{2n}C_{2n}^k\left(p^kq^{2n+1-k}-p^{2n+1-k}q^k\right)$ for $n \geq 1$.
	\item  $W(p,2m) > W(p,2n)$ whenever $m>n$, and $\lim_{n\to\infty} W(p,2n) = 1-p$.
	\item  $\partial W(p,2n)/\partial p > 0$ for $p < p^*_{2n}$ and $\partial W(p,2n)/\partial p < 0$ for $p > p^*_{2n}$, where $p^*_{2n} = \frac{1}{2}+ \frac{1}{2}\left[1-\sqrt[n]{(2\cdot 4\cdot \cdots \cdot 2n)/(3\cdot 5\cdot \cdots \cdot (2n+1))}\right]^{\frac{1}{2}}$.
	\item The threshold level of informativeness of signals that makes additional observations most valuable to an agent, $p_{2n}^*$, decreases as the number of additional observations $2n$ rises, and $\lim_{n\to\infty} p^*_{2n} = \frac{1}{2}$.
	\item $\partial^2 W(p,2n)/\partial p^2 < 0$.
\end{enumerate}
\label{prop:bayesian-expected-payoff}
\end{proposition}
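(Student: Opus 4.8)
The plan is to reduce $W(p,n)$ to the expected payoff of an optimal second choice minus the no-acquisition payoff $p$, and then extract every item from two equivalent expressions for it. By the symmetry noted just before the proposition I may take the endowed signal to be $B$, so the first choice is action~$1$ with expected payoff $p$. Acquiring $n$ further observations is equivalent (Proposition~\ref{prop:zero-information-value}) to seeing $n$ extra signals; letting $J$ be the number of $B$'s among them, the posterior odds of state~$1$ to state~$2$ are $(p/q)^{2J+1-n}$, so the optimal second choice is action~$1$ exactly when $2J+1-n\geq 0$ (ties irrelevant). Taking expectations and subtracting $p=\sum_{j}\binom{n}{j}p^{j+1}q^{n-j}$, the terms with $2j+1-n\geq 0$ cancel and $W(p,n)=\sum_{j:\,2j+1<n}\binom{n}{j}(q^{j+1}p^{n-j}-p^{j+1}q^{n-j})$. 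For $n=1$ the index set is empty, so $W(p,1)=0$; for $n=2m$ the substitution $k=2m-j$ turns the sum into the displayed closed form. I would also record the equivalent reading that the optimal second choice is the majority vote among the $2n+1$ relevant signals (this is the posterior-maximizing choice since $p>\tfrac12$), so $W(p,2n)=P(\mathrm{Bin}(2n+1,p)\geq n+1)-p$ with the even-total analogue $W(p,2n+1)=P(\mathrm{Bin}(2n+2,p)\geq n+2)+\tfrac12 P(\mathrm{Bin}(2n+2,p)=n+1)-p$.

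To finish (i) I must show $W(p,2m+1)=W(p,2m)$. Writing $\pi_k$ for the posterior on state~$1$ after $k$ signals, $W(p,2m)=\mathbb{E}[\max(\pi_{2m+1},1-\pi_{2m+1})]-p$ and likewise with $2m+2$; conditioning the latter on the first $2m+1$ signals, which always carry a strict majority, appending one more signal moves the posterior at most onto $\tfrac12$ without crossing it, so on the relevant linear piece of $x\mapsto\max(x,1-x)$ the martingale property gives equality of the two conditional, hence unconditional, expectations. (Alternatively one checks the binomial identity $P(\mathrm{Bin}(2m+2,p)\geq m+2)+\tfrac12 P(\mathrm{Bin}(2m+2,p)=m+1)=P(\mathrm{Bin}(2m+1,p)\geq m+1)$, which collapses to $\binom{2m+1}{m}=\binom{2m+1}{m+1}$.) The same picture gives (ii): $\mathbb{E}[\max(\pi_k,1-\pi_k)]$ is nondecreasing in $k$ by convexity of $\max(x,1-x)$ plus the martingale property, and it increases strictly when $k$ grows by two, because conditional on the first $2n+1$ signals having exactly $n+1$ $B$'s (an event of positive probability) the posterior after two more signals lands on both sides of $\tfrac12$ with positive probability, making Jensen strict; chaining over $m>n$ yields $W(p,2m)>W(p,2n)$. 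Since $\max(\pi_k,1-\pi_k)\to 1$ almost surely by the law of large numbers and is bounded, $W(p,2n)=\mathbb{E}[\max(\pi_{2n+1},1-\pi_{2n+1})]-p\to 1-p$.

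For (iii) and (v) I differentiate $W(p,2n)=P(\mathrm{Bin}(2n+1,p)\geq n+1)-p$. Using $k\binom{N}{k}=N\binom{N-1}{k-1}$ and $(N-k)\binom{N}{k}=N\binom{N-1}{k}$, the derivative of the binomial tail telescopes to one term, giving $\partial W(p,2n)/\partial p=(2n+1)\binom{2n}{n}(p(1-p))^{n}-1$. On $(\tfrac12,1)$ the factor $(p(1-p))^{n}$ strictly decreases; the right side is positive at $p=\tfrac12$ (because $4^{n}=\sum_{k}\binom{2n}{k}<(2n+1)\binom{2n}{n}$) and equals $-1$ at $p=1$, so there is a unique zero $p^{*}_{2n}\in(\tfrac12,1)$ with the stated sign pattern. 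Solving $(p^{*}(1-p^{*}))^{n}=[(2n+1)\binom{2n}{n}]^{-1}$ and taking the root above $\tfrac12$ yields $p^{*}_{2n}=\tfrac12+\tfrac12\sqrt{1-4[(2n+1)\binom{2n}{n}]^{-1/n}}$, and $4[(2n+1)\binom{2n}{n}]^{-1/n}=[4^{n}(n!)^{2}/(2n+1)!]^{1/n}=[(2\cdot 4\cdots 2n)/(3\cdot 5\cdots(2n+1))]^{1/n}$, the claimed expression. Differentiating once more, $\partial^{2}W(p,2n)/\partial p^{2}=n(2n+1)\binom{2n}{n}(p(1-p))^{n-1}(1-2p)<0$ on $(\tfrac12,1)$, which is (v).

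The step I expect to cost the most work is the monotonicity in (iv). Set $a_n:=[(2n+1)\binom{2n}{n}]^{-1}=(n!)^{2}/(2n+1)!$, so $p^{*}_{2n}(1-p^{*}_{2n})=a_n^{1/n}$; since $p\mapsto p(1-p)$ decreases on $(\tfrac12,1)$, the monotone decrease of $p^{*}_{2n}$ is equivalent to the monotone increase of $a_n^{1/n}$. From $a_{n+1}/a_n=(n+1)/(2(2n+3))$ this reduces to $a_n<((n+1)/(2(2n+3)))^{n}$, which I would prove by induction: the base case is $\tfrac16<\tfrac15$, and the inductive step uses $a_{n+1}=a_n\cdot(n+1)/(2(2n+3))$ together with the hypothesis and the elementary inequality $(n+1)(2n+5)<(n+2)(2n+3)$. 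Finally, sandwiching $a_n$ between $(2n+1)^{-1}4^{-n}$ and $4^{-n}$ (from $4^{n}\leq(2n+1)\binom{2n}{n}$ and $\binom{2n}{n}\leq 4^{n}$) gives $a_n^{1/n}\to\tfrac14$, hence $p^{*}_{2n}(1-p^{*}_{2n})\to\tfrac14$ and $p^{*}_{2n}\to\tfrac12$. Everything else is routine binomial bookkeeping.
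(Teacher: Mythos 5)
Your proof is correct and complete: the reduction $W(p,2n)=P(\mathrm{Bin}(2n+1,p)\geq n+1)-p$, the telescoping derivative $(2n+1)\binom{2n}{n}(pq)^n-1$, the quadratic solution for $p^*_{2n}$, the induction reducing monotonicity of $p^*_{2n}$ to $(n+1)(2n+5)<(n+2)(2n+3)$, and the limits all check out, and this is essentially the route the paper's (supplementary) proof must take given the closed forms in the statement. The martingale/Jensen packaging of (i)–(ii) — odd posteriors cannot cross $\tfrac12$ when one signal is appended, and strict convexity at the kink when two are — is a particularly clean way to get $W(p,2n)=W(p,2n+1)$ and strict monotonicity without binomial bookkeeping.
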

%\begin{proof}
%See Appendix \ref{pf:bayesian-expected-payoff}.
%\end{proof}

The property that there is no informational value of observing an additional signal holds regardless of risk attitude. As for observing more than one additional signal, one can show that the increment in expected utility must be proportional to $\sum_{k=n+1}^{2n}C_{2n}^k\left(p^kq^{2n+1-k}-p^{2n+1-k}q^k\right)$ regardless of risk attitude. Nevertheless, the assumption of risk neutrality is necessary for a closed-form characterization of willingness to pay because utility levels at four points are involved.\footnote{Utility levels at four points are the utility of payoff of one, the utility of payoff of zero, the utility of payoff of one subtracted by willingness to pay, and the utility of payoff of zero subtracted by willingness to pay. }

\section{Experimental design and hypotheses}
\label{sec:experimental-design}

We design an experiment to collect data on subjects' choices and their reservation prices for additional observations in both private information and social information settings. In each setting, we vary the signal quality and signal quantity parameters.  Specifically, signal quality ($p$) varies from 0.6 to 0.75 to 0.9, i.e., from low to medium to high accuracy. Signal quantity ($n$) takes a value of either 1 or 3 and determines whether one or three additional observations are provided. A total of twelve treatment conditions are considered, and we implement a within-subject design. An experimental session consists of sixty decision rounds, of which the first twelve correspond to the twelve treatment conditions, which are followed by four repetitions.\footnote{The display order of the twelve treatment conditions in our experiment is: (i) private information, $n = 1$, $p=0.6 \rightarrow 0.75 \rightarrow 0.9$; (ii) private information, $n = 3$, $p = 0.6 \rightarrow 0.75 \rightarrow 0.9$; (iii) social information, $n = 1$, $p = 0.6 \rightarrow 0.75 \rightarrow 0.9$; (iv) social information, $n = 3$, $p = 0.6 \rightarrow 0.75 \rightarrow 0.9$.}

In addition to learning from private information and learning from social information, another type of learning is inherent in laboratory experiments: participants' learning the experiment during the course of the experiment. Specifically, for a multi-round experiment, whether participants receive feedback about their performance at the end of each round may have an impact. To investigate the potential effect of participants' learning about the experiment through feedback, we implement a between-subject design in which a decision round consists of the following three stages in no-feedback sessions, and in feedback sessions there is also the fourth stage in which feedback is provided.

In stage 1, a computer randomly selects either of urn 1 and urn 2 for use in that decision round. Urn 1 of type $\frac{12}{20}$ contains twelve black balls and eight white balls, and urn 2 of this type contains twelve white balls and eight black balls. The compositions of type $\frac{15}{20}$ urns and type $\frac{18}{20}$ are similarly determined.
While the type of urns (i.e., $\frac{12}{20}$, $\frac{15}{20}$, or $\frac{18}{20}$) is revealed to all subjects, the label of the urn (i.e., urn 1 or urn 2) is not known. The computer independently and randomly draws one ball from the urn with replacement for each subject and informs her of the color of the drawn ball. On the basis of the color of the drawn ball, each subject is asked to make a first binary choice, i.e., guess which of the two urns is used.

In stage 2, each subject is endowed with 300 tokens and asked to state her reservation price for additional information by choosing a number from $\{0, 1, 2,...,299, 300 \}$. The chosen number, say $b$, refers to her willingness to pay $b$ tokens for additional information. Before stating her reservation price, the subject is informed of the composition of the additional information. Specifically, in the private information treatment condition with $n \in \{1,3\}$, the additional information includes the color(s) of $n$ ball(s) randomly drawn from the urn used in that round. In the social information treatment condition with $n \in \{1,3\}$, the additional information is the first choices of one or three other subjects.

In stage 3, the Becker-DeGroot-Marschak method is employed to determine whether subjects successfully purchase additional information. Specifically, the computer randomly and equally chooses an ask price from $\{0, 1, 2,...,299, 300 \}$, say $s$ tokens. If a subject's reported reservation price exceeds the ask price ($b \geq s$), the subject is charged $s$ tokens (collected from her endowment of 300 tokens) and is provided with the additional information. After the additional observations, the subject is asked to make a second binary choice between urn 1 and urn 2. In this case, her second choice becomes her final choice in that round. If a subject's reported reservation price is below the ask price ($b < s$), the subject is neither charged nor provided with additional information. In this case, her first choice is also her final choice in that round. In feedback sessions, a fourth stage is implemented after stage 3, in which subjects are told of the urn that is actually used in that round.

Subjects are paid according to their final choices in a randomly selected decision round. Specifically, the computer randomly selects one of the sixty decision rounds to serve as the paid round. We say that a subject makes a correct choice if her final choice in the paid round is the same as the urn that is actually used in that round and otherwise makes an incorrect choice. A subject earns 300 tokens for a correct choice and zero tokens for an incorrect choice. In addition, she retains her endowment of 300 tokens if she does not purchase additional information in the paid round. If she receives additional information at a cost of $s$ tokens in the paid round, she retains $300-s$ tokens of the endowment. Finally, the total number of tokens earned is redeemed for Chinese yuan at an exchange rate of $\frac{1}{10}$.

According to the Bayesian benchmark model in Section \ref{sec:bayesian-benchmark}, we have the following two experimental hypotheses.

\textbf{Hypothesis 1 } The first choice in a decision round is urn 1 (urn 2) if a black (white) ball is observed. The second choice (if any) is urn 1 if the majority of signals are in favor of urn 1 and urn 2 if the majority of signals are in favor of urn 2.\footnote{Let $B$ and $W$ refer to observing a black ball and a white ball, respectively, and let $C_1$ and $C_2$ refer to a choice of urn 1 and a choice of urn 2 in stage 1, respectively. Signals in favor of choosing urn 1 for the second time include $\BB$, $B\one$, $4B$, and $3B1W$ in the private information setting and $1B3\one$, $1B2\one1C_2$, and $1W3C_1$ in the social information setting. Signals in favor of choosing urn 2 for the second time include $\WW$, $WC_2$, $4W$, $3W1B$ in the private information setting and $1W3C_2$, $1W2C_21C_1$, and $1B3C_2$ in the social information setting.\label{signalmajority}}

\textbf{Hypothesis 2 } Participants' reservation prices for observing additional signals are identical in the private information and social information treatment conditions for any given signal quality and signal quantity. The reservation price in each treatment condition is shown in Table \ref{hypothesis}.

\begin{table}[htb]
\centering
\caption{The same value of private information and social information (unit: tokens)}
\label{hypothesis}
%\resizebox{\textwidth}{!}{
%\resizebox{.5\hsize}{!}{
\begin{tabular}{cccc}
\toprule
%observations
& $p=0.6$ & $p=0.75$ & $p=0.9$\\
\midrule
$n=1$ &    0                &   0      &  0  \\
$n=3$ &     14.4               &   28.1        &  21.6  \\
\bottomrule
\end{tabular}
%}
\end{table}

We recruited 100 undergraduate students from Xiamen University as participants by sending email invitations to a subset of candidate subjects registered in our subject pool. The study was conducted at the Finance and Economics Experimental Laboratory (FEEL) of Xiamen University in 2014. We ran five sessions with twenty subjects in each session, including three no-feedback sessions and two feedback sessions. Before the decision task, subjects received a copy of the experimental instructions and we read the experimental instructions aloud and answered any questions from the subjects (See the supplementary Appendix C for the Experimental Instructions). Each session of the study lasted approximately 50 minutes, and subjects earned an average of approximately 50 Chinese yuan (approximately 8 US dollars) and received their payments in private.

\section{Experimental Results}
\label{sec:experimental-results}

An observation of the experiment includes the parameters treatment condition (i.e., private or social information; $p=0.6$, $0.75$, or $0.9$; $n=1$ or $3$), a dummy variable indicating whether feedback was provided, the order of the decision round, the subject's experimental ID number, the label of the urn that was used in the decision round, signals that the subject observed, the subject's first choice and second choice, if any, and the subject's reported reservation price for observing additional signals. We have a total of 6000 observations in our data sample, and subjects observed the additional signals and made a second choice in 1230 observations.

Overall, in subsections \ref{subsec:experimental-result1} and \ref{subsec:experimental-result23} we present three main findings from the choice data and reservation price data at the aggregate level: (1) subjects value social information less than private information; (2) the frequency of subjects' making a Bayesian first choice increases as the signal quality increases; and (3) subjects report considerably higher reservation prices than predicted by the Bayesian benchmark model, specifically, there is positive informational value of observing one additional signal. Notably, data from the feedback sessions convey a similar message in terms of all three aspects as data from the no-feedback sessions, but the strength of the message from the former is slightly weaker. We also illustrate subjects' heterogeneity in terms of different valuations of the two types of information in subsection \ref{sec:individual-level-analysis}.

\subsection{Different Valuations of Private Information and Social Information}
\label{subsec:experimental-result1}
We find both direct and indirect evidence that subjects treat private information and social information differently, which refutes experimental hypothesis 2.

 Table \ref{meanbid} shows that subjects' average reservation prices are consistently much higher than the theoretical values based on the Bayesian benchmark model; more importantly, social information has a lower value than private information. We run the two-group mean test under the null hypothesis that reservation prices are identical in both the private and the social information settings with the alternative hypothesis that the average reservation price in the social information setting is lower than that in the private information setting. The results show that the null hypothesis is rejected in all conditions of the no-feedback sessions and is rejected in all but one condition of the feedback sessions.\footnote{An analysis of later rounds of a session, i.e., rounds 37-60, produces similar results.}

\begin{table}[htbp]
\centering
\caption{Average reservation prices in the private and social information settings}
\label{meanbid}
\resizebox{.7\hsize}{!}{
\begin{tabular}{ccccc}
\toprule
 & & $p=0.6$ & $p=0.75$ & $p=0.9$\\
\midrule
%                                      &                    &         & \\
 \multirow{3}*{no feedback, $n=1$ }     & private   & 49 (3.3)     & 51 (3.2)   & 66 (4.6) \\
							  & social  & 40 (2.9)     & 42 (3.0)  & 51 (4.1) \\
							 &   & $0.0186$  & $0.0250$ & $0.0069$ \\[1.5ex]
%		&                    &                 &\\ 		
 \multirow{3}*{feedback, $n=1$ }  & private & 54 (7.1)     & 37 (4.0)   & 28 (3.4) \\												
				                 & social   & 44 (5.8)     & 34 (4.2)  &23 (2.4) \\
					& 	& 0.2283  & $0.0388$ & $0.0254$  \\
%                        &                    &                 &\\
\midrule
%                                      &                    &     & \\
\multirow{3}*{no feedback, $n=3$ }  &private     & 94 (4.5)      & 97 (4.8) & 98 (5.5)  \\
						&social	& 74 (4.4)     &  76 (4.4) &  79 (5.4) \\
						&	& $0.0006$ &  $0.0005$ & $0.0087$ \\[1.5ex]
%							&                    &                & \\
\multirow{3}*{feedback, $n=3$}     & private  & 83 (8.1)      & 66 (5.5) & 35 (3.6)  \\
						 & social  & 62 (5.7)     &  53 (5.0) &  32 (3.1) \\
						 &  & $0.0017$ & $0.0022$ & $0.0426$ \\
%						&                    &                & \\
\bottomrule
\end{tabular}
}
\medskip
\begin{minipage}{\linewidth}\small
Notes: See standard errors in the parentheses and $p$-values of one-tailed tests of the corresponding mean comparison in the corresponding third row.
\end{minipage}
\end{table}

Figure \ref{fig:empirical-bid-cdf} further illustrates the difference in the empirical distributions of subjects' reservation prices between the private information setting and the social information setting. We run a Kolmogorov-Smirnov test of the equality of the distributions in the two settings for each of the six treatment conditions. The null hypothesis is rejected at the 5\% significance level in all six cases.\footnote{We also test the difference in the empirical distributions of the reported reservation prices between the two information settings for feedback session data and for no-feedback session data, separately. We run a Kolmogorov-Smirnov test of the equality of the distributions in the two settings for the two subsamples of data, and the null hypothesis is rejected at the 1\% significance level for each of the two subsamples. When we further divide each subsample of data into six treatment conditions and run a similar statistical test, we find that the null hypothesis is rejected at the 5\% significance level in each treatment condition of the subsample of no-feedback sessions and that the null hypothesis cannot be rejected at the 5\% significance level in each treatment condition of the subsample of feedback sessions. These results suggests that subjects may learn from feedback, which in turn has an impact on their valuation of the two types of information.
}

\begin{figure}[htbp]
	\centering
	\includegraphics[width=\textwidth, trim=0 10 0 20]{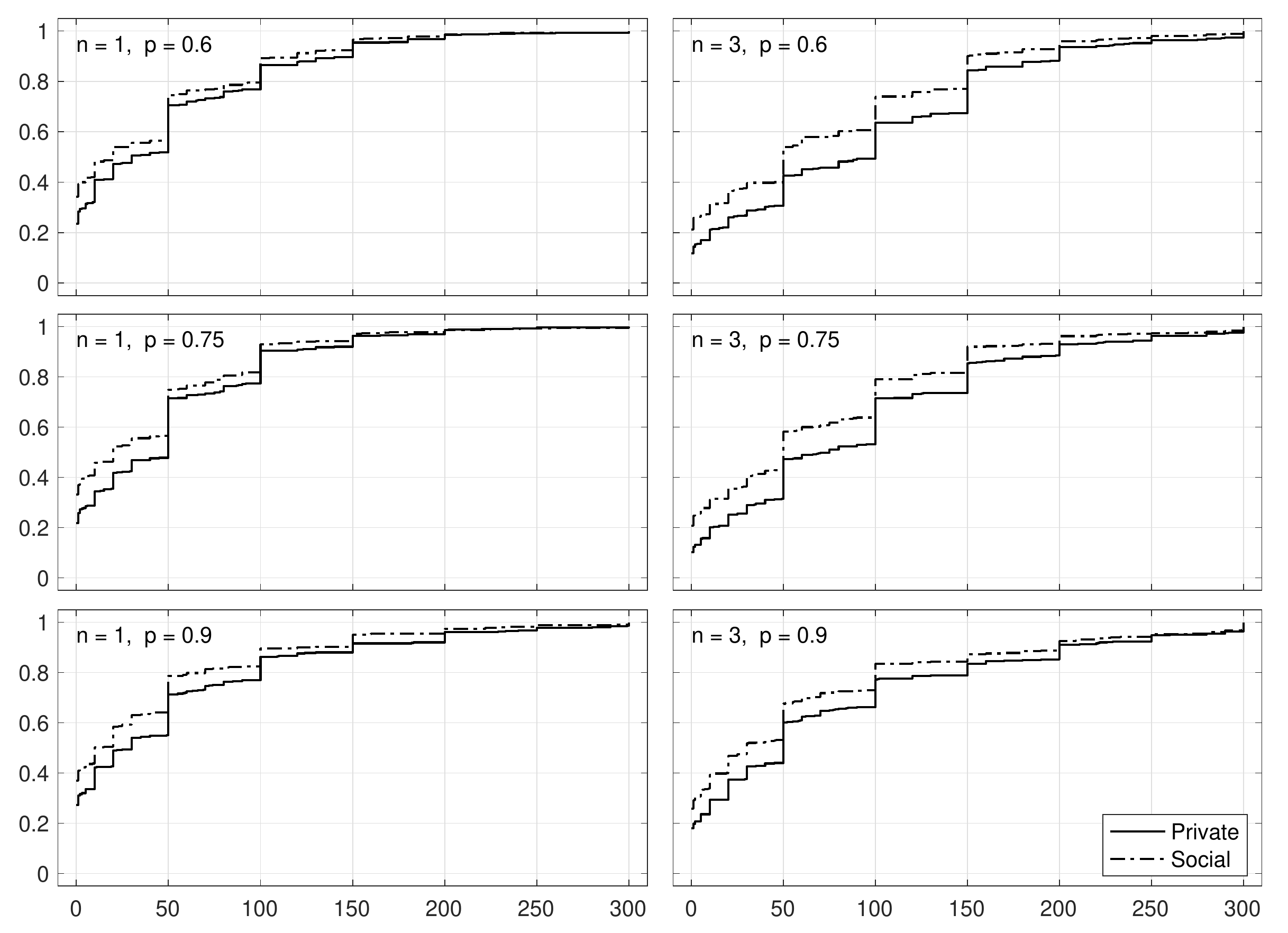} % memo28-07
	\caption{Empirical distribution of reservation prices}
	\label{fig:empirical-bid-cdf}
\end{figure}

In the subsample of observations where subjects were asked to make a second choice, subjects' final choices were made on the basis of the initial information (the color of the ball that was shown in stage 1) and the new information, which includes either the colors of the additional balls or other subjects' first choices. If the impact of the new information on the second choice differs depending on the type of information, then subjects have a different valuation of private information and social information. Our indirect evidence is based on the finding that the effect is stronger when the new information is private information than when it is social information.

Specifically, we investigate in the subsample the percentage of decision rounds in which a subject's second choice contradicts the color of the ball in stage 1, for example, when the subject's second choice was urn 2 after observing a black ball in stage 1 or her second choice was urn 1 after observing a white ball in stage 1. We may interpret that the larger the percentage is, the bigger is the impact the new information has on subjects' second choices, everything else being equal. We find that, overall, the percentage is 22\%  in the private information setting and  16\% in the social information setting. We run a two-group mean test under the null hypothesis that the percentage is the same in both settings and find that the hypothesis is rejected at the 10\% significance level.\footnote{The overall percentage ignores the details of different realizations of the initial information and the new information. To address this question, we introduce an explanatory variable $z$ that counts the number of signals in the new information in favor of the initial information. For example, the variable takes a value of 1 if the new information contains exactly one signal in favor of the initial information, a value of 2 if the new information contains two signals in favor of the initial information, and so on. When a tie occurs between signals in favor of either choice, i.e., $z=0$ when $n=1$ or $z=1$ when $n=3$, the percentage is higher in the private information setting than in the social information setting, indicating a stronger effect of private information.}

\subsection{Signal Quality and Bayesian Choice}
\label{subsec:experimental-result23}
Since urn 1 consistently contains more black balls than white balls and urn 2 consistently includes more white balls than black balls and because the prior belief about either urn is $\frac{1}{2}$, we say that a subject's first choice is a Bayesian choice if she chooses urn 1 after observing a black ball or chooses urn 2 after observing a white ball. Table \ref{bayesianchoice} reports the percentage of Bayesian choices in the first stage. While it is not surprising that the percentage is less than 100\%, i.e., $94\%$, it is interesting that the percentage increases as the signal quality increases.\footnote{An analysis of rounds 37-60 of the feedback sessions produces similar results.}\footnote{\citet{noth:03} extends \citet{anderson:97}'s sequential social learning experimental design by introducing two signal qualities, i.e., $p=0.6$ and $p=0.8$, and finds that the frequency of Bayesian choice in the first position is $85.8\%$ when $p=0.6$ and $97\%$ when $p=0.8$. Since subjects in their experiment always learned the true state at the end of each round, they argue that subjects' performances in the previous round(s), such as gambler's fallacy, can explain a portion of the non-Bayesian choices in the first position. By contrast, subjects in our experimental sessions with no feedback cannot learn their performance in the previous round(s), which precludes the above possible explanation.}\footnote{Similarly, we say that a subject's second choice (if any) is a Bayesian choice if she chooses the urn favored by the majority of signals, as specified in Footnote \ref{signalmajority}. Among the 636 observations in which the majority of signals are in favor of either urn 1 or urn 2, subjects made a Bayesian choice in 612 cases, which amounts to a Bayesian choice percentage of $96\%$.}

\begin{table}[htbp]
\centering
\caption{The effect of signal quality on Bayesian first choice}
\label{bayesianchoice}
%\resizebox{\textwidth}{!}{
\resizebox{.93\hsize}{!}{
\begin{adjustbox}{center}
\begin{tabular}{ccccccc}
\toprule
 & $p=0.6$ & $p=0.75$ & $p=0.9$ & $p=0.6$ vs. 0.75 & $p=0.75$ vs. 0.9 & $p=0.6$ vs. 0.9\\
\midrule
%                  &                    &         &  &  &   &\\
no feedback       & $87\%$     & $96\%$   & $97\%$  & $0.0000$  & $0.0380$ & $0.0000$ \\[1ex]
% 					 &                    &         &  &  &   &\\
%				 &                    &         &  &  &   &\\   \hline
%                 &                    &         &  &  &   &\\
feedback       & $90\%$      & $97\%$ & $99.5\%$  & $0.0000$ & $0.0001$ & $0.0000$\\
% &                    &         &  &  &   &\\
\bottomrule
\end{tabular}
\end{adjustbox}
}
%}

\medskip
\begin{minipage}{\linewidth}\small
Notes: The percentage refers to the extent to which the first choice is a Bayesian choice. The last three columns report $p$-values of two-tailed tests of the corresponding mean comparison.
\end{minipage}
%\label{firstguess}
\end{table}

\subsection{Individual-level analysis}
\label{sec:individual-level-analysis}

At the individual level, we find considerable heterogeneity across subjects. Figure \ref{fig:heterogeneity} illustrates each subject's average reservation price for additional signals in the private information and social information settings for each condition of signal quality and signal quantity. Some subjects value private information more, some subjects value social information more, and others value the two types of information more or less equally.

\begin{figure}[htbp]
	\centering
	\includegraphics[width=0.94\textwidth, trim=0 20 0 20]{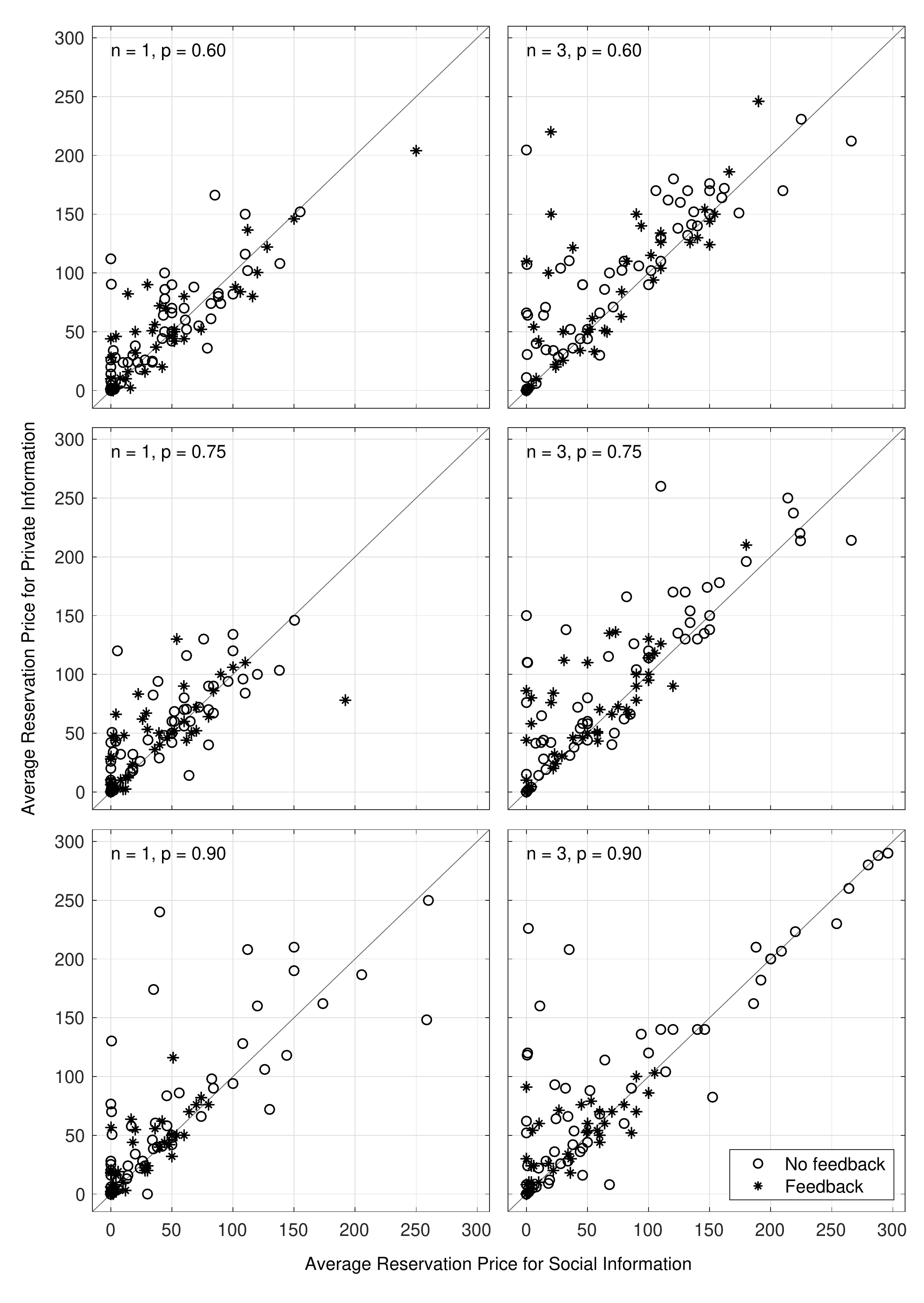} % memo26-13
	\caption{Individual average reservation prices for private and social information}
	\label{fig:heterogeneity}

    \medskip
    \begin{minipage}{\linewidth}\small
	Note: The diagonal line refers to the situation where reservation prices for the two types of information are equal.
    \end{minipage}
\end{figure}

\section{Belief Error Model and Structural Estimation}
\label{sec:belief-disturbance-model}

In this section, we provide a belief error foundation for participants' systematic non-Bayesian choice behavior and their different valuations of private information and social information. A structural estimation of the model is then employed to gain deeper insights into the experimental data.

We are aware that the toolbox includes the classical logit choice model, which has the potential to explain our experimental findings. Nevertheless, we argue that an application of such a model to our experimental setting is not appropriate for the following reasons. When applying the logit model to generate a stochastic binary choice, the literature assumes either that individuals follow a logistic response function to make stochastic choices or that individuals optimally make a deterministic choice given their knowledge but from the perspective of observers, they follow a logistic response function for their choice probabilities. In the former case, individuals do not choose optimally (e.g., \citealt{kubler:04}). The model predicts that subjects report a higher reservation price for additional signals when they do not make a Bayesian first choice than when they make a Bayesian first choice.\footnote{The intuition is that after not making a Bayesian first choice subjects realize that they made a choice mistake and that the expected payoff is lower. Henceforth, the benefit of observing additional signals larger given a fixed ex ante expected payoff of making a second choice.} This prediction is not consistent with our experimental finding that subjects' reported reservation prices in the two cases are not significantly different ($p$-value $= 0.1885$; two-group mean-comparison test). In the latter case, the logit model has a payoff disturbance foundation or a rational inattention foundation. The model with a payoff disturbance foundation assumes that individuals have payoff disturbances or preference shocks, of which observers are ignorant. An application of the payoff disturbance model (e.g., logit QRE, see \citealt{goeree:07}) represents a conceptual challenge because our experimental data include reservation price data in addition to choice data. Since a subject's reservation price in principle reflects the difference in her expected payoffs before and after observing additional signals, explaining the variation in reservation price by the assumption of payoff disturbance leads to concerns of circular reasoning.\footnote{We also employ the payoff disturbance model to structurally estimate the experimental data by pretending to ignore the conceptual concern and find that it performs worse than our method, as illustrated in Section \ref{estimationresult}.} According to the model with a rational inattention foundation, individuals are assumed to choose their information acquisition strategy optimally when they endogenously determine their way of acquiring information about actions' payoffs, and the information acquisition cost is assumed to take the form of the Shannon cost function (e.g., \citealt{matejka:15}). Under this assumption, individuals' posterior beliefs are optimally chosen, and action choices based on these posterior beliefs follow a logistic response function form. Differently in our experimental setting, information acquisition must take the form of drawing independent signals with an exogenously given signal structure, and after receiving the signals, there is minimal room for unobservables that may justify individuals' choosing their information strategy. Thus, it is not compelling to justify the application of the logit model with a rational inattention foundation in our setting.

\subsection{A Basic Model with Belief Error}

The basic belief error model modifies the Bayesian benchmark model in the following three aspects.  First, subjects form random posterior beliefs with a Bayesian updating kernel and have a sophisticated consideration of their own random posterior beliefs. Second, subjects make a sophisticated consideration of others' random posterior beliefs when interpreting others' action decisions. Third, subjects also gain utility from the event of making a correct choice ex post in addition to the monetary payoffs realized from making a correct choice. Clearly, the generality and applicability of the three assumptions is ranked in descending order.

\subsubsection{Random posterior belief with a Bayesian kernel}
Subjects sometimes systematically deviate from Bayesian updating even in many simple settings of individual decision making and often exhibit behavioral biases (e.g., \citealt{tversky:74}; \citealt{holt:09}). Recently, \citet{defilippis:17} collect subjects' belief data and find that while a high percentage of beliefs are in line with Bayesian updating, a considerable percentage are smaller or greater than the Bayesian ones. In addition, a small proportion of beliefs are even in the opposite direction specified by Bayesian updating.

We model the variation from Bayes' law by assuming that when updating beliefs from prior beliefs and observations, subjects form a random posterior belief dependent on  the Bayesian posterior probability, i.e., the posterior probability calculated according to Bayes' law. Specifically, let $p_x$ be the Bayesian posterior probability of urn 1 given a signal set $x$. Subject $i$'s posterior probability of urn 1 takes the form of a random variable $\tilde{p}_x^i$; correspondingly, her posterior probability of urn 2 is $1- \tilde{p}_x^i$. We assume that the value of the random posterior probability is realized only after observing signal set $x$ and that it is privately known to subject $i$. We also assume that sophisticated subjects
take the randomness of posterior belief into account when extrapolating posterior beliefs in any future information set. We may infer that the randomness comes from belief errors when the subject forms her posterior probability.\footnote{ The form of $\tilde{p}_x^i$ implicitly assumes that in an information set including the first signal and additional signals, the subject forms her posterior belief by taking a new draw of noise in her belief without taking into account the old draw of noise in her belief after observing the first signal. With this simplification, the potential effect of the realized noise in belief after observing the first signal on the noise in belief after observing the first signal and additional signals is ignored.}

Since the posterior probability $\tilde{p}_x^i$ is a random variable that takes a value from 0 to 1, an ideal probability distribution of the random variable should have support on the interval $[0,1]$. Among common continuous distribution functions with support on a bounded interval, we find that the beta distribution is particularly appealing since it can successfully capture a few intuitions underlying the belief error model and generate results with closed-form expressions.\footnote{Alternatively, one may define subject $i$'s posterior belief to be the Bayesian posterior probability plus an error term, i.e., $\tilde{p}_x^i =p_x + \epsilon _x^i$. This definition will inevitably result in a truncated distribution if we assume that $\epsilon _x^i$ follows a distribution with unbounded support such as normal distribution. While the truncation problem could be resolved by applying the logit transformation twice, e.g., $\ln \frac{\tilde{p}^i_x}{1-\tilde{p}_x^i} = \ln \frac{p_x}{1-p_x} + \epsilon_x^i$, the appealing feature of generating closed-form predictions will be lost. \label{separableerror}}

\begin{ass}
	\label{ass:distribution-p}
	Subject $i$'s posterior belief $\tilde{p}_x^i$ follows a beta distribution with two parameters $\gamma^i \ge 0$ and $p_x \in [0,1]$. Specifically,
	\begin{equation*}
	\tilde{p}_x^i \sim \operatorname{Beta}\left(\frac{p_x}{\gamma^i}, \frac{1-p_x}{\gamma^i}\right) %\quad \gamma^i \geq 0,
	\end{equation*}
	when $\gamma^i >0$ and $p_x \in (0,1)$. $\tilde{p}_x^{i} =p_x$ almost surely when $\gamma^i =0$, $\tilde{p}_x^{i} =0$ almost surely when $ p_x =0$, and $\tilde{p}_x^{i} =1$ almost surely when $p_x =1$.
\end{ass}

Assumption \ref{ass:distribution-p} has a few desirable features. First, it guarantees that $\tilde{p}_x^i$, as a probability, is always bounded by $0 \leq \tilde{p}_x^i \leq 1$ since the support of the beta distribution is $[0,1]$. Second, the expectation $\mathbb{E}(\tilde{p}_x^i) = p_x$, which suggests that the subject's belief is on average ``right'' in the sense that the random posterior belief overall has a Bayesian posterior belief kernel. Third, the variance $\var(\tilde{p}_x^i) = \frac{\gamma^i}{1+\gamma^i} p_x(1-p_x) $ shrinks as the parameter $\gamma^i$ decreases and converges to zero as $\gamma^i$ approaches zero. Therefore, the parameter $\gamma^i$ can naturally be interpreted as the degree of subject $i$'s belief error: the smaller the parameter is, the smaller her belief error is. $\gamma^i = 0$ corresponds to a special case in which the distribution of $\tilde{p}_x^i$ is degenerate, or equivalently, $\tilde{p}_x^i = p_x$ almost surely. Finally,  for any fixed $\gamma^i$, the variance is proportional to $p_x (1-p_x)$, which has the highest value when $p_x =\frac{1}{2}$. The variance is also symmetric around $\frac{1}{2}$ and decreases when $p_x$ changes in the direction of 0 or 1. In the boundary cases of $p_x = 0$ or $1$, $\tilde{p}_x^i = 0$ almost surely or $\tilde{p}_x^i = 1$ almost surely. This feature captures the intuition that a subject's belief error shrinks as the corresponding Bayesian posterior belief approaches the boundaries.

\citet{nyarko:06} use a beta distribution to characterize the posterior belief in a very different setting. In their setting, subjects are interested in forming a belief about the probability of a profitable state, and the prior belief is that the probability follows a beta distribution. After observing occurrences of profitable and unprofitable states, which follow a Bernoulli distribution, they apply Bayes' law to form a posterior belief about the probability, which must be a beta distribution because a beta prior distribution is a conjugate prior for the Bernoulli likelihood function. By contrast, subjects in our setting are interested in forming a belief about the underlying binary state, and the prior belief is a fixed number. After observing signals conditional on the underlying binary state, which follows a Bernoulli distribution, subjects' posterior beliefs are a fixed number if Bayes' law is applied. It is only in our non-Bayesian paradigm that subjects' posterior beliefs are assumed to be random and follow a beta distribution. In summary, our paper is essentially different from theirs in terms of the information structure, the contents of the prior and posterior beliefs, and the Bayesian/non-Bayesian paradigm.

When maximizing expected utility, subject $i$'s optimal choice strategy according to the belief error model is similar to that of the Bayesian benchmark model: choose urn 1 if $\tilde{p}_x^i > \frac{1}{2}$, choose urn 2 if $\tilde{p}_x^i < \frac{1}{2}$, and be indifferent between the two options if $\tilde{p}_x^i = \frac{1}{2}$. Let $P(\text{correct choice}_i \mid x,\tilde{p}_x^i)$ be the (subjective) probability of making a correct choice after subject $i$ observes signals $x$ and forms a realized value of her posterior belief $\tilde{p}_x^i$. The optimal choice strategy implies that $P(\text{correct choice}_i \mid x,\tilde{p}_x^i) = \max(\tilde{p}_x^i, 1-\tilde{p}_x^i)$. Then, the probability of making a correct choice for information set $x$ before the value of $\tilde{p}_x^i$ is realized is
$$P(\text{correct choice}_i \mid x) = \mathbb{E}P(\text{correct choice}_i \mid x,\tilde{p}_x^i) = \mathbb{E}\max(\tilde{p}_x^i, 1-\tilde{p}_x^i) ,$$
which reflects the subject's assessment of her chance of making an ex post correct choice for a hypothetical information set $x$. For comparison, the corresponding assessment according to the Bayesian benchmark model is $\max(p_x, 1-p_x)$. We show that the two assessments prescribe the same ranking of information sets.

\begin{proposition}
\label{prop:subjective-assessment}
	Assume that subjects maximize expected utility. Under Assumption \ref{ass:distribution-p}, for signal sets $x$ and $y$,
	\begin{enumerate}[nosep, label=(\roman*)]
		\item if $\max(p_x, 1-p_x) = \max(p_y, 1-p_y)$, i.e., $p_x = p_y$ or $p_x=1-p_y$, then $P(\text{correct choice}_i \mid x) = P(\text{correct choice}_i \mid y)$;
		\item if $\max(p_x, 1-p_x) > \max(p_y, 1-p_y)$, then $P(\text{correct choice}_i \mid x) > P(\text{correct choice}_i \mid y)$.
	\end{enumerate}
\end{proposition}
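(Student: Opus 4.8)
The plan is to reduce everything to a one–dimensional monotonicity fact about the beta family. First dispose of the degenerate cases: if $\gamma^i=0$ or $p_x\in\{0,1\}$ then $\tilde p^i_x$ is almost surely constant, so $P(\text{correct choice}_i\mid x)=\max(p_x,1-p_x)$ and the claim is immediate; assume henceforth $\gamma^i>0$ and $p_x,p_y\in(0,1)$, and put $s:=1/\gamma^i$, so that $\tilde p^i_x\sim\operatorname{Beta}(p_xs,(1-p_x)s)$ with the two shape parameters summing to the \emph{fixed} value $s$. Since $Z\sim\operatorname{Beta}(\alpha,\beta)$ implies $1-Z\sim\operatorname{Beta}(\beta,\alpha)$, the laws of $\tilde p^i_x$ under $p_x$ and under $1-p_x$ are reflections of each other, and $\max(t,1-t)$ is reflection–invariant; hence $P(\text{correct choice}_i\mid x)$ depends on $p_x$ only through $a:=\max(p_x,1-p_x)\in[\tfrac12,1]$. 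This already gives (i), and reduces (ii) to showing that
\[
g(a):=\mathbb{E}\max(\tilde p_a,1-\tilde p_a)=\tfrac12+\mathbb{E}\bigl|\tilde p_a-\tfrac12\bigr|,\qquad \tilde p_a\sim\operatorname{Beta}(as,(1-a)s),
\]
is strictly increasing on $[\tfrac12,1]$.

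To prove the monotonicity I would differentiate under the integral sign (justified routinely, the integrand being bounded and the beta density smooth in $a$). Writing $f_a$ for the density, one has $\partial_a\log f_a(t)=s\bigl[\log\tfrac{t}{1-t}-\bigl(\psi(as)-\psi((1-a)s)\bigr)\bigr]$, and the standard digamma identity gives $\mathbb{E}\log\tfrac{\tilde p_a}{1-\tilde p_a}=\psi(as)-\psi((1-a)s)$, so the constant subtracted is exactly the mean of $\log\tfrac{\tilde p_a}{1-\tilde p_a}$. Hence
\[
g'(a)=s\,\operatorname{Cov}\!\Bigl(\max(\tilde p_a,1-\tilde p_a),\ \log\tfrac{\tilde p_a}{1-\tilde p_a}\Bigr),
\]
and the whole proposition comes down to signing this covariance. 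The obstacle is that $t\mapsto\max(t,1-t)$ is \emph{not} monotone, so although the $\tilde p_a$ are monotone-likelihood-ratio ordered in $a$ (which would give first-order stochastic dominance), neither FOSD nor the MLR order settles the sign directly.

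The device I would use is to pass to $U:=\log\tfrac{\tilde p_a}{1-\tilde p_a}$, whose density is proportional to $e^{asu}(1+e^u)^{-s}$. In these coordinates $\max(\tilde p_a,1-\tilde p_a)=\tfrac12\bigl(1+\tanh\tfrac{|U|}{2}\bigr)=:h(|U|)$ with $h$ strictly increasing, while a one–line computation with the density yields $f_U(-v)/f_U(v)=e^{-(2a-1)sv}$, whence
\[
\mathbb{E}\bigl[U\mid|U|=v\bigr]=v\tanh\!\Bigl(\tfrac{(2a-1)sv}{2}\Bigr)=:\mu(v),
\]
which for $a\geq\tfrac12$ is nonnegative and --- strictly, when $a>\tfrac12$ --- increasing on $[0,\infty)$. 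Conditioning on $|U|$ then collapses the problematic covariance to $\operatorname{Cov}\bigl(h(|U|),\mu(|U|)\bigr)$, a covariance of two nondecreasing functions of the single random variable $|U|$; this is $\geq 0$ by the Chebyshev/FKG correlation inequality and strictly positive when $a>\tfrac12$, since $h$ and $\mu$ are then both strictly increasing and $|U|$ is non-degenerate. Therefore $g'(a)>0$ on $(\tfrac12,1)$, so $g$ is strictly increasing on $[\tfrac12,1]$, which with (i) yields (ii). I expect this last manoeuvre --- trading a covariance against the non-monotone statistic $\max(\tilde p_a,1-\tilde p_a)$ for one against the monotone statistic $\mu(|U|)$ via conditioning on $|U|$ --- to be the only genuinely non-routine step.
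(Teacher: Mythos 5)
Your proof is correct, and it takes a genuinely different route from the paper's. The paper establishes the stronger statement that $\max(\tilde p_x,1-\tilde p_x)$ first-order stochastically dominates $\max(\tilde p_y,1-\tilde p_y)$ when $\max(p_x,1-p_x)>\max(p_y,1-p_y)$: it differentiates the incomplete beta function $\ibeta_u\bigl(\tfrac{p}{\gamma},\tfrac{1-p}{\gamma}\bigr)-\ibeta_{1-u}\bigl(\tfrac{p}{\gamma},\tfrac{1-p}{\gamma}\bigr)$ in $p$, reduces the sign question to a quantity $A(u)$ that vanishes at $u=1$, and shows $\partial A/\partial u<0$ by observing that $g(t)=\frac{f_T(t)-f_T(1-t)}{f_T(t)+f_T(1-t)}\ln\frac{t}{1-t}$ exceeds its own weighted average over $[\tfrac12,u]$. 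You instead differentiate the expectation itself, obtain the covariance representation $g'(a)=s\,\operatorname{Cov}\bigl(\max(\tilde p_a,1-\tilde p_a),\log\tfrac{\tilde p_a}{1-\tilde p_a}\bigr)$ via the score function and the digamma identity, and sign it by conditioning on $|U|$ so that the non-monotone statistic is traded for the monotone $\mu(|U|)=|U|\tanh\bigl(\tfrac{(2a-1)s|U|}{2}\bigr)$, after which the Chebyshev correlation inequality finishes. The computations I checked ($\partial_a\log f_a$, $f_U(-v)/f_U(v)=e^{-(2a-1)sv}$, and $\max(\tilde p_a,1-\tilde p_a)=h(|U|)$) are all right, and your handling of the degenerate cases and of part (i) matches the paper's. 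It is worth noting that both arguments ultimately rest on the same correlation inequality for increasing functions --- the paper's ``$g(u)$ exceeds its weighted average'' step is exactly that --- but the paper's version buys the full stochastic-dominance conclusion (useful if one wants $\mathbb{E}\,\phi$ to be ordered for every increasing $\phi$), whereas yours is more economical for the expectation comparison actually asserted and isolates the one non-routine idea (the passage to $|U|$) cleanly. The only point to dot is the routine domination argument for differentiating under the integral sign near the endpoints, where $\log\tfrac{t}{1-t}$ is unbounded but integrable against the beta density; you flag this and it is at the same level of rigor as the paper's own manipulations.
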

\begin{proof}
See Appendix \ref{pf:subjective-assessment}.
\end{proof}

Since the assessment depends on $p_x$, we denote $v(p_x) \equiv P(\text{correct choice}_i \mid x) $ after skipping the index of subject $i$ and belief error measure $\gamma$. The second property of Proposition \ref{prop:subjective-assessment} implies that when $p_x > \frac{1}{2}$, the assessment according to the belief error model is increasing in $p_x$. The monotonicity of the assessment is trivially satisfied in the Bayesian benchmark model because $\max(p_x, 1-p_x) =p_x$ when $p_x > \frac{1}{2}$. Another interesting observation is that when $p_x > \frac{1}{2}$, the assessment according to the belief error model increases at an increasing rate, i.e., $\partial ^2 v(p_x)/\partial p_x^2 > 0$, whereas the assessment according to the Bayesian benchmark increases at a linear rate. We do not have a formal proof of the convexity of $v(p_x)$ due to the intricate interaction between the function form and the beta distribution, but our simulation exercise suggests that $\partial ^2 v(p_x)/\partial p_x^2 > 0$ for a wide range of values of belief error measure $\gamma$.

 It is straightforward to check that a subject who chooses optimally according to the belief error model also makes a Bayesian choice if and only if
$\left(\tilde{p}_x^i-\frac{1}{2}\right)\left(p_x -\frac{1}{2}\right)>0$.\footnote{We skip the tie case $\tilde{p}_x^i =\frac{1}{2}$ since belief error is continuous according to Assumption \ref{ass:distribution-p}.}
So the chance of subject $i$'s making a Bayesian choice is $P\left(\left(\tilde{p}_x^i-\frac{1}{2}\right)\left(p_x -\frac{1}{2}\right)>0\right)$. Proposition \ref{prop:bayesian-monotone} below shows that the chance is increasing in $p_x$ when $p_x > \frac{1}{2}$ and decreasing in $p_x$ when $p_x < \frac{1}{2}$. Since for a given prior belief, the Bayesian posterior probability $p_x$ is increasing in the signal quality $p$ when $p_x > \frac{1}{2}$ and decreasing in $p$ when $p_x < \frac{1}{2}$, the proposition implies that the chance of subjects' making a Bayesian choice increases as the signal quality increases, which is consistent with the experimental finding that the frequency of making a Bayesian first choice increases from $p=0.6$ to $0.75$ to $0.9$. For comparison, the Bayesian benchmark model predicts that the chance of making a Bayesian first choice is always 100\%, regardless of the signal quality $p$.

\begin{proposition}
\label{prop:bayesian-monotone}
	Assume that subjects maximize expected utility. Under Assumption \ref{ass:distribution-p}, for signal sets $x$ and $y$, if $\left|p_x -\frac{1}{2}\right| > \left|p_y - \frac{1}{2}\right|$, then \[ P\left(\left(\tilde{p}_x^i-\frac{1}{2}\right)\left(p_x -\frac{1}{2}\right)>0\right) > P\left(\left(\tilde{p}_y^i-\frac{1}{2}\right)\left(p_y -\frac{1}{2}\right)>0\right). \]
\end{proposition}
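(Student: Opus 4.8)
The plan is to collapse the statement into a single one-dimensional monotonicity fact about the beta family and then settle that fact with a likelihood-ratio argument. Fix the subject, write $\gamma$ for $\gamma^i$, and take $\gamma>0$ (when $\gamma=0$ we have $\tilde p_x^i=p_x$ almost surely, so the ``Bayesian-choice'' probability is $1$ whenever $p_x\neq\tfrac12$, and the comparison is immediate once $p_y=\tfrac12$, the remaining subcase being the degenerate Bayesian one). For $p\in(0,1)$ let $Z_p\sim\operatorname{Beta}\!\big(p/\gamma,(1-p)/\gamma\big)$, extend by $Z_0\equiv0$, $Z_1\equiv1$, and set $\phi(p):=P(Z_p>\tfrac12)$. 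Using $1-Z_p\sim\operatorname{Beta}\!\big((1-p)/\gamma,p/\gamma\big)$ one checks that
\[
P\!\left(\left(\tilde p_x^i-\tfrac12\right)\left(p_x-\tfrac12\right)>0\right)=
\begin{cases}
\phi\!\big(\tfrac12+|p_x-\tfrac12|\big), & p_x\neq\tfrac12,\\[1mm]
0, & p_x=\tfrac12,
\end{cases}
\]
where for $p_x\in\{0,1\}$ this reads $\phi(1)=1$ (consistent with $\tilde p_x^i=p_x$ a.s.), and likewise for $y$. Since $|p_x-\tfrac12|>|p_y-\tfrac12|\geq0$ forces $p_x\neq\tfrac12$ and $\tfrac12+|p_x-\tfrac12|>\tfrac12+|p_y-\tfrac12|$, and since $\phi>0$ on $(\tfrac12,1]$, the proposition follows once we show that $\phi$ is strictly increasing on $[\tfrac12,1]$.

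For that, the key observation is that the family $\{Z_p\}_{p\in(0,1)}$ is strictly ordered by the monotone likelihood ratio property in $p$: the density of $Z_p$ is proportional to $t^{p/\gamma-1}(1-t)^{(1-p)/\gamma-1}$, so for $p'>p$ the ratio of densities is $f_{p'}(t)/f_{p}(t)=c_{p,p'}\,\big(t/(1-t)\big)^{(p'-p)/\gamma}$ with $c_{p,p'}>0$ a constant, which is strictly increasing on $(0,1)$. Hence $t\mapsto f_{p'}(t)-f_{p}(t)$ is single-crossing, changing sign exactly once from negative to positive; since $\int_0^1\big(f_{p'}-f_{p}\big)=0$ this yields $F_{p'}(t)-F_{p}(t)=\int_0^t\big(f_{p'}-f_{p}\big)<0$ for every $t\in(0,1)$, where $F_p$ denotes the c.d.f.\ of $Z_p$. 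In particular $\phi(p')=1-F_{p'}(\tfrac12)>1-F_{p}(\tfrac12)=\phi(p)$, so $\phi$ is strictly increasing on $(0,1)$; combined with $\phi(1)=1>\phi(p)$ for $p<1$, this gives strict monotonicity on $[\tfrac12,1]$ and completes the proof. (Equivalently one can invoke the standard fact that MLRP implies $\mathbb{E}[h(Z_p)]$ is nondecreasing in $p$ for every nondecreasing $h$, applied to $h=\mathbf{1}_{\{t>1/2\}}$, with strictness coming from mutual absolute continuity and a nonconstant likelihood ratio; or the classical strict monotonicity of the regularized incomplete beta function $I_{1/2}(a,b)$, decreasing in $a$ and increasing in $b$, with $a=p/\gamma$ and $b=(1-p)/\gamma$.)

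I expect the only genuinely delicate step to be obtaining the \emph{strict} inequality $\phi(p')>\phi(p)$ rather than merely $\phi(p')\geq\phi(p)$: MLRP delivers weak first-order stochastic dominance essentially for free, but pinning down strictness at the specific threshold $\tfrac12$ needs the single-crossing/zero-mean argument above (or the corresponding strict-monotonicity statement for $I_{1/2}$). The reduction in the first paragraph is routine once the beta symmetry is used, but it does require some care with the case split around $p=\tfrac12$ and the boundary values $p\in\{0,1\}$, where the ``Bayesian-choice'' probability is not captured by the generic formula.
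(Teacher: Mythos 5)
Your proof is correct, and the key step is argued by a genuinely different route than the paper's. The reduction via the symmetry $1-\tilde p_x^i\sim\operatorname{Beta}\bigl(\frac{1-p_x}{\gamma},\frac{p_x}{\gamma}\bigr)$ to a single claim --- that $\phi(p)=P(Z_p>\tfrac12)$ is strictly increasing for $p>\tfrac12$ --- uses the same symmetry the paper exploits, but you front-load it and thereby collapse the paper's concluding four-case analysis (the cases $p_x>p_y>\tfrac12$, $p_x<p_y<\tfrac12$, and the two mixed cases) into one line. The real divergence is in how the monotonicity is established: the paper differentiates the regularized incomplete beta function $\mathrm{I}_{1/2}\bigl(\frac{p_x}{\gamma},\frac{1-p_x}{\gamma}\bigr)$ with respect to $p_x$, which brings in digamma functions and a sign analysis of $\int_0^{1/2}f_T(t)\ln\frac{t}{1-t}\,\mathrm{d}t$, whereas you observe that the family $\{Z_p\}$ is strictly MLR-ordered in $p$ and extract strict first-order stochastic dominance from the single-crossing of $f_{p'}-f_p$ together with $\int_0^1(f_{p'}-f_p)=0$. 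Your route is more elementary (no special-function identities) and proves something strictly stronger --- $F_{p'}(t)<F_p(t)$ for \emph{every} $t\in(0,1)$, not just $t=\tfrac12$ --- which in principle could also streamline the stochastic-dominance step in the paper's proof of Proposition \ref{prop:subjective-assessment}. One small caveat: when $\gamma^i=0$ and $p_y\neq\tfrac12$ both probabilities equal $1$ and the strict inequality genuinely fails; you flag this degenerate subcase without resolving it, but the paper's own proof silently assumes $\gamma>0$, so this is not a gap relative to the published argument.
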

\begin{proof}
See Appendix \ref{pf:bayesian-monotone}.
\end{proof}

We now show that the belief error model predicts a positive informational value of observing one additional signal. The intuition proceeds as follows. A subject's posterior belief after observing the first private signal in the first stage may not be equal to the Bayesian posterior probability, and a sophisticated subject extrapolates that she will form a random posterior belief for each subsequent information set in later stages. For any subsequent information set in later stages, she will choose optimally contingent on the realization of the random posterior probability, which is a better strategy than always choosing a certain action for all subsequent information sets. Thus, the belief error in the late stage universally increases the value of observing one additional signal. However, the effect of the belief error in the first stage is ambiguous and essentially depends on the value of the realized posterior in the first stage. Specifically, when the realized posterior is close to $\frac{1}{2}$, i.e., she is not confident about the true state, the second effect is also positive and the overall effect of observing one additional signal is positive. When the realized posterior is close to 0 or 1, i.e., she is confident about the true state, the second effect is negative and even dominates the first effect, so the overall effect is negative.

Formally, let $V_{B+1}^{i,pri}$ be subject $i$'s expected payoff of observing one additional signal after having already observed a black ball in the first stage. We simplify the notation as $V_{B+1}^{pri}$ whenever it is clear.\footnote{The assumption that the first observed ball is black has no loss of generality, and we keep the simplified assumption henceforth. Subject $i$'s expected payoff after choosing optimally in the first stage is $\max(\tilde{p}_B^i, 1-\tilde{p}_B^i)$ and $V_{B+1}^{pri} = \tilde{p}_B^i \cdot [p \cdot \mathbb{E}\max(\tilde{p}_{BB}^i, 1-\tilde{p}_{BB}^i) + q \cdot \mathbb{E}\max(\tilde{p}_{BW}^i, 1-\tilde{p}_{BW}^i)]
+ (1-\tilde{p}_B^i)\cdot
[p\cdot \mathbb{E}\max(\tilde{p}_{BW}^i, 1-\tilde{p}_{BW}^i) + q \cdot \mathbb{E}\max(\tilde{p}_{BB}^i, 1-\tilde{p}_{BB}^i)].$}

%Let $\tilde{p}_B^i$ be subject $i$'s realized posterior belief about urn 1 after observing a black ball in the first stage. Then her expected payoff after choosing optimally in the first stage is $\max(\tilde{p}_B^i, 1-\tilde{p}_B^i)$ and
%\begin{multline*}
%V_{B+1}^{s} = \tilde{p}_B^i \left[p \cdot \mathbb{E}\max(\tilde{p}_{BB}^i, 1-\tilde{p}_{BB}^i) + q \cdot \mathbb{E}\max(\tilde{p}_{BW}^i, 1-\tilde{p}_{BW}^i) \right] \\
%+ (1-\tilde{p}_B^i)\left[p\cdot \mathbb{E}\max(\tilde{p}_{BW}^i, 1-\tilde{p}_{BW}^i) + q \cdot \mathbb{E}\max(\tilde{p}_{BB}^i, 1-\tilde{p}_{BB}^i)\right].
%\end{multline*}

\begin{proposition}
\label{pf:positive-information-value-predict}
Assume that sophisticated subjects with belief error maximize expected payoff. Then under Assumption \ref{ass:distribution-p},
\begin{align*}
    V_{B+1}^{pri} - \max(\tilde{p}_B^i, 1-\tilde{p}_B^i)
    & \begin{cases} >0 & \text{if } \tilde{p}_B^i \in (\ubar{p},\bar{p}),\\
    \leq 0 &  \text{if } \tilde{p}_B^i \leq \ubar{p}   \text{ or } \tilde{p}_B^i \geq \bar{p},
    \end{cases}
\end{align*}
where $\ubar{p}, \bar{p}$ satisfy that $\ubar{p} < \frac{1}{2}<p_B< \bar{p}$ and $\bar{p}$ is increasing in $p_B$.
\end{proposition}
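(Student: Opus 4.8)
The plan is to express the entire comparison as the sign of a single piecewise‑linear function of the realized first‑stage posterior $t:=\tilde p_B^i$. First I would record the Bayesian posteriors feeding the footnote formula for $V_{B+1}^{pri}$: with a uniform prior and the symmetric signal structure, $p_B=p$, $p_{BW}=\tfrac12$, and $p_{BB}=p^2/(p^2+q^2)>\tfrac12$. Writing $A:=v(p_{BB})$ and $C:=v(\tfrac12)$ for the subjective assessments of Proposition \ref{prop:subjective-assessment}, and using $P(B\mid t)+P(W\mid t)=1$ with $P(B\mid t)=tp+(1-t)q=q+t(p-q)$, the footnote expression collapses to
\[ V_{B+1}^{pri}=C+(A-C)\bigl(q+t(p-q)\bigr), \]
so that $g(t):=V_{B+1}^{pri}-\max(t,1-t)$ is affine on $[0,\tfrac12]$ with slope $(A-C)(p-q)+1$ and affine on $[\tfrac12,1]$ with slope $(A-C)(p-q)-1$. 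Jensen's inequality applied to the convex map $u\mapsto\max(u,1-u)$ gives $v(p_x)\geq\max(p_x,1-p_x)$, hence $v$ takes values in $[\tfrac12,1]$, so $A>C\geq\tfrac12$ (the first inequality by Proposition \ref{prop:subjective-assessment}(ii) since $p_{BB}>\tfrac12$), $A-C\leq\tfrac12$, and $(A-C)(p-q)<1$. Thus the first slope is positive and the second is negative: $g$ is a concave ``tent'', strictly increasing up to $t=\tfrac12$ and strictly decreasing afterwards.

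Next I would locate the sign of $g$ by three evaluations: $g(0)=Cp+Aq-1<0$ and $g(1)=Cq+Ap-1<0$ (both because $A,C<1$), while $g(\tfrac12)=\tfrac12(A+C-1)>0$ (since $A>\tfrac12$). Concavity together with these values yields unique roots $\ubar p\in(0,\tfrac12)$ and $\bar p\in(\tfrac12,1)$ with $g>0$ on $(\ubar p,\bar p)$ and $g\leq0$ on the complement --- which is exactly the dichotomy in the statement. To place $p_B=p$ strictly inside $(\ubar p,\bar p)$ I would compute $g(p)=2pq\,C+(p^2+q^2)\,A-p$ and compare with the Bayesian benchmark: substituting the degenerate values $C=\tfrac12$, $A=p_{BB}=p^2/(p^2+q^2)$ makes this vanish --- this is precisely the identity $W(p,1)=0$ of Proposition \ref{prop:bayesian-expected-payoff} --- and since $A=v(p_{BB})>p_{BB}$ and $C=v(\tfrac12)>\tfrac12$ (Jensen is strict because the beta law has full support on $(0,1)$), we get $g(p)>0$. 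Hence $\ubar p<\tfrac12<p_B<\bar p$.

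Finally, for ``$\bar p$ increasing in $p_B$'', I would treat $g=g(t;p)$ as depending on $p$ as well; on $(\tfrac12,1)$, $\bar p(p)$ is the unique root of the affine‑in‑$t$, strictly decreasing function $g(\cdot;p)$. For fixed $t\in(\tfrac12,1)$, write $g(t;p)=C-t+\bigl(A(p)-C\bigr)\cdot\bigl[(1-p)+t(2p-1)\bigr]$. The bracket equals $P(B\mid t)>0$ and, since $2t-1>0$, is strictly increasing in $p$; and $A(p)-C>0$ is strictly increasing in $p$ because $p_{BB}(p)=p^2/(p^2+q^2)$ is strictly increasing (an elementary derivative, $dp_{BB}/dp=2pq/(p^2+q^2)^2>0$) and $v$ is strictly increasing on $(\tfrac12,1)$ by Proposition \ref{prop:subjective-assessment}(ii). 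A product of two positive strictly increasing functions is strictly increasing, so $g(t;p)$ is strictly increasing in $p$ at every fixed $t\in(\tfrac12,1)$. Consequently, if $p'>p$ then $g(\bar p(p);p')>g(\bar p(p);p)=0$, and since $g(\cdot;p')$ is decreasing with root $\bar p(p')$, this forces $\bar p(p')>\bar p(p)$.

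The step I expect to require genuine care is the comparative statics for $\bar p$: the key realization is that the entire $p$‑dependence of $g(t;\cdot)$ factors through the single product $\bigl(A(p)-C\bigr)\cdot P(B\mid t)$ of two monotone quantities, after which ``$g$ pointwise increasing in $p$ $\Rightarrow$ root moves right'' finishes it. The reduction to the affine function $g$ (in particular keeping $t$ and $p_B$ distinct, and checking the slope signs via $A-C\in(0,\tfrac12]$) is routine but must be done carefully; everything else --- the two‑root count from concavity, the Jensen comparisons, and the identification of $g(p_B;p_B)$ with the Bayesian zero‑value identity --- is straightforward.
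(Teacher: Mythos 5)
Your proof is correct. Its first two stages---reducing $V_{B+1}^{pri}-\max(\tilde p_B^i,1-\tilde p_B^i)$ to a concave piecewise-affine ``tent'' in $\tilde p_B^i$ with roots $\ubar{p}<\tfrac12<\bar{p}$, and placing $p_B$ strictly inside $(\ubar{p},\bar{p})$ via the strict Jensen inequality $v(p_x)>\max(p_x,1-p_x)$---coincide with the paper's, which packages the same computation through Lemma \ref{lem:expected-win}: writing $V_1=pA+qC$ and $V_2=qA+pC$ with your $A=v(p_{BB})$, $C=v(\tfrac12)$, one has $V_1-V_2=(p-q)(A-C)$, and your three evaluations $g(0),g(\tfrac12),g(1)$ reprove exactly the bounds $\tfrac12<V_2<V_1<1$ that the Lemma supplies for this case, while your identity $g(p)=2pqC+(p^2+q^2)A-p>0$ is literally the paper's inequality $V_1p+V_2(1-p)>p$. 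Where you genuinely diverge is the comparative statics of $\bar{p}$: the paper differentiates the closed form $\bar{p}=V_2/(1-V_1+V_2)$ and closes with the estimate $\partial V_1/\partial p>\lvert\partial V_2/\partial p\rvert$ combined with $V_2>\tfrac12>1-V_1$, whereas you observe that on $(\tfrac12,1)$ the entire $p$-dependence of $g(t;p)$ sits in the product $(A(p)-C)\cdot P(B\mid t)$ of two positive, strictly increasing factors, so $g$ is pointwise increasing in $p$ and the root of a strictly decreasing function must move right. Your route avoids differentiating $v\circ p_{BB}$ and the absolute-value bookkeeping entirely, and works without assuming $v$ is differentiable; the paper's route yields the explicit sign of $\partial\bar{p}/\partial p$ but is more computational. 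One shared caveat worth making explicit: both arguments need $\gamma^i>0$ for Jensen to be strict (so that $C>\tfrac12$ and $A>p_{BB}$); in the degenerate case $\gamma^i=0$ one gets $g(p_B)=0$ and hence $\bar{p}=p_B$, so the strict inequality $p_B<\bar{p}$ is exactly the ``with belief error'' hypothesis at work.
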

\begin{proof}
See Appendix \ref{pf:positive-private-info-value}.
\end{proof}

Proposition \ref{pf:positive-information-value-predict} suggests that when the realized posterior belief in the first stage is in the same direction as the Bayesian posterior belief and does not hit the boundary of high confidence about the true state, i.e., $\frac{1}{2} < \tilde{p}_B^i < \bar{p} $, there is positive value of observing one additional signal. For example, when the realized posterior belief is equal to the Bayesian posterior belief (i.e., $\tilde{p}_B^i= p_B$), observing one additional signal has positive informational value.

Furthermore, the difference in expected payoffs is generally positive from the perspective of an outside observer. Let $\mathbb{E}V_{B+1}^{pri}$ be a subject's expected payoff,  from the perspective of an observer,  of observing one additional signal after having already observed a black ball. Since the posterior belief is unknown to the observer, $\mathbb{E}V_{B+1}^{pri} = (p^2 +q^2)v(p_{BB}) + 2pqv(p_{BW})$
given that $\mathbb{E}\tilde{p}_B^i =p_B=p$. Similarly,  $\mathbb{E}\max(\tilde{p}_B^i, 1-\tilde{p}_B^i) =v(p_B)$ is the expected payoff in the first stage from the perspective of an observer. Since $(p^2 +q^2)p_{BB} + 2pq p_{BW} =p_B$, it is clear that $\mathbb{E}V_{B+1}^{pri} > v(p_B) $ if $v(p_x)$ is convex.

Finally, the belief error model also predicts that the value of observing three additional signals is always greater than the value of observing one additional signal, which is consistent with the experimental finding. We leave the formalization and the proof of this claim in the supplementary Appendix B.

\subsubsection{First-order belief about others' random posterior beliefs}
In the later stages of our social information setting, all signals except the first consist of other subjects' first choices, which are based on the colors of their own first balls drawn from the same urn. The question then is how a subject interprets the signals of others' first choices. In other words, what is the quality of such signals or, equivalently, what are the probabilities of observing another subject's first choice of urn 1 or urn 2 given the true urn state from the perspective of the subject. We provide a formal analysis below and demonstrate that how a subject interprets the signals of others' first choices depends on her belief about others' decision strategy and, in particular, on her belief about how others form their posterior beliefs.

Formally, the quality of a signal of another subject's first choice is characterized by $P^{i}(C_j\mid \text{urn $k$})$ ($j=1,2$ and $k=1,2$), which reflects subject $i$'s belief about another subject's probability of choosing urn $j$ given the true state urn $k$. Subject $i$'s belief about another subject's decision strategy is characterized by $P^{i}(C_j \mid B)$ and $P^{i}(C_j \mid W)$, which reflects subject $i$'s belief about another subject's probabilities of choosing urns $j=1,2$ after the other subject observes a black ball and a white ball, respectively. In contrast to the private information setting, where the quality of a signal of the color of a drawn ball is exogenously given and is naturally agreed on among subjects, the quality of a signal of another subject's first choice in the social information setting may be interpreted differently and is endogenously determined as follows.
\begin{align*}
	P^{i}(C_1\mid\urnone) &= P(B\mid\urnone)\cdot P^{i}(C_1 \mid B) + P(W\mid\urnone)\cdot P^{i}(C_1 \mid W) \\
	&= p \cdot P^{i}(C_1 \mid B)+q \cdot P^{i}(C_1 \mid W), \\
	P^{i}(C_2\mid\urntwo) &= P(W\mid\urntwo)\cdot P^{i}(C_2 \mid W) + P(B\mid\urntwo)\cdot P^{i}(C_2 \mid B) \\
	&= p \cdot P^{i}(C_2 \mid W)+q \cdot P^{i}(C_2 \mid B).
\end{align*}

If a subject believes that other subjects form posterior beliefs in a Bayesian manner and choose optimally given their beliefs, then $P^{i}(C_1\mid B) = P^{i}(C_2\mid W)=1$, which implies that $P^{i}(C_1\mid\urnone)=P(B\mid\urnone)$ and $P^{i}(C_2\mid\urntwo)=P(W\mid\urntwo)$. In this case, subject $i$ views the signal quality as identical regardless of whether it is a signal of another subject's first choice or it is a signal of the color of a drawn ball. Thus, she interprets a signal of another subject's choosing urn 1 the same as a signal of a black ball.

If a subject believes that other subjects form posterior beliefs in a non-Bayesian paradigm and choose optimally given their beliefs, then she may interpret that a signal of another subject's choosing urn 1 comes from observing a black ball or observing a white ball. In this case, the signal quality is generally viewed as different, i.e., $P^{i}(C_1\mid\text{urn 1})\neq P(B\mid\text{urn 1})$ and $P^{i}(C_2\mid\text{urn 2}) \neq P(W\mid\text{urn 2})$. In addition, knowing that $P^{i}(C_1\mid\urnone) > P(B\mid\urnone)$ if and only if $ P^{i}(C_1\mid W) / P^{i}(C_2\mid B) > p/q > 1$ and that $P^{i}(C_2\mid\urntwo) > P(W\mid\urntwo)$ if and only if $P^{i}(C_1\mid W) / P^{i}(C_2\mid B) < q/p < 1$, it is impossible that both $P^{i}(C_1\mid\text{urn 1}) > P(B\mid\text{urn 1})$ and $P^{i}(C_2\mid\text{urn 2}) > P(W\mid\text{urn 2})$. In other words, for subject $i$'s arbitrary modeling of others' forming posterior beliefs, the quality of a signal of another subject's first choice can never be uniformly improved compared to the signal quality in the private information setting.

We now know that how subject $i$ interprets a signal of another subject's first choice depends on her belief about the subject's decision strategy. This belief in turn depends only on her belief about the other subject's method of forming posterior beliefs, as long as the assumption that subjects choose optimally given their beliefs is maintained. % So we need an assumption about subject $i$'s modeling others' forming posterior beliefs.
Similar to our method of modeling a subject forming a random posterior belief, we assume that subject $i$ holds the first-order belief that others form random posterior beliefs with a Bayesian kernel.

\begin{ass}
	\label{ass:distribution-other-p}
	Subject $i$ believes that other subjects' posterior belief $\tilde{p}_x^{-i}$ follows a beta distribution with two parameters $\theta^i \ge 0$ and $p_x \in [0,1]$. Specifically,
	\begin{equation*}
	\tilde{p}_x^{-i} \sim \operatorname{Beta}\left(\frac{p_x}{\theta^i}, \frac{1-p_x}{\theta^i}\right) %\quad \gamma^i \geq 0,
	\end{equation*}
	when $\theta^i >0$ and $p_x \in (0,1)$. $\tilde{p}_x^{-i} =p_x$ almost surely when $\theta^i =0$, $\tilde{p}_x^{-i} =0$ almost surely when $ p_x =0$, and $\tilde{p}_x^{-i} =1$ almost surely when $p_x =1$.
\end{ass}

Similar to Assumption \ref{ass:distribution-p}, Assumption \ref{ass:distribution-other-p} suggests that subject $i$ thinks that other subjects' posterior beliefs are unbiased on average. The parameter $\theta^i$ describes subject $i$'s opinion of the degree of others' belief errors. The larger $\theta^i$ is, the greater she thinks others' belief errors are. In the special case of $\theta^i= 0$, $\tilde{p}_x^{-i} = p_x$ almost surely because its distribution is degenerate. Therefore, subject $i$ believes that others' posterior beliefs all coincide with the Bayesian posterior belief, so their first choices are perfectly aligned with the colors of the balls they observe. In this case, the observation of a subject choosing urn 1/urn 2 is the same as observing a black/white ball, and the social information setting degenerates to the private information setting.

Moreover, we know that under Assumption \ref{ass:distribution-other-p}, $P^{i}(C_2\mid B) = P^{i}(C_1\mid W)>0$ when $\theta^i >0$, which implies that $P^{i}(C_1\mid\urnone) < P(B\mid\urnone)$ and $P^{i}(C_2\mid\urntwo) < P(W\mid\urntwo)$. In other words, subject $i$ interprets that the signal quality is always lower when it is a signal of another subject's first choice than when it is a signal of the color of a drawn ball. Her observation of another subject's first choice is equivalent to observing a signal with discounted quality.

Finally, the belief error model in the social information setting employs Assumption \ref{ass:distribution-other-p} to determine a subject's interpretation of others' first choices. The model uses both Assumption \ref{ass:distribution-p} and Assumption \ref{ass:distribution-other-p} to determine a subject's posterior belief after observing others' first choices. These two assumptions and the assumption of subjects' sophisticated consideration of belief error are needed when modeling subjects' extrapolation in the later stages.\footnote{To obtain a better picture of the deviation of our belief error model from the Bayesian benchmark model in the social information setting, one may introduce two intermediate non-Bayesian paradigms. Specifically, non-Bayesian paradigm 1 assumes that subjects form posterior beliefs in a Bayesian manner but believe that others form random posterior beliefs with a Bayesian kernel. Non-Bayesian paradigm 2 assumes that subjects form random posterior beliefs with a Bayesian kernel and believe that others use a similar method. Non-Bayesian paradigm 3, which is our belief error model in the social information setting, assumes that subjects make a sophisticated consideration of their following non-Bayesian paradigm 2. The introduction of Assumption \ref{ass:distribution-other-p} makes the model deviate from the Bayesian benchmark to non-Bayesian paradigm 1. Further introduction of Assumption \ref{ass:distribution-p} makes the model deviate from non-Bayesian paradigm 1 to non-Bayesian paradigm 2. The final introduction of the assumption of sophisticated consideration makes the model deviate from non-Bayesian paradigm 2 to our belief error model. }

Let $V_{B+1}^{i,soc}$ be subject $i$'s ex ante expected payoff of observing another subject's first choice after having already observed a black ball in the first stage. We simplify the notation to $V_{B+1}^{soc}$ whenever it is clear.\footnote{$
V_{B+1}^{soc} = \tilde{p}_B^i [P^{i}(C_1\mid\text{urn 1}) \cdot \mathbb{E}\max(\tilde{p}_{BC_1}^i, 1-\tilde{p}_{BC_1}^i)
+ P^{i}(C_2\mid\text{urn 1}) \cdot \mathbb{E}\max(\tilde{p}_{BC_2}^i, 1-\tilde{p}_{BC_2}^i) ] + (1-\tilde{p}_B^i) [P^{i}(C_2\mid\text{urn 2}) \cdot \mathbb{E}\max(\tilde{p}_{BC_2}^i, 1-\tilde{p}_{BC_2}^i) +  P^{i}(C_1\mid\text{urn 2}) \cdot \mathbb{E}\max(\tilde{p}_{BC_1}^i, 1-\tilde{p}_{BC_1}^i)].$
} Similar to the private information setting, we show that there is positive informational value of observing another subject's first choice, which contradicts the prediction according to the Bayesian benchmark model.
\begin{proposition}
Assume that sophisticated subjects with belief error maximize expected payoff. Then under Assumptions \ref{ass:distribution-p} and \ref{ass:distribution-other-p},
\[ V_{B+1}^{soc} - \max(\tilde{p}_B^i, 1-\tilde{p}_B^i) \begin{cases}
> 0 & \text{if $\tilde{p}_B^i \in (\ubar{p}', \bar{p}')$}, \\
\leq 0 & \text{if $\tilde{p}_B^i \leq \ubar{p}'$ or $\tilde{p}_B^i \geq \bar{p}'$},
\end{cases} \]
where $\ubar{p}'$, $\bar{p}'$ satisfy that $\ubar{p}' < \frac{1}{2} < p_B < \bar{p}'$.
\label{prop:social-information-value}
\end{proposition}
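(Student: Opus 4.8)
The plan is to reduce one ``social signal'' (an observation of another subject's first choice) to an ordinary private signal of a \emph{discounted} quality, and then to rerun the argument behind Proposition~\ref{pf:positive-information-value-predict}. The first step is to make the reduction explicit. Under Assumption~\ref{ass:distribution-other-p}, the symmetric signal structure, and the uniform prior, a counterpart who drew $B$ has Bayesian posterior $p$ and hence picks $C_1$ with probability $\alpha:=P\!\left(\operatorname{Beta}(p/\theta^i,(1-p)/\theta^i)>\tfrac12\right)$, whereas one who drew $W$ has Bayesian posterior $q$ and, since $1-\operatorname{Beta}(q/\theta^i,p/\theta^i)$ has the same law as $\operatorname{Beta}(p/\theta^i,q/\theta^i)$, picks $C_2$ with the same probability $\alpha$. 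Thus $P^i(C_1\mid B)=P^i(C_2\mid W)=\alpha$ and $P^i(C_1\mid W)=P^i(C_2\mid B)=1-\alpha$, so that $P^i(C_1\mid\urnone)=P^i(C_2\mid\urntwo)=\rho$ and $P^i(C_1\mid\urntwo)=P^i(C_2\mid\urnone)=1-\rho$ with $\rho:=q+(p-q)\alpha$. For $\theta^i\in(0,\infty)$ a $B$-observer is strictly more likely to choose $C_1$ than $C_2$ but errs with positive probability, so $\alpha\in(\tfrac12,1)$ and hence $\tfrac12<\rho<p$; for $\theta^i=0$ we get $\rho=p$ and the claim collapses to Proposition~\ref{pf:positive-information-value-predict}. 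So $C_1$ (resp.\ $C_2$) acts as a symmetric binary signal of quality $\rho$ favoring urn~1 (resp.\ urn~2), and the relevant Bayesian posteriors after the first ball $B$ are $p_{BC_1}=p\rho/\bigl(p\rho+q(1-\rho)\bigr)$ and $p_{BC_2}=p(1-\rho)/\bigl(p(1-\rho)+q\rho\bigr)$, for which $\tfrac12<p_{BC_2}<p<p_{BC_1}<1$ (the contradicting but lower-quality $C_2$ never pushes the posterior below $\tfrac12$).

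The second step exploits the fact that $V_{B+1}^{soc}$ is affine in $\tilde{p}_B^i$. Writing $u_j:=\mathbb{E}\max(\tilde{p}_{BC_j}^i,1-\tilde{p}_{BC_j}^i)=v(p_{BC_j})$ and substituting into the expression for $V_{B+1}^{soc}$ in the footnote, collecting terms gives $V_{B+1}^{soc}=(2\rho-1)(u_1-u_2)\,\tilde{p}_B^i+\rho u_2+(1-\rho)u_1$, so $g(t):=V_{B+1}^{soc}-\max(t,1-t)$, viewed as a function of $t=\tilde{p}_B^i$, is continuous and piecewise linear with its unique kink at $t=\tfrac12$, with slope $(2\rho-1)(u_1-u_2)+1$ on $[0,\tfrac12]$ and $(2\rho-1)(u_1-u_2)-1$ on $[\tfrac12,1]$. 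Since $u_1,u_2\in[\tfrac12,1)$ with $u_1\ge u_2$ (each $p_{BC_j}\ge\tfrac12$, and $v$ is nondecreasing on $[\tfrac12,1]$ with $v(\tfrac12)\ge\tfrac12$ by Proposition~\ref{prop:subjective-assessment} and Jensen) and $2\rho-1<1$, the first slope lies in $[1,\tfrac32)$ and the second in $[-1,-\tfrac12)$; hence $g$ is strictly increasing on $[0,\tfrac12]$ and strictly decreasing on $[\tfrac12,1]$, while $g(0)=\rho u_2+(1-\rho)u_1-1<0$ and $g(1)=\rho u_1+(1-\rho)u_2-1<0$. A one-line computation gives $g(\tfrac12)=\tfrac12\bigl(v(p_{BC_1})+v(p_{BC_2})\bigr)-\tfrac12$, which is $>0$ because $p_{BC_1}>\tfrac12$ makes $v(p_{BC_1})>v(\tfrac12)\ge\tfrac12$. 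Therefore $g$ has a single zero $\ubar{p}'\in(0,\tfrac12)$ and a single zero $\bar{p}'\in(\tfrac12,1)$, is positive precisely on $(\ubar{p}',\bar{p}')$ and nonpositive outside it --- which already yields the dichotomy in the statement together with $\ubar{p}'<\tfrac12<\bar{p}'$.

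The third step is to show $p_B<\bar{p}'$, i.e.\ $g(p_B)>0$. Because $V_{B+1}^{soc}$ is affine in $\tilde{p}_B^i$ and $\mathbb{E}\tilde{p}_B^i=p_B=p$, the value of $V_{B+1}^{soc}$ at $\tilde{p}_B^i=p$ equals $\mathbb{E}V_{B+1}^{soc}=w_1v(p_{BC_1})+w_2v(p_{BC_2})$ with $w_1:=p\rho+q(1-\rho)$, $w_2:=p(1-\rho)+q\rho$, $w_1+w_2=1$, and --- the key algebraic identity, since $w_1$ and $w_2$ are exactly the denominators of $p_{BC_1}$ and $p_{BC_2}$ --- $w_1p_{BC_1}+w_2p_{BC_2}=p$. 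Using the convexity of $v$ (only numerically supported, exactly as relied on in Proposition~\ref{pf:positive-information-value-predict}) together with $p_{BC_1}\ne p_{BC_2}$, Jensen gives $\mathbb{E}V_{B+1}^{soc}>v(p)$, and a second Jensen step applied to the convex map $x\mapsto\max(x,1-x)$ gives $v(p)=\mathbb{E}\max(\tilde{p}_B^i,1-\tilde{p}_B^i)\ge p$; hence $g(p_B)>0$ and $p_B<\bar{p}'$. The degenerate cases $\gamma^i=0$ and/or $\theta^i=0$ would be treated directly, the latter reducing verbatim to Proposition~\ref{pf:positive-information-value-predict}.

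I expect the main obstacle to be the one already present in Proposition~\ref{pf:positive-information-value-predict}: the clause $p_B<\bar{p}'$ rests on the convexity of $v$, for which the paper offers only simulation evidence, whereas the piecewise-linear shape of $g$ and the strict positivity of $g(\tfrac12)$ --- which give the dichotomy and $\ubar{p}'<\tfrac12<\bar{p}'$ --- require nothing beyond the monotonicity of $v$ from Proposition~\ref{prop:subjective-assessment} and the explicit likelihoods derived in the first step.
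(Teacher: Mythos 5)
Your Steps 1 and 2 are correct and follow essentially the same route as the paper: your $\rho$ is exactly the paper's $\pi=P^{i}(C_1\mid\urnone)$ from Lemma \ref{lem:expected-win} (case iii), and the piecewise-linear analysis of $g(t)=V_{B+1}^{soc}-\max(t,1-t)$, with $g(0),g(1)<0$ and $g(\tfrac12)>0$, is the same affine-in-$\tilde{p}_B^i$ argument the paper runs via $\tfrac12<V_2<V_1<1$. The genuine gap is in Step 3. To get $g(p_B)>0$ you invoke the convexity of $v$ to conclude $w_1v(p_{BC_1})+w_2v(p_{BC_2})>v(p)$, but the paper explicitly states it has \emph{no proof} of that convexity (only simulation evidence), so as written your argument for the clause $p_B<\bar{p}'$ rests on an unestablished fact. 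Moreover, your stated expectation that this obstacle "is already present in Proposition \ref{pf:positive-information-value-predict}" is a misreading: the paper's proofs of both $\bar{p}>p_B$ and $\bar{p}'>p_B$ never use convexity of $v$. They use only the pointwise bound $v(p_x)=\mathbb{E}\max(\tilde{p}_x,1-\tilde{p}_x)>\max(p_x,1-p_x)\geq p_x$ (Jensen applied to the convex map $t\mapsto\max(t,1-t)$, which is rigorous for non-degenerate $\tilde{p}_x$), applied termwise:
\[
w_1v(p_{BC_1})+w_2v(p_{BC_2})>w_1p_{BC_1}+w_2p_{BC_2}=p,
\]
which immediately gives $V_1p+V_2(1-p)>p$ and hence $\bar{p}'=\frac{V_2}{1+V_2-V_1}>p$. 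You already wrote down the key identity $w_1p_{BC_1}+w_2p_{BC_2}=p$, so the repair is one line: replace your two-step Jensen chain (through $v(p)$) with this direct termwise bound, and the proof is complete without any appeal to convexity of $v$.

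One shared caveat worth noting: the strict inequality $v(p_x)>p_x$ requires $\tilde{p}_x^i$ to be non-degenerate, i.e., $\gamma^i>0$; when $\gamma^i=0$ one gets $v(p_{BC_j})=p_{BC_j}$ exactly and $\bar{p}'=p_B$, so the strict claim $p_B<\bar{p}'$ degenerates to equality. The paper glosses over this as well; your remark that the degenerate cases "would be treated directly" should acknowledge that the conclusion weakens there rather than merely simplifying.
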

\begin{proof}
See Appendix \ref{pf:positive-social-info-value}.
\end{proof}
Similar to the private information setting, the difference in expected payoffs is generally positive from the perspective of an outside observer. Let $\mathbb{E}V_{B+1}^{soc}$ be a subject's expected payoff of observing another subject's first choice in the first stage from the perspective of an observer. Then, $\mathbb{E}V_{B+1}^{soc} = [p\pi +q(1-\pi) ]v(p_{BC_1}) + [p(1-\pi)+q\pi]v(p_{BC_2})$, where $\pi \equiv P^{i}(C_1\mid\text{urn 1})=P^{i}(C_2\mid\text{urn 2})$. Since $[p\pi +q(1-\pi)]p_{BC_1} + [p(1-\pi)+q\pi]p_{BC_2} =p_B$, it is clear that $\mathbb{E}V_{B+1}^{soc} > v(p_B) $ if $v(p_x)$ is convex.

We finally discuss the implication of the belief error model for different valuations of social information and private information. Consider a random variable $X$ that takes the value $p_{BB}$ with probability $p^2 + q^2$ and the value $p_{BW}$ with probability $2pq$, and another random variable $Y$ that takes the value $p_{BC_1}$ with probability $p\pi +q(1-\pi)$ and takes the value $p_{BC_2}$ with probability $p(1-\pi)+q\pi$. It is straightforward to check that $X$ is a mean-preserving spread of $Y$ since $\pi < p$ under Assumption \ref{ass:distribution-other-p} and $\mathbb{E} X = \mathbb{E} Y =p$. Therefore, a sufficient condition for $\mathbb{E}V_{B+1}^{pri} > \mathbb{E}V_{B+1}^{soc}$ is that $v(p_x)$ is convex in $p_x$. On the basis of the same logic, it is also a sufficient condition for $\mathbb{E}V_{B+3}^{pri} > \mathbb{E}V_{B+3}^{soc}$ in the case of three additional observations. Since our simulation exercise confirms that the convexity of $v(p_x)$ generally holds, the belief error model predicts that social information is less valuable than private information.

\subsubsection{Paying to seek confidence}
\label{sec:bidding-strategy}

We have shown that when subjects form random posterior beliefs with a Bayesian kernel, there is positive informational value for observing one additional signal. On the one hand, the positive informational value of additional signals due to belief error has a clear upper bound, e.g., $V_{B+1}^{pri} - \max(\tilde{p}_B^i, 1-\tilde{p}_B^i) < \frac{1}{2}$ when subjects are assumed to maximize expected payoff. This upper bound corresponds to a maximum  reservation price of 150 tokens in our experiment. On the other hand, subjects' reported reservation price in each treatment condition of our experiment ranges from 0 tokens to 300 tokens. Therefore, an additional assumption is required to rationalize the full data sample.

We adopt the assumption that subjects paying to seek confidence, which has been demonstrated in \citet{eliaz:10}.\footnote{One may naturally wonder if introducing the assumption of risk aversion could rationalize subjects' considerably high reservation price for additional signals in the data. While risk aversion does help to generate a higher reservation price than under the assumption of risk neutrality, the additional assumption is still not sufficient because we then need an unreasonably high coefficient of risk aversion to rationalize the reported reservation prices that are much higher than 150 tokens. Moreover, no coefficient of risk aversion is able to rationalize the reported reservation price of 300 tokens, which is the monetary reward of making a correct choice.} In their experimental study, they provide compelling evidence that subjects are willing to pay for information on the likelihood that a decision is ex post optimal. They propose an explanation that subjects have an intrinsic preference for being ``confident'' in choosing the right decision. In a similar vein, we assume that a subject's utility consists of two components: the monetary reward from making a correct choice, and the psychological reward of making a correct choice. For simplicity, we assume that the second component of the utility is proportional to the chance of making a correct choice, and the interpretation is that the higher the chance is, the better the anticipatory feelings subjects have.

\begin{ass}
	\label{ass:information-hoarding}
	In addition to gaining utility from the monetary reward, subject $i$ gains utility from the confidence in earning the monetary reward, which is assumed to be proportional to the chance of earning the monetary reward, characterized by a parameter $\alpha ^i \geq 0$.
\end{ass}

Suppose that subject $i$ has an endowment $w_0 > 0$ and pays $s$ to observe additional signals. The reward of making a correct choice is $r > 0$, and the reward of an incorrect choice is zero. Then, according to Assumption \ref{ass:information-hoarding}, her expected utility from choosing urn 1 given signal set $x$ and posterior belief $\tilde{p}_x^i$ is $\tilde{p}_x^i (w_0 +r-s + \alpha ^i) + (1-\tilde{p}_x^i )(w_0 -s) = w_0 -s +\tilde{p}_x^i (r+\alpha ^i)$. Correspondingly, her expected utility from choosing urn 2 is $w_0 -s +(1-\tilde{p}_x^i )(r+\alpha ^i)$.

%% MOVE THE FOLLOWING PARAGRAPH FROM THE APPENDIX BACK TO HERE. (Huihui)
We now investigate subject $i$'s optimal strategy of bidding, i.e., the optimal reservation price for additional signals after observing the first signal. For simplicity, let us consider the scenario in which the first ball observed is black ($B$) since it can be shown that the subject employs the same bidding strategy in the scenario of first ball being white ($W$). Let $\mathcal{X}_{B+n}^t$ denote the collection of all signal sets $x$ that contains a first black ball in setting $t \in \{pri,soc\}$ with $n \in \{1, 3\}$ additional signals. For example, when the additional signals consist of the color of one ball, $\mathcal{X}_{B+1}^{pri} = \{\BB, \BW\}$; when the additional signals consist of the colors of three balls, $\mathcal{X}_{B+3}^{pri} = \{4B, 3B1W, 2B2W, 1B3W\}$.

By experimental design, subject $i$ bids $b$ and eventually pays the ask price $s$ to observe additional signals when successfully purchasing information, i.e., when $s \leq b$. When the ask price $s$ is uniformly distributed on $[0, w_0]$, sophisticated subject $i$ with a belief of urn 1,  $\tilde{p}_B^i$, has an expected utility from bidding $b$ as follows,
\begin{align*}
    U_{n,t}(b,\tilde{p}_B^i) & = \int_0^b\frac{1}{w_0}\sum\nolimits_{x\in\mathcal{X}_{B+n}^t} P^i(x\mid B)\Bigl\{(w_0-s+r+\alpha ^i) \mathbb{E}\max(\tilde{p}_x^i, 1-\tilde{p}_x^i) \\
    & \qquad + (w_0-s) \left[1-\mathbb{E}\max(\tilde{p}_x^i, 1-\tilde{p}_x^i)\right]\Bigr\}\od s \\
    & \qquad + \int_b^{w_0}\frac{1}{w_0}\Bigl\{(w_0+r+\alpha ^i) \max(\tilde{p}_B^i, 1-\tilde{p}_B^i)+w_0 \left[1-\max(\tilde{p}_B^i, 1-\tilde{p}_B^i)\right]\Bigr\}\od s \\
    &= \frac{1}{w_0}\int_0^b\biggl\{(w_0-s+r+\alpha ^i) \sum\nolimits_{x\in\mathcal{X}_{B+n}^t} P^i(x\mid B)\mathbb{E}\max(\tilde{p}_x^i, 1-\tilde{p}_x^i) \\
    & \qquad + (w_0-s) \biggl[1-\sum\nolimits_{x\in\mathcal{X}_{B+n}^t}P^i(x\mid B)\mathbb{E}\max(\tilde{p}_x^i, 1-\tilde{p}_x^i)\biggr]\biggr\}\od s \\
    & \qquad + \frac{w_0-b}{w_0}\Bigl\{(w_0+r+\alpha ^i) \max(\tilde{p}_B^i, 1-\tilde{p}_B^i) + w_0 \left[1-\max(\tilde{p}_B^i, 1-\tilde{p}_B^i)\right]\Bigr\},
\end{align*}
where $P^i(x\mid B)$ refers to subject $i$'s belief about observing signal set $x$ conditional on observing the first black ball, and $\sum_{x\in\mathcal{X}_{B+n}^t} P^i(x\mid B) =1$.

Define $V_{B+n}^{i,t} = \sum_{x\in\mathcal{X}_{B+n}^t} P^i(x\mid B)\mathbb{E}\max(\tilde{p}_x^i, 1-\tilde{p}_x^i)$ and we simplify the notation to $V_{B+n}^{t}$ whenever it is clear. Since $\mathbb{E}\max(\tilde{p}_x^i, 1-\tilde{p}_x^i)$ refers to subject $i$'s probability of making a correct choice after observing signal set $x$ before the posterior belief is realized,
we interpret $V_{B+n}^t $ as the ``average'' or ``expected'' probability of making a correct choice by purchasing additional signals. We show that $V_{B+n}^t$ is linear in the first-stage posterior belief $\tilde{p}_B^i$ and is increasing in $\tilde{p}_B^i$ in Appendix \ref{pf:expected-win}.
%%%%%%%%%%%%%%%%%%%%%%%%%%%%%%

We then characterize subject $i$'s optimal bidding strategy below.

\begin{proposition}
Under Assumptions \ref{ass:distribution-p}, \ref{ass:distribution-other-p},  and \ref{ass:information-hoarding}, (i) when $\alpha ^i \leq \frac{2w_0}{V_1+V_2-1}-r$, sophisticated subject $i$'s optimal bidding strategy is\\
% \resizebox{.92\hsize}{!}{
\begin{equation*}
    b(\tilde{p}_B^i) = \begin{cases}
    0 & \text{if $\tilde{p}_B^i < \frac{1-V_2}{1+V_1-V_2}$ or $\tilde{p}_B^i > \frac{V_2}{1+V_2-V_1}$}, \\
    (r+\alpha ^i)[V_{B+n}^t-\max(\tilde{p}_B^i, 1-\tilde{p}_B^i)] & \text{if $\tilde{p}_B^i \in \left[\frac{1-V_2}{1+V_1-V_2},\frac{V_2}{1+V_2-V_1}\right]$},
    \end{cases}
\end{equation*}
% }\\
and (ii) when $\alpha ^i > \frac{2w_0}{V_1+V_2-1}-r$, her optimal bidding strategy is
%\resizebox{.98\hsize}{!}{
\[
    b(\tilde{p}_B^i) = \begin{cases}
    0 & \text{if $\tilde{p}_B^i < \frac{1-V_2}{1+V_1-V_2}$ or $\tilde{p}_B^i > \frac{V_2}{1+V_2-V_1}$}, \\
    (r+\alpha^i)[V_{B+n}^t-\max(\tilde{p}_B^i, 1-\tilde{p}_B^i)] & \text{if $\tilde{p}_B^i \in \left[\frac{1-V_2}{1+V_1-V_2},\frac{1-V_2+\Delta}{1+V_1-V_2}\right]\cup\left[\frac{V_2-\Delta}{1+V_2-V_1},\frac{V_2}{1+V_2-V_1}\right]$}, \\
    w_0 & \text{if $\tilde{p}_B^i \in \left(\frac{1-V_2+\Delta}{1+V_1-V_2},\frac{V_2-\Delta}{1+V_2-V_1}\right)$},
    \end{cases}
\]
%} \\
where $\Delta = w_0/(r+\alpha^i)$ and $V_k = \sum_{x\in\mathcal{X}_{B+n}^t} P^i(x\mid \mbox{urn } k, B)\cdot\mathbb{E}\max(\tilde{p}_x^i, 1-\tilde{p}_x^i)$ for $k\in\{1,2\}$. In addition, the bidding function $b(\tilde{p}_B^i)$ is non-decreasing on $\bigl[0,\frac{1}{2}\bigr]$ and non-increasing on $\bigl[\frac{1}{2},1\bigr]$.
\label{prop:bidding-strategy}
\end{proposition}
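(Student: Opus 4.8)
The plan is to recognize that, for a fixed first-stage belief $\tilde p_B^i$, the objective $U_{n,t}(b,\tilde p_B^i)$ is a concave quadratic in $b$, to solve that one-dimensional problem over $[0,w_0]$, and then to translate the location of its maximizer into thresholds on $\tilde p_B^i$.

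First I would simplify the objective. Writing $V:=V_{B+n}^t=\sum_{x\in\mathcal X_{B+n}^t}P^i(x\mid B)\,\mathbb E\max(\tilde p_x^i,1-\tilde p_x^i)$ and $M:=\max(\tilde p_B^i,1-\tilde p_B^i)$, the integrand of the first integral in the definition of $U_{n,t}$ collapses to $(w_0-s)+(r+\alpha^i)V$ and the bracket in the second integral to $w_0+(r+\alpha^i)M$. Carrying out the two integrals in $s$ gives
\begin{equation*}
U_{n,t}(b,\tilde p_B^i)=-\frac{b^2}{2w_0}+\frac{(r+\alpha^i)(V-M)}{w_0}\,b+\Bigl(w_0+(r+\alpha^i)M\Bigr),
\end{equation*}
a downward parabola in $b$ whose unconstrained maximizer is $b^\star=(r+\alpha^i)(V-M)$. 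Since $b\in[0,w_0]$ and the parabola is concave, the optimal bid is $b(\tilde p_B^i)=\min\{w_0,\max\{0,b^\star\}\}$, so everything reduces to locating the regions where $b^\star\le 0$ and where $b^\star\ge w_0$.

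Next I would make $V-M$ explicit in $\tilde p_B^i$. Conditioning on the urn, $P^i(x\mid B)=\tilde p_B^i\,P^i(x\mid\text{urn }1,B)+(1-\tilde p_B^i)\,P^i(x\mid\text{urn }2,B)$, hence $V=\tilde p_B^i V_1+(1-\tilde p_B^i)V_2$ with $V_1,V_2$ as in the statement; the linearity and monotonicity of $V_{B+n}^t$ established in Appendix~\ref{pf:expected-win} give $V_1\ge V_2$, and since every $\mathbb E\max(\tilde p_x^i,1-\tilde p_x^i)$ lies in $[\tfrac12,1]$ we also have $V_1\le 1$ and $V_2\ge\tfrac12$, so $1+V_1-V_2>0$ and $1+V_2-V_1>0$. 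Resolving $M$ separately on $\tilde p_B^i\ge\tfrac12$ and on $\tilde p_B^i<\tfrac12$, one checks that $V-M$ is continuous and piecewise linear, non-decreasing on $[0,\tfrac12]$ and non-increasing on $[\tfrac12,1]$, equal to $0$ exactly at $\tfrac{1-V_2}{1+V_1-V_2}$ and $\tfrac{V_2}{1+V_2-V_1}$ (which bracket $\tfrac12$ because $V_1+V_2\ge 1$), and attaining its maximum $\tfrac12(V_1+V_2-1)$ at $\tilde p_B^i=\tfrac12$. Consequently $b^\star\le 0$ precisely outside $\bigl[\tfrac{1-V_2}{1+V_1-V_2},\tfrac{V_2}{1+V_2-V_1}\bigr]$, which gives the $b=0$ branch in both cases. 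Moreover $\sup_{\tilde p_B^i}b^\star=(r+\alpha^i)\tfrac12(V_1+V_2-1)$, which is $\le w_0$ iff $\alpha^i\le\frac{2w_0}{V_1+V_2-1}-r$; in that regime $b^\star$ never exceeds $w_0$, so the interior formula $b=(r+\alpha^i)(V-M)$ holds throughout the above interval, which is case (i). In the complementary regime, setting $\Delta=w_0/(r+\alpha^i)$ and solving $V-M=\Delta$ on each linear piece yields the crossover points $\tfrac{1-V_2+\Delta}{1+V_1-V_2}$ and $\tfrac{V_2-\Delta}{1+V_2-V_1}$; strictly between them $b^\star>w_0$ so $b=w_0$, while on the two flanking subintervals (out to where $V-M$ vanishes) $b^\star\in[0,w_0]$ so $b=(r+\alpha^i)(V-M)$, which is case (ii).

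Finally, the monotonicity of $b(\tilde p_B^i)$ on $[0,\tfrac12]$ and on $[\tfrac12,1]$ is inherited from that of $b^\star=(r+\alpha^i)(V-M)$: clamping a function that is non-decreasing on $[0,\tfrac12]$ and non-increasing on $[\tfrac12,1]$ into $[0,w_0]$ preserves monotonicity, and the branches match at the breakpoints since $V-M=0$ there forces $b=0$ while $V-M=\Delta$ there forces $(r+\alpha^i)(V-M)=w_0$. I expect the only real labor to be the routine but slightly fiddly case split on $\tilde p_B^i\ge\tfrac12$ versus $\tilde p_B^i<\tfrac12$ needed to pin down the threshold expressions and to confirm they occur in the claimed order; no conceptual obstacle remains once $U_{n,t}$ is seen to be quadratic in $b$.
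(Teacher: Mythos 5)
Your proposal is correct and follows essentially the same route as the paper's proof: both reduce the problem to maximizing a concave quadratic in $b$ (the paper via the first-order condition $h^i(b,\tilde p_B^i)=(r+\alpha^i)[V_{B+n}^t-\max(\tilde p_B^i,1-\tilde p_B^i)]-b$, you by integrating out $s$ explicitly), obtain the clamped optimum $\min\{w_0,\max\{0,(r+\alpha^i)(V-M)\}\}$, and then read off the thresholds from the piecewise-linear form of $V-M$ using Lemma~\ref{lem:expected-win}'s facts $V=V_1\tilde p_B^i+V_2(1-\tilde p_B^i)$ and $\tfrac12<V_2<V_1<1$. No gaps.
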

\begin{proof}
See Appendix \ref{pf:bidding-strategy}.
\end{proof}

Proposition \ref{prop:bidding-strategy} shows that subject $i$ bids the highest when her posterior belief in the first stage is close to $\frac{1}{2}$ and bids the lowest when the posterior belief is close to $0$ or $1$. The interpretation is that the additional signals are the most helpful when she feels uncertain about the underlying state, i.e., $\tilde{p}_B^i  = \frac{1}{2}$. Moreover, the additional signals are the least helpful when she feels certain about the underlying state, i.e., $\tilde{p}_B^i  = 1$ or $0$. Furthermore, in the case of $  b(\tilde{p}_B^i)  = (r+\alpha ^i)[V_{B+n}^t - \max(\tilde{p}_B^i, 1-\tilde{p}_B^i)]$, subject $i$'s bid may be naturally decomposed into two parts: $r [V_{B+n}^t - \max(\tilde{p}_B^i, 1-\tilde{p}_B^i)]$ and $\alpha ^i [V_{B+n}^t - \max(\tilde{p}_B^i, 1-\tilde{p}_B^i)]$. The first part is the expected increment in monetary reward due to the increment in the chance of making a correct choice, which we call the instrumental value of additional information. The second part is the expected increment in psychological reward due to the increment in the chance of making a correct choice, which we label the non-instrumental value of additional information.

\subsection{Heterogeneous belief error model}
Individual-level analysis of the experimental data in Section \ref{sec:individual-level-analysis} demonstrates that considerable heterogeneity exists among subjects, especially in terms of the valuations of private information and social information. Since a basic model of belief error is captured by three parameters, our heterogeneous belief error model assumes that subjects are heterogeneous in that the model parameters $(\gamma^i, \theta^i, \alpha^i)$ vary across subjects. Specifically, we assume that the model parameters for each subject are independently drawn from a certain distribution and that the realization of the values is privately observable to the subject. Since a natural interpretation of the model parameters requires that $\gamma^i \geq 0$, $\theta^i \geq 0 , \alpha ^i \geq 0$, we make the following assumption. % that they follow an exponential distribution.

\begin{ass}
\label{ass:heterogeneity}
$(\gamma^i, \theta^i, \alpha ^i)$, $i = 1, 2, \ldots, N$, are independently and identically distributed. $\gamma^i$, $\theta^i$, and $\alpha ^i$ are jointly independent and follow exponential distributions with means $\bar{\gamma}$, $\bar{\theta}$, and $\bar{\alpha}$, respectively. Specifically, the probability density function of $(\gamma^i,\theta^i,\alpha ^i)$ is
\[
    \phi(\gamma^i, \theta^i, \alpha ^i) = \frac{1}{\bar{\gamma} \bar{\theta} \bar{\alpha}}\exp\left(-\frac{\gamma^i}{\bar{\gamma}}-\frac{\theta^i}{\bar{\theta}}-\frac{\alpha^i}{\bar{\alpha}}\right), \quad \gamma^i, \theta^i, \alpha^i \geq 0,
\]
when $\bar{\gamma} >0$, $\bar{\theta} > 0$, and $\bar{\alpha}> 0$; if any of $\bar{\gamma}$, $\bar{\theta}$, or $\bar{\alpha}$ is zero, the distribution of corresponding model parameter is degenerate with all probability mass at 0.
\end{ass}
% \noindent Under Assumption \ref{ass:heterogeneity}, $\gamma^i$, $\theta^i$, and $u_0^i$ are jointly independent and each follows an exponential distribution. The average values of $\gamma^i$, $\theta^i$ and $u_0^i$ in the subject population are $\bar{\gamma}$, $\bar{\theta}$, and $\bar{u}$, respectively.

According to the heterogeneous belief error model, the model parameters $(\gamma^i, \theta^i, \alpha^i)$ remain constant for a subject once they are drawn from the exponential distribution. Given the parameters $(\gamma^i, \theta^i)$, the subject forms her posterior belief $\tilde{p}_x^i$ and her first-order belief about others' posterior belief $\tilde{p}_x^{-i}$ for any possible signal set $x$ according to the corresponding beta distributions specified in Assumptions \ref{ass:distribution-p} and \ref{ass:distribution-other-p}. We assume that the subject's belief error and her belief about others' belief error are independent of each other and are both independent of the realization of the parameter $\alpha^i$ that characterizes the psychological reward of making a correct choice.

\begin{ass}
For any $i \in \{1, 2, \ldots, N\}$, any realization of $(\gamma^i, \theta^i)$, and any possible signal set $x$, $\tilde{p}_x^i$ , $\tilde{p}_x^{-i}$, and $\alpha^i$ are independent of each other.
\end{ass}

\subsection{Estimation strategy and results}
\label{estimationresult}
We apply the heterogeneous belief error model to estimate the model parameters $(\bar\gamma, \bar\theta, \bar{\alpha})$.\footnote{Our choice of structural estimation of the full data sample based on the heterogeneous model instead of the basic model is made due mainly to the following reasons. First, subjects in feedback sessions and no-feedback sessions may have different model parameters due to the possibility of learning from the feedback. Second, we find substantial heterogeneity across subjects, as illustrated in Section \ref{sec:individual-level-analysis}. Third, we apply a basic model of belief error to estimate the model parameters for each subject's data subsample, and the estimation results show large variation in the estimated values across subjects.} The data consist of 6000 observations from $N = 100$ subjects in total $J = 60$ rounds. Let $c^i_j$ and $b^i_j$ denote subject $i$'s first choice and her reservation price for additional signals in the $j$-th round. Let $\mathcal{J}_i\subseteq\{1,2,\ldots,J\}$ denote the collection of rounds in which subject $i$ successfully purchases additional signals, and let $d^i_j$ ($j \in \mathcal{J}_i$) be her second choice after the additional signals are observed. Then, given $(\gamma^i,\theta^i,\alpha^i)$, the probability (or likelihood) that subject $i$ chooses $c^i_j$ ($j \in \{1,2,\ldots,J\}$) and $d^i_j$ ($j\in\mathcal{J}_i$) and reports a reservation price of $b^i_j$ ($j \in \{1,2,\ldots,J\}$) is
\[
    L_i^*(\gamma^i,\theta^i, \alpha^i) = \prod_{j = 1}^J f_j(c^i_j,b^i_j\mid \gamma^i, \theta^i, \alpha ^i) \cdot \prod_{j\in\mathcal{J}_i} g_j(d^i_j\mid \gamma^i,\theta^i),
\]
where $f_j$ represents the probability of the first choice and the reservation price in round $j$, and $g_j$ represents the probability of the second choice. Integrating out the unobservable individual-specific $\gamma^i$, $\theta^i$ and $\alpha ^i$ yields the likelihood for subject $i$'s behavioral data given model parameters $(\bar{\gamma},\bar{\theta},\bar{\alpha})$ as
\begin{multline}
    L_i(\bar{\gamma},\bar{\theta},\bar{\alpha}) = \iiint_{\mathbb{R}^3_+} L_i^*(\gamma^i,\theta^i,\alpha^i) \phi(\gamma^i, \theta^i, \alpha ^i) \od\gamma^i \od\theta^i \od \alpha ^i \\
    = \iiint_{\mathbb{R}^3_+} \left[ \, \prod_{j = 1}^J f_j(c^i_j,b^i_j\mid \gamma^i, \theta^i, \alpha ^i) \prod_{j\in\mathcal{J}_i} g_j(d^i_j\mid \gamma^i,\theta^i) \right] \phi(\gamma^i, \theta^i, \alpha ^i) \od\gamma^i \od\theta^i \od \alpha ^i.
    \label{eq:individual-llk}
\end{multline}
We use maximum likelihood estimation to obtain the estimates of $(\bar{\gamma},\bar{\theta},\bar{\alpha})$ and derive the likelihood function in detail in the supplementary Appendix B.

\[ (\hat{\bar\gamma}, \hat{\bar\theta}, \hat{\bar{\alpha}}) = \argmax_{(\bar\gamma, \bar\theta, \bar{\alpha})\in\mathbb{R}^3_+} \sum_{i=1}^N \ln L_i(\bar\gamma,\bar\theta,\bar \alpha). \]

To compute the triple integral in the likelihood function, we use the three-dimensional Gauss-Jacobi quadrature with 50 nodes in each dimension. In the estimation, we also normalize the subjects' reservation prices to the unit interval by dividing the reservation prices in units of token by the upper bound of the reservation price (i.e., 300 tokens). The estimates of the model parameters are reported in column 1 of Table \ref{tab:estimation}, with the standard errors reported in parentheses.\footnote{As discussed in the beginning of Section \ref{sec:belief-disturbance-model}, we also apply a heterogeneous logit QRE model to perform structural estimation by pretending to ignore the conceptual concern. Specifically, the heterogeneous logit QRE model assumes that subject $i$'s payoff disturbance follows a Gumbel distribution with a scale parameter $\beta ^{i}$, that in her opinion, others' payoff disturbances follow a Gumbel distribution with a potentially different scale parameter $\beta _{0}^{i}$, and that the two parameters are independently drawn from exponential distributions with means $\bar\beta$ and $\bar\beta _0$. The maximum likelihood estimation shows that the estimate of $\bar\beta _0$ is significantly larger than the estimate of $\bar\beta $. In addition, the log-likelihood value of the estimation is -2879.85, which is considerably smaller than the log-likelihood value of the estimation based on our heterogeneous belief error model.}
\begin{table}[htbp]
\centering
\caption{Estimation results of model parameters}
\label{tab:estimation}
\begin{tabular}{cccc}
\toprule
 & \begin{tabular}[c]{@{}c@{}}(1)\\ Unrestricted\end{tabular} & \begin{tabular}[c]{@{}c@{}}(2)\\ Test $H_0: \bar\theta = 0$\end{tabular} & \begin{tabular}[c]{@{}c@{}}(3)\\ Test $H_0: \bar\theta \leq \bar\gamma$\end{tabular} \\ \midrule
$\bar\theta$ & 0.2173 & - & 0.1677 \\[-0.5ex]
 &(0.0367) &  & (0.0321) \\[0.5ex]
$\bar\gamma$ & 0.1624 & 0.1551 & 0.1677 \\[-0.5ex]
 & (0.0014) & (0.0078) & (0.0084) \\[0.5ex]
$\bar \alpha $ & 2.1692 & 2.1385 & 2.1537 \\[-0.5ex]
 & (0.1382) & (0.1359) & (0.1381) \\ \midrule
Log-likelihood & $-2163.36$ & $-2176.45$ & $-2164.77$ \\[0.5ex]
$p$-value & & $1.552\times 10^{-7}$ & 0.0467 \\
\bottomrule
\end{tabular}
\end{table}

Recall that parameter $\bar\gamma$ measures the degree of an average subject's belief error and that parameter $\bar\theta$ measures an average subject's belief about the degree of others' belief errors. A larger estimate of $\bar\theta$ relative to $\bar\gamma$ suggests that an average subject thinks that other people have more noise in forming their posterior beliefs than actually exists.\footnote{\citet{kubler:04} estimate a logistic choice model and find that the estimated value of a subject's belief about others' choice disturbance is greater than that of her own choice disturbance. \citet{goeree:07} estimate a quantal response model with non-rational expectations and find that a subject's belief about others' payoff disturbance is greater than that of her own payoff disturbance. In contrast, the estimation of our model suggests that subjects view others' belief errors as greater than theirs or greater than what it actually is. Our interpretation is conceptually different from theirs.} Thus, a subject's lower valuation of observing social information is explained both by her awareness of others' belief errors and by her exaggeration of others' belief errors. In other words, subjects' taking into account the belief error leads to their discounting the quality of the signals of others' choices, and their exaggeration of others' belief errors leads to a further discount in the signal quality.

Finally, we conduct statistical tests to confirm the two forces driving the difference in reservation prices of private information and social information. When $\bar\theta$ approaches zero, $\theta^i$ becomes a degenerate distribution, and $\theta^i = 0$ for any $i$. According to the belief error model, subject $i$ does not think others have belief error in this case. Therefore, she believes that others' first choices perfectly coincide with the colors of the balls observed privately; thus, social information is equivalent to private information. This possibility is ruled out by testing
\[
	H_0: \bar\theta = 0 \quad \text{versus} \quad H_1: \bar\theta > 0.
\]
We use the generalized likelihood ratio test. Under the null hypothesis, the likelihood ratio test statistic
\[
	2\left[\max\nolimits_{\bar\gamma,\bar\theta,\bar \alpha \geq 0} \sum_{i=1}^N \ln L_i(\bar\gamma,\,\bar\theta,\,\bar \alpha) - \max\nolimits_{\bar\theta = 0,\ \bar\gamma,\,\bar \alpha \geq 0} \sum_{i=1}^N \ln L_i(\bar\gamma,\bar\theta,\bar \alpha)\right] \xrightarrow{d} \frac{1}{2}\chi^2_0 + \frac{1}{2}\chi^2_1,
\]
where $\chi^2_0$ is a degenerate distribution at 0 and $\chi^2_1$ is a chi-square distribution with degree of freedom 1. Given that the value of the test statistic is 26.1839 ($p$-value $=1.552 \times 10^{-7}$), the null hypothesis is rejected at the significance level of 1\%.

We also conduct the test
\[ H_0: \bar\theta \leq \bar\gamma \quad \text{versus} \quad H_1: \bar\theta > \bar\gamma \]
to verify the interpretation that subjects exaggerate others' belief error. Similarly, the likelihood ratio test statistic
\[
	2\left[\max\nolimits_{\bar\gamma,\bar\theta,\bar\alpha \geq 0} \sum_{i=1}^N \ln L_i(\bar\gamma,\,\bar\theta,\,\bar\alpha) - \max\nolimits_{\bar\gamma \geq \bar\theta \geq 0,\ \bar\alpha \geq 0} \sum_{i=1}^N \ln L_i(\bar\gamma,\bar\theta,\bar\alpha)\right] \xrightarrow{d} \frac{1}{2}\chi^2_0 + \frac{1}{2}\chi^2_1
\]
under the null hypothesis. Since the $p$-value of this test is 0.0467, the test rejects the null hypothesis at the significance level of 5\%.

\section{Conclusions}
\label{sec:conclusion}
This paper reveals experimentally that individuals value social information less than private information, even though they are expected to be identical in the Bayesian paradigm. Additionally, a monotonic relationship exists between signal quality and the frequency of individuals' making a Bayesian choice, and there is positive informational value of observing an additional signal after already observing a signal, both of which contradict the Bayesian paradigm. These findings are explained by a belief error model in which individuals form a random posterior belief with a Bayesian kernel and individuals sophisticatedly consider their and others' belief errors . Finally, maximum likelihood estimation of the heterogeneous belief error model suggests that individuals' sophisticated consideration of others' belief errors and their exaggeration of others' belief errors both contribute to their lower valuation of social information than private information.

This paper, to the best of our knowledge, is the first to test the first-order belief assumption that individuals believe that others process private information in a Bayesian manner. In our novel experimental design, the first-order belief assumption is identified with a testable implication about the equivalent reservation prices for private information and social information. Our experimental evidence casts doubt on the first-order belief assumption and suggests that future non-Bayesian social learning models may need to reflect the failure of the assumption given that it is fundamental to many existing Bayesian and non-Bayesian social learning models.

Our proposed belief error model first formalizes the noise in individuals' formation of posterior beliefs by making a beta distribution assumption about their random posterior beliefs. Compared to the method of modeling errors by assuming an additively separable error term and making certain distribution assumption about the error term, we believe that our method is particularly useful for modeling belief error and, more generally, for modeling errors in economic variables whose values must fall within a bounded interval. In addition, our model has the advantage of retaining the feature that individuals are, on average, Bayesian and is flexible for allowing some non-Bayesian choice behavior that cannot be predicted by existing non-Bayesian models. It is beyond the scope of this paper to provide a rational foundation for the beta distribution assumption about random posterior beliefs. We believe that an investigation of its rational foundation is an interesting research agenda, just as \citet{matejka:15} recently provides a rational inattention foundation for the logit model that has been used for decades.

\clearpage

%% Appendix
\appendix

\section{Proofs of Main Results}

\subsection{Proof of Proposition \ref{prop:subjective-assessment}}
\label{pf:subjective-assessment}

\begin{proof}
	For simplicity of notation, we omit the index $i$ in the proof. Given Assumption \ref{ass:distribution-p}, $\tilde{p}_x$ and $\tilde{p}_y$ follow the same distribution when $p_x = p_y$, so $\max(\tilde{p}_x, 1-\tilde{p}_x)$ and $\max(\tilde{p}_y, 1-\tilde{p}_y)$ are also identically distributed. Then their expectation over belief error, i.e. $P(\text{correct choice}\mid x)$ and $P(\text{correct choice}\mid y)$, should be equal.

	According to the property of beta distribution,
	\[ \tilde{p}_y \sim \operatorname{Beta}\left(\frac{p_y}{\gamma}, \frac{1-p_y}{\gamma}\right) \ \Rightarrow\  1-\tilde{p}_y \sim \operatorname{Beta}\left(\frac{1-p_y}{\gamma},\frac{p_y}{\gamma}\right). \]
	If $p_y = 1-p_x$, then $1-\tilde{p}_y$ and $\tilde{p}_x$ are identically distributed.  So $\max(\tilde{p}_y, 1-\tilde{p}_y) = \max(1-\tilde{p}_y, \tilde{p}_y)$ and $\max(\tilde{p}_x, 1-\tilde{p}_x)$ are identically distributed and in turn their expectation over belief error should be equal. This establishes (i).

We now prove the monotonicity. According to property (i), it suffices to show that $p_x > p_y \geq \frac{1}{2}$ implies that $P(\text{correct choice}\mid x) >P(\text{correct choice}\mid y)$.

Since $P(\text{correct choice}\mid x) = \mathbb{E}\max(\tilde{p}_x, 1-\tilde{p}_x)$ where $\tilde{p}_x \sim \operatorname{Beta}\left(\frac{p_x}{\gamma}, \frac{1-p_x}{\gamma}\right)$, we define random variables $X=\max(\tilde{p}_x, 1-\tilde{p}_x)$ and $Y=\max(\tilde{p}_y, 1-\tilde{p}_y)$ which have common support $\left[\frac{1}{2},1\right]$. Then, the proposition states that $\mathbb{E}X > \mathbb{E}Y$ whenever $p_x > p_y \geq \frac{1}{2}$. This is implied by that $X$ first order stochastically dominates $Y$; that is,
\begin{equation}
	P(X\leq u) < P(Y\leq u) \quad \text{for any } u \in \left(\frac{1}{2}, 1\right).
	\label{eq:fosd}
\end{equation}

We shall show \eqref{eq:fosd} by showing that for random variable $T \sim \operatorname{Beta}\left(\frac{p}{\gamma}, \frac{1-p}{\gamma}\right)$ with $p > \frac{1}{2}$ and for any $\frac{1}{2} < u < 1$, $P(\max(T, 1-T) \leq u)$ is strictly decreasing in $p$, or equivalently,
\begin{equation}
	\frac{\partial P(\max(T, 1-T)\leq u)}{\partial p} < 0 \quad \text{for any $p > \frac{1}{2}$ and any $\frac{1}{2} < u < 1$}.
	\label{eq:fosd-equivalent}
\end{equation}

First, for the beta function $\Beta(a,b) = \Gamma(a)\Gamma(b)/\Gamma(a+b)$ with $a,b > 0$,
\[
	\frac{\partial\Beta(a,b)}{\partial a} = \Beta(a,b)[\psi(a)-\psi(a+b)], \quad \frac{\partial\Beta(a,b)}{\partial b} = \Beta(a,b)[\psi(b)-\psi(a+b)]
\]
where $\psi(z) = \Gamma'(z)/\Gamma(z)$ is the digamma function. Then for $0<z<1$,
\begin{equation*}
	\frac{\partial\ibeta_z(a,b)}{\partial a} = \frac{\partial}{\partial a}\frac{\int_0^z t^{a-1}(1-t)^{b-1}\od t}{\Beta(a,b)} = \frac{\int_0^z t^{a-1}(1-t)^{b-1}\ln t\od t}{\Beta(a,b)} - \ibeta_z(a,b)[\psi(a)-\psi(a+b)],
\end{equation*}
and
\begin{equation*}
	\frac{\partial\ibeta_z(a,b)}{\partial b} = \frac{\int_0^z t^{a-1}(1-t)^{b-1}\ln(1-t)\od t}{\Beta(a,b)} - \ibeta_z(a,b)[\psi(b)-\psi(a+b)],
\end{equation*}
which gives
\begin{equation}
	\frac{\partial\ibeta_z\left(\frac{p}{\gamma}, \frac{1-p}{\gamma}\right)}{\partial p} = \frac{1}{\gamma}\left\{\int_0^z f_T(t)\ln\frac{t}{1-t}\od t - \ibeta_z\left(\frac{p}{\gamma}, \frac{1-p}{\gamma}\right)\left[\psi\left(\frac{p}{\gamma}\right) - \psi\left(\frac{1-p}{\gamma}\right)\right]\right\},
	\label{eq:di/dp}
\end{equation}
where $f_T(t) = t^{\frac{p}{\gamma}-1} (1-t)^{\frac{1-p}{\gamma}-1} \big/ \Beta\left(\frac{p}{\gamma}, \frac{1-p}{\gamma}\right)$ is the density function of $T$. Therefore, for $\frac{1}{2} < u < 1$, $P(\max(T,1-T)\leq u) = P(1-u \leq T \leq u) = \ibeta_u\left(\frac{p}{\gamma}, \frac{1-p}{\gamma}\right) - \ibeta_{1-u}\left(\frac{p}{\gamma}, \frac{1-p}{\gamma}\right)$, and then
 %\resizebox{.98\hsize}{!}{
\begin{multline}
	\frac{\partial P(\max(T,1-T)\leq u)}{\partial p}
= \frac{1}{\gamma}\left\{\int_{1-u}^u f_T(t)\ln\frac{t}{1-t}\od t - \int_{1-u}^u f_T(t)\od t\cdot\left[\psi\left(\frac{p}{\gamma}\right) - \psi\left(\frac{1-p}{\gamma}\right)\right]\right\} \\
	= \frac{\int_{1-u}^u f_T(t)\od t}{\gamma}
	\left\{\frac{\int_{1-u}^u f_T(t)\ln\frac{t}{1-t}\od t}{\int_{1-u}^u f_T(t)\od t} - \left[\psi\left(\frac{p}{\gamma}\right) - \psi\left(\frac{1-p}{\gamma}\right)\right]\right\}.
	\label{eq:dP/dp}
\end{multline}
%}
Since $\int_{1-u}^u f_T(t)\od t > 0$ for any $u > \frac{1}{2}$, we have the sign of $\partial P(\max(T,1-T)\leq u)/\partial p$ is the same as the sign of the term
\begin{align*}
	A & \equiv \frac{\int_{1-u}^u f_T(t)\ln\frac{t}{1-t}\od t}{\int_{1-u}^u f_T(t)\od t} - \left[\psi\left(\frac{p}{\gamma}\right) - \psi\left(\frac{1-p}{\gamma}\right)\right] \notag \\
%	&= \frac{\int_{1-u}^{1/2} f_T(t)\ln\frac{t}{1-t}\od t + \int_{1/2}^u f_T(t)\ln\frac{t}{1-t}\od t}{\int_{1-u}^{1/2} f_T(t)\od t + \int_{1/2}^u f_T(t)\od t} - \left[\psi\left(\frac{p}{\gamma}\right) - \psi\left(\frac{1-p}{\gamma}\right)\right] \notag \\
%	&= \frac{\int_{1/2}^u f_T(t)\ln\frac{t}{1-t}\od t - \int_{1/2}^u f_T(1-t)\ln\frac{t}{1-t}\od t}{\int_{1/2}^u f_T(t)\od t + \int_{1/2}^u f_T(1-t)\od t} - \left[\psi\left(\frac{p}{\gamma}\right) - \psi\left(\frac{1-p}{\gamma}\right)\right] \notag \\
	&= \frac{\int_{1/2}^u [f_T(t)-f_T(1-t)]\ln\frac{t}{1-t}\od t}{\int_{1/2}^u [f_T(t)+f_T(1-t)]\od t} - \left[\psi\left(\frac{p}{\gamma}\right) - \psi\left(\frac{1-p}{\gamma}\right)\right].
% 	\label{eq:dP/dp-key}
\end{align*}

Next, differentiating $A$ with respect to $u$ yields
\begin{align}
	\frac{\partial A}{\partial u}
% 	&= \frac{1}{\left\{\int_{1/2}^u [f_T(t)+f_T(1-t)]\od t\right\}^2}\left\{[f_T(u)-f_T(1-u)]\ln\frac{u}{1-u}\cdot\int_{1/2}^u[f_T(t)+f_T(1-t)]\od t
% 	\vphantom{\int_{1/2}^u [f_T(t)-f_T(1-t)]\ln\frac{t}{1-t}\od t}
% 	\right. \notag \\[-1.5ex]
% 	&\hspace{15em} - \left. [f_T(u)+f_T(1-u)]\cdot\int_{1/2}^u [f_T(t)-f_T(1-t)]\ln\frac{t}{1-t}\od t\right\} \notag \\
%	&= \frac{f_T(u)+f_T(1-u)}{\int_{1/2}^u [f_T(t)+f_T(1-t)]\od t}\left\{ \frac{f_T(u)-f_T(1-u)}{f_T(u)+f_T(1-u)}\ln\frac{u}{1-u} - \frac{\int_{1/2}^u [f_T(t)-f_T(1-t)]\ln\frac{t}{1-t}\od t}{\int_{1/2}^u [f_T(t)+f_T(1-t)]\od t} \right\} \notag \\
	&= \frac{f_T(u)+f_T(1-u)}{\int_{1/2}^u [f_T(t)+f_T(1-t)]\od t}\left\{ \frac{f_T(u)-f_T(1-u)}{f_T(u)+f_T(1-u)}\ln\frac{u}{1-u}
	\vphantom{\int_{1/2}^u \frac{f_T(t)-f_T(1-t)}{f_T(t)+f_T(1-t)}\ln\frac{t}{1-t}\cdot \frac{f_T(t)+f_T(1-t)}{\int_{1/2}^u [f_T(v)+f_T(1-v)]\od v}\od t}
	\right. \notag \\
	&\hspace{8em} - \left. \int_{1/2}^u \frac{f_T(t)-f_T(1-t)}{f_T(t)+f_T(1-t)}\ln\frac{t}{1-t}\cdot \frac{f_T(t)+f_T(1-t)}{\int_{1/2}^u [f_T(v)+f_T(1-v)]\od v}\od t \right\} \notag \\
	&= \frac{f_T(u)+f_T(1-u)}{\int_{1/2}^u [f_T(t)+f_T(1-t)]\od t}\left\{ g(u) - \int_{1/2}^u g(t)\cdot\frac{f_T(t)+f_T(1-t)}{\int_{1/2}^u [f_T(v)+f_T(1-v)]\od v}\od t \right\},
	\label{eq:dconditionalE-wrt-u}
\end{align}
where function $g(t) \equiv \frac{f_T(t)-f_T(1-t)}{f_T(t)+f_T(1-t)}\ln\frac{t}{1-t}$. Because $\frac{t}{1-t}$ is increasing in $t$ on $\left(\frac{1}{2},1\right)$,
\begin{equation*}
	\frac{f_T(t)-f_T(1-t)}{f_T(t)+f_T(1-t)} = \frac{t^{\frac{p}{\gamma}-1}(1-t)^{\frac{1-p}{\gamma}-1} - (1-t)^{\frac{p}{\gamma}-1}t^{\frac{1-p}{\gamma}-1}}{t^{\frac{p}{\gamma}-1}(1-t)^{\frac{1-p}{\gamma}-1} + (1-t)^{\frac{p}{\gamma}-1}t^{\frac{1-p}{\gamma}-1}}
% 	= \frac{\bigl(\frac{t}{1-t}\bigr)^{\frac{2p-1}{\gamma}}-1}{\bigl(\frac{t}{1-t}\bigr)^{\frac{2p-1}{\gamma}}+1}
	= 1-\frac{2}{1+\left(\frac{t}{1-t}\right)^{\frac{2p-1}{\gamma}}}
\end{equation*}
is increasing in $t$ given $p > \frac{1}{2}$; and so is $g(t)$ since $\ln\frac{t}{1-t}$ is increasing in $t$ on $\left(\frac{1}{2},1\right)$ as well. Note that the integral term in \eqref{eq:dconditionalE-wrt-u} is a weighted average of the values of $g(t)$ over interval $\left[\frac{1}{2},u\right]$. Thus, for any $u > \frac{1}{2}$,
\[
	g(u) > \int_{1/2}^u g(t)\cdot\frac{f_T(t)+f_T(1-t)}{\int_{1/2}^u [f_T(v)+f_T(1-v)]\od v}\od t
\]
due to the monotonicity of $g(\wc)$ and then $\partial A/\partial u < 0$.
% \begin{equation}
% 	\frac{\partial}{\partial u} \left\{ \frac{\int_{1-u}^u f_T(t)\ln\frac{t}{1-t}\od t}{\int_{1-u}^u f_T(t)\od t} - \left[\psi\left(\frac{p}{\gamma}\right) - \psi\left(\frac{1-p}{\gamma}\right)\right] \right\} > 0.
% 	\label{eq:monotonicity-E-minus-E}
% \end{equation}

Finally, when $u = 1$,\footnote{This result is also implied by \eqref{eq:dP/dp}: when $u = 1$, $P(\max(T,1-T)\leq 1) = 1$ for any $p$, so the left-hand side of \eqref{eq:dP/dp} is zero; it means the brace term on the right-hand side of \eqref{eq:dP/dp} must be zero since $\int_0^1 f_T(t)\od t / \gamma = 1/\gamma > 0$. For the expectation of $\ln \frac{T}{1-T}$, because for any $a,b > 0$,
\[
\int_0^1\frac{u^{a-1}(1-u)^{b-1}}{\Beta(a,b)}\ln u\od u = \frac{1}{\Beta(a,b)}\int_0^1\frac{\partial u^{a-1}(1-u)^{b-1}}{\partial a}\od u = \frac{1}{\Beta(a,b)}\frac{\partial\Beta(a,b)}{\partial a} = \psi(a)-\psi(a+b),
\]
it follows that $\mathbb{E}\ln T = \psi\left(\frac{p}{\gamma}\right) - \psi(1)$ and $\mathbb{E}\ln (1-T) = \psi\left(\frac{1-p}{\gamma}\right) - \psi(1)$.}
\begin{align*}
	A &= \frac{\int_0^1 f_T(t)\ln\frac{t}{1-t}\od t}{\int_0^1 f_T(t)\od t} - \left[\psi\left(\frac{p}{\gamma}\right) - \psi\left(\frac{1-p}{\gamma}\right)\right] \\
	&= \mathbb{E}\left(\ln\frac{T}{1-T}\right) - \left[\psi\left(\frac{p}{\gamma}\right) - \psi\left(\frac{1-p}{\gamma}\right)\right] = 0.
\end{align*}
Combining this and $\partial A/\partial u < 0$, we have $A < 0$
for any $\frac{1}{2} < u < 1$ and any $p > \frac{1}{2}$. This proves \eqref{eq:fosd-equivalent} and completes the proof.
\end{proof}

\subsection{Proof of Proposition \ref{prop:bayesian-monotone}}
\label{pf:bayesian-monotone}

\begin{proof}
	For simplicity of notation, we omit the index $i$ in the proof. We first show that $P\left(\tilde{p}_x \leq \frac{1}{2}\right)$ is strictly decreasing in $p_x$.
	
Given $\tilde{p}_x \sim \operatorname{Beta}\left(\frac{p_x}{\gamma}, \frac{1-p_x}{\gamma}\right)$, $P\left(\tilde{p}_x \leq \frac{1}{2}\right) = \ibeta_{\frac{1}{2}}\left(\frac{p_x}{\gamma}, \frac{1-p_x}{\gamma}\right)$. Then by \eqref{eq:di/dp},
\begin{align*}
	\frac{\partial P\left(\tilde{p}_x \leq \frac{1}{2}\right)}{\partial p_x}
	&= \begin{multlined}[t][0.7\linewidth]
	\frac{1}{\gamma\Beta\left(\frac{p_x}{\gamma},\frac{1-p_x}{\gamma}\right)} \int_0^{\frac{1}{2}} t^{\frac{p_x}{\gamma}-1} (1-t)^{\frac{1-p_x}{\gamma}-1} \ln\frac{t}{1-t}\od t\\
	+ \frac{\ibeta_{\frac{1}{2}}\left(\frac{p_x}{\gamma}, \frac{1-p_x}{\gamma}\right)}{\gamma}\left[\psi\left(\frac{1-p_x}{\gamma}\right) - \psi\left(\frac{p_x}{\gamma}\right)\right].
	\end{multlined}
\end{align*}
For $0< t <\frac{1}{2}$, the integral term is always negative as $\ln\frac{t}{1-t} < 0$. Note that when the real part of $z$ is positive then the digamma function has the following integral representation $\psi (z)= \int_0^\infty \left(\frac{e^{-t}}{t} -\frac{e^{-zt}}{1-e^{-t}}\right)\od t$, which is increasing in $z$. So when $\frac{1}{2} \leq p_x < 1$, $\psi\left(\frac{1-p_x}{\gamma}\right) \leq \psi\left(\frac{p_x}{\gamma}\right)$ and therefore $\partial P\left(\tilde{p}_x \leq \frac{1}{2}\right) / \partial p_x < 0$.

When $0 < p_x < \frac{1}{2}$, using $\ibeta_{\frac{1}{2}}\left(\frac{p_x}{\gamma}, \frac{1-p_x}{\gamma}\right) = 1 - \ibeta_{\frac{1}{2}}\left(\frac{1-p_x}{\gamma}, \frac{p_x}{\gamma}\right)$, we can get
\begin{align*}
	\frac{\partial P\left(\tilde{p}_x \leq \frac{1}{2}\right)}{\partial p_x} &= -\frac{\partial}{\partial p_x}\ibeta_{\frac{1}{2}}\left(\frac{1-p_x}{\gamma}, \frac{p_x}{\gamma}\right) \\
	&= \begin{multlined}[t][0.7\linewidth]
	\frac{1}{\gamma\Beta\left(\frac{1-p_x}{\gamma},\frac{p_x}{\gamma}\right)} \int_0^{\frac{1}{2}} t^{\frac{1-p_x}{\gamma}-1} (1-t)^{\frac{p_x}{\gamma}-1} \ln\frac{t}{1-t}\od t\\
	+ \frac{\ibeta_{\frac{1}{2}}\left(\frac{1-p_x}{\gamma}, \frac{p_x}{\gamma}\right)}{\gamma}\left[\psi\left(\frac{p_x}{\gamma}\right) - \psi\left(\frac{1-p_x}{\gamma}\right)\right].
	\end{multlined}
\end{align*}
A similar argument would show that $\partial P\left(\tilde{p}_x \leq \frac{1}{2}\right) / \partial p_x < 0$ for $0 < p_x < \frac{1}{2}$. So $P\left(\tilde{p}_x \leq \frac{1}{2}\right)$ is strictly decreasing in $p_x$ and $P\left(\tilde{p}_x > \frac{1}{2}\right)$ is strictly increasing in $p_x$.

We know that $\bigl|p_x -\frac{1}{2}\bigr| > \bigl|p_y -\frac{1}{2}\bigr|$ can be categorized into four cases:
(i) $p_x > p_y > \frac{1}{2}$, (ii) $p_x < p_y < \frac{1}{2}$, (iii) $p_x > \frac{1}{2} > p_y$ and $p_x +p_y >1$, and (iv) $p_y > \frac{1}{2} > p_x$ and $p_x +p_y <1$, where we ignore the weak inequality case for the sake of simplicity.
In the first case, $P\bigl(\bigl(\tilde{p}_x - \frac{1}{2}\bigr)\bigl(p_x -\frac{1}{2}\bigr)>0\bigr) = P\bigl(\tilde{p}_x > \frac{1}{2}\bigr)> P\bigl(\tilde{p}_y >\frac{1}{2}\bigr)=P\bigl(\bigl(\tilde{p}_y - \frac{1}{2}\bigr)\bigl(p_y -\frac{1}{2}\bigr)>0\bigr)$. In the second case, $P\bigl(\bigl(\tilde{p}_x -\frac{1}{2}\bigr)\bigl(p_x -\frac{1}{2}\bigr)>0\bigr) = P\bigl(\tilde{p}_x < \frac{1}{2}\bigr)> P\bigl(\tilde{p}_y < \frac{1}{2}\bigr)=P\bigl(\bigl(\tilde{p}_y-\frac{1}{2}\bigr)\bigl(p_y-\frac{1}{2}\bigr)>0\bigr)$.
In the third case, let $p_z = 1-p_y$, then it is straightforward to check that $P\bigl(\tilde{p}_z > \frac{1}{2}\bigr) = P\bigl(\tilde{p}_y < \frac{1}{2}\bigr)$. Since $p_x > p_z > \frac{1}{2}$, $P\bigl(\bigl(\tilde{p}_x-\frac{1}{2}\bigr)\bigl(p_x -\frac{1}{2}\bigr)>0\bigr) = P\bigl(\tilde{p}_x > \frac{1}{2}\bigr) > P\bigl(\tilde{p}_z >\frac{1}{2}\bigr)=P\bigl(\tilde{p}_y < \frac{1}{2}\bigr) = P\bigl(\bigl(\tilde{p}_y -\frac{1}{2}\bigr)\bigl(p_y -\frac{1}{2}\bigr)>0\bigr)$.
In the fourth case, let $p_z = 1-p_x$, then $P\bigl(\tilde{p}_z > \frac{1}{2}\bigr) = P\bigl(\tilde{p}_x < \frac{1}{2}\bigr)$. Since $p_z > p_y >\frac{1}{2}$, $P\bigl(\bigl(\tilde{p}_x-\frac{1}{2}\bigr)\bigl(p_x-\frac{1}{2}\bigr)>0\bigr) = P\bigl(\tilde{p}_x < \frac{1}{2}\bigr)=P\bigl(\tilde{p}_z > \frac{1}{2}\bigr)>P\bigl(\tilde{p}_y>\frac{1}{2}\bigr)= P\bigl(\bigl(\tilde{p}_y-\frac{1}{2}\bigr)\bigl(p_y -\frac{1}{2}\bigr)>0\bigr)$. This completes the proof.
\end{proof}

\subsection{Lemma \ref{lem:expected-win} and its proof}
\label{pf:expected-win}

We follow the notations in subsection \ref{sec:bidding-strategy}: let $\mathcal{X}_{B+n}^t$ denote the collection of all signal sets $x$ that contain a first black ball in setting $t\in\{pri,soc\}$, and define $V_{B+n}^{i,t} = \sum_{x\in\mathcal{X}_{B+n}^t} P^i(x\mid B)\mathbb{E}\max(\tilde{p}_x^i, 1-\tilde{p}_x^i)$. The following lemma shows that $V_{B+n}^{i,t}$ %, which can be viewed as the expected chance of making a correct choice after observing the additional signals,
is linear in $\tilde{p}_B^i$ and is also increasing in $\tilde{p}_B^i$.

\begin{lemma}
$V_{B+n}^{i,t} = V_1\tilde{p}_B^i + V_2(1-\tilde{p}_B^i)$ and $\frac{1}{2} < V_2 < V_1 <1$, where $V_1= \sum_{x\in\mathcal{X}_{B+n}^t} P^i(x\mid \urnone, B) \cdot\mathbb{E}\max(\tilde{p}_x^i, 1-\tilde{p}_x^i)$ and $V_2= \sum_{x\in\mathcal{X}_{B+n}^t} P^i(x\mid \urntwo, B)\cdot \mathbb{E}\max(\tilde{p}_x^i, 1-\tilde{p}_x^i)$.
\label{lem:expected-win}
\end{lemma}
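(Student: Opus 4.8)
\emph{Proof proposal.} The plan is first to obtain the affine form from a one-line total-probability decomposition, and then to reduce the three strict inequalities to a single monotonicity fact about how ``decisive'' a posterior is, which Proposition~\ref{prop:subjective-assessment} already supplies. For the affine form: after the first black ball subject $i$ assigns probability $\tilde{p}_B^i$ to urn~1, and conditional on the true urn the additional signals (fresh ball draws, or others' first choices) are independent of her own first ball, so in her beliefs $P^i(x\mid B)=\tilde{p}_B^i\,P^i(x\mid\urnone,B)+(1-\tilde{p}_B^i)\,P^i(x\mid\urntwo,B)$ for every $x\in\mathcal{X}_{B+n}^t$. Substituting this into $V_{B+n}^{i,t}=\sum_{x\in\mathcal{X}_{B+n}^t}P^i(x\mid B)\,\mathbb{E}\max(\tilde{p}_x^i,1-\tilde{p}_x^i)$ and splitting the sum gives $V_{B+n}^{i,t}=V_1\tilde{p}_B^i+V_2(1-\tilde{p}_B^i)$ with $V_1,V_2$ as in the statement (using $\sum_x P^i(x\mid\urnone,B)=\sum_x P^i(x\mid\urntwo,B)=1$).

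\textbf{Parametrisation and the easy bounds.} Let $\rho$ denote, in subject $i$'s belief, the probability that one additional signal points toward urn~1 conditional on urn~1: $\rho=p$ in the private setting and $\rho=\pi\equiv P^i(C_1\mid\urnone)$ in the social setting, while conditional on urn~2 this probability is $1-\rho$. In both settings $\tfrac12<\rho<1$: immediate when $\theta^i=0$ (then $\rho=p$), and otherwise by the discussion after Assumption~\ref{ass:distribution-other-p} together with the beta-symmetry argument used in the proof of Proposition~\ref{prop:bayesian-monotone}, which give $\tfrac12<\pi<p$. Index the elements of $\mathcal{X}_{B+n}^t$ by $k\in\{0,\dots,n\}$, the number of additional signals pointing to urn~1, and set $h(k)\equiv\mathbb{E}\max(\tilde{p}_{x_k}^i,1-\tilde{p}_{x_k}^i)=v(p_{x_k})$. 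Then $P^i(x_k\mid\urnone,B)=\binom{n}{k}\rho^k(1-\rho)^{n-k}$ and $P^i(x_k\mid\urntwo,B)=\binom{n}{k}(1-\rho)^k\rho^{n-k}$, so
\[ V_1=\sum_{k=0}^n\binom{n}{k}\rho^k(1-\rho)^{n-k}h(k),\qquad V_2=\sum_{k=0}^n\binom{n}{k}(1-\rho)^k\rho^{n-k}h(k). \]
Since there are only finitely many signals and $0<\rho,p<1$, every Bayesian posterior $p_{x_k}$ lies in $(0,1)$, so $\tfrac12\le h(k)<1$; as the binomial weights sum to one, $V_1<1$ (and $V_2<1$). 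Moreover $\ln\tfrac{p_{x_n}}{1-p_{x_n}}=\ln\tfrac pq+n\ln\tfrac{\rho}{1-\rho}>0$, so $h(n)>\tfrac12$, and since the weight on $k=n$ in $V_2$ is $(1-\rho)^n>0$ while all other $h(k)\ge\tfrac12$, we get $V_2>\tfrac12$.

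\textbf{The inequality $V_1>V_2$.} Re-indexing the sum for $V_2$ by $k\mapsto n-k$ gives $V_1-V_2=\sum_{k=0}^n\binom{n}{k}\rho^k(1-\rho)^{n-k}[h(k)-h(n-k)]$; pairing $k$ with $n-k$ (there is no middle term since $n\in\{1,3\}$ is odd) collapses this to
\[ V_1-V_2=\sum_{2k>n}\binom{n}{k}\bigl[\rho(1-\rho)\bigr]^{n-k}\Bigl(\rho^{\,2k-n}-(1-\rho)^{\,2k-n}\Bigr)\bigl[h(k)-h(n-k)\bigr]. \]
For $2k>n$ the factor $\rho^{2k-n}-(1-\rho)^{2k-n}$ is positive because $\rho>1-\rho>0$. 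For the last bracket, observe that $\ln\tfrac{p_{x_k}}{1-p_{x_k}}=\ell_0+(2k-n)\ell$ and $\ln\tfrac{p_{x_{n-k}}}{1-p_{x_{n-k}}}=\ell_0-(2k-n)\ell$, where $\ell_0=\ln(p/q)>0$ and $\ell=\ln\bigl(\rho/(1-\rho)\bigr)>0$; since $\ell_0,\ell>0$ and $2k-n\ge1$, the former has strictly larger absolute value, and because $\max(p,1-p)$ is strictly increasing in $\bigl|\ln\tfrac{p}{1-p}\bigr|$ this gives $\max(p_{x_k},1-p_{x_k})>\max(p_{x_{n-k}},1-p_{x_{n-k}})$, whence Proposition~\ref{prop:subjective-assessment}(ii) yields $h(k)>h(n-k)$. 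Every summand is therefore strictly positive, so $V_1>V_2$.

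\textbf{Main obstacle.} The one genuinely non-mechanical point is $h(k)>h(n-k)$ for $2k>n$: turning ``conditioning on urn~1 makes the additional evidence more decisive'' into the comparison $\max(p_{x_k},1-p_{x_k})>\max(p_{x_{n-k}},1-p_{x_{n-k}})$, which the log-odds computation renders transparent and which then plugs straight into Proposition~\ref{prop:subjective-assessment}. The two loose ends to discharge along the way are establishing $\rho\in(\tfrac12,1)$ in the social setting (reducing to $\tfrac12<\pi<p$ under Assumption~\ref{ass:distribution-other-p}, with $\rho=p$ when $\theta^i=0$) and the degenerate cases $\gamma^i=0$ or $\theta^i=0$, where $v$ is simply $\max(\cdot,1-\cdot)$ and every step above holds verbatim.
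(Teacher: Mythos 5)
Your proof is correct. It has the same skeleton as the paper's: the affine form comes from the identical total-probability decomposition $P^i(x\mid B)=\tilde p_B^i P^i(x\mid \text{urn }1,B)+(1-\tilde p_B^i)P^i(x\mid\text{urn }2,B)$, the bounds $\tfrac12<V_2$ and $V_1<1$ come from $V_1,V_2$ being weighted averages of $v(p_x)\in[\tfrac12,1)$, and $V_1>V_2$ comes from pairing a signal set with its mirror image and invoking Proposition~\ref{prop:subjective-assessment}(ii). Where you differ is in organization: the paper grinds through the four cases (private/social $\times$ $n\in\{1,3\}$) separately, verifying by hand in each case that the relevant posteriors satisfy $\max(p_x,1-p_x)>\max(p_y,1-p_y)$ (e.g., checking both $p_{4B}>p_{1B3W}$ and $p_{4B}>1-p_{1B3W}$), whereas you unify everything with the single parameter $\rho\in(\tfrac12,1)$, a binomial count $k$, and the observation that $\max(p,1-p)$ is increasing in $\bigl|\ln\tfrac{p}{1-p}\bigr|$, which makes the comparison $h(k)>h(n-k)$ for $2k>n$ immediate and would extend verbatim to any odd number of additional signals. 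The cost is that you import $\tfrac12<\pi<p$ from the discussion after Assumption~\ref{ass:distribution-other-p} and the symmetry/monotonicity facts behind Proposition~\ref{prop:bayesian-monotone}, whereas the paper's proof re-derives $\tfrac12<\pi<p$ in-line from the incomplete beta function (its equation for $\pi$ in case (iii)); that derivation, or an explicit pointer to it, is the one loose end you would need to discharge to make the argument fully self-contained. Your direct argument for $V_2>\tfrac12$ via the strictly positive weight $(1-\rho)^n$ on the term $h(n)>\tfrac12$ is also slightly cleaner than the paper's "not all weighted $v$'s can equal $\tfrac12$ simultaneously."
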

\begin{proof}
First,
\begin{align*}
V_{B+n}^{i,t}  &=  \sum\nolimits_{x\in\mathcal{X}_{B+n}^t} [P^i(x, \urnone\mid B) + P^i(x, \urntwo\mid B)]\mathbb{E}\max(\tilde{p}_x^i, 1-\tilde{p}_x^i) \\
&= \sum\nolimits_{x\in\mathcal{X}_{B+n}^t} [\tilde{p}_B^i P^i(x \mid \urnone, B) + (1-\tilde{p}_B^i) P^i(x \mid \urntwo, B)]\mathbb{E}\max(\tilde{p}_x^i, 1-\tilde{p}_x^i) \\
&=\tilde{p}_B^i V_1 + (1-\tilde{p}_B^i)V_2.
\end{align*}
%We may interpret $V_1$ (or $V_2$) as the average probability of making a correct choice after purchasing additional signals when the true state is urn 1 (or urn 2).

It remains to show that $\frac{1}{2} < V_2 < V_1 <1$. Let $v(p_x) \equiv \mathbb{E}\max(\tilde{p}_x^i, 1-\tilde{p}_x^i)$.
By Proposition \ref{prop:subjective-assessment}, $v(p_x)$ is symmetric about $\frac{1}{2}$ and strictly increasing on $\bigl(\frac{1}{2},1\bigr)$. We shall first prove $V_1 > V_2$ by cases.

\begin{enumerate}[label=(\roman*)]
\item Private information, $n = 1$

In this case, $V_1 = p v(p_\BB) + q v(p_\BW)$ and $V_2 = qv(p_\BB) + pv(p_\BW)$. Since $p>q$, $\frac{p^2}{p^2+q^2} > \frac{1}{2}$. It follows from the monotonicity of $v(\wc)$ that
\begin{equation*}
	V_1 - V_2 = (p-q)[v(p_\BB) - v(p_\BW)] = (p-q)\left[v\left(\frac{p^2}{p^2+q^2}\right) - v\left(\frac{1}{2}\right)\right] >0.
\end{equation*}

\item Private information, $n = 3$

In this case, $V_1 = p^3 v(p_{4B}) + 3p^2q v(p_{3B1W}) + 3pq^2 v(p_{2B2W}) + q^3 v(p_{1B3W})$, $V_2 = q^3 v(p_{4B}) + 3pq^2 v(p_{3B1W}) + 3p^2q v(p_{2B2W}) + p^3 v(p_{1B3W})$. Since $p>q$, $p^3-q^3 > 0$, $3pq(p-q) > 0$, and
\begin{gather*}
	p_{4B} = \frac{p^4}{p^4+q^4} > \frac{p^2}{p^2+q^2} > \frac{1}{2} > \frac{q^2}{p^2+q^2} = p_{1B3W}, \quad
	p_{3B1W}=\frac{p^2}{p^2+q^2} > \frac{1}{2} = p_{2B2W}.
\end{gather*}
Thus, by Proposition \ref{prop:subjective-assessment}, $v(p_{1B3W}) < v(p_{4B})$, $v(p_{2B2W}) < v(p_{3B1W})$, and therefore
\begin{equation*}
	V_1-V_2 = (p^3-q^3)[v(p_{4B})-v(p_{1B3W})] + 3pq(p-q)[v(p_{3B1W})-v(p_{2B2W})] > 0.
\end{equation*}

\item Social information, $n = 1$

In the social information setting, under Assumption \ref{ass:distribution-other-p}, subject $i$'s belief about observing $C_1$ in the true state of urn 1 is\footnote{Since $\tilde{p}_x^{-i}$ is assumed to be continuous, we skip the tie cases where $\tilde{p}_B^{-i} = \frac{1}{2}$ or $\tilde{p}_W^{-i} = \frac{1}{2}$.}
\begin{align}
	P^{i}(C_1\mid\text{urn 1}) &= P(B\mid\text{urn 1})\cdot P^{i}\left(\tilde{p}_B^{-i} > \frac{1}{2}\right) + P(W\mid\text{urn 1})\cdot P^{i}\left(\tilde{p}_W^{-i} > \frac{1}{2}\right) \notag \\
	&= p\left[1-\ibeta_{\frac{1}{2}}\left(\frac{p_B}{\theta^i},\frac{1-p_B}{\theta^i}\right)\right] + q\left[1-\ibeta_{\frac{1}{2}}\left(\frac{p_W}{\theta^i},\frac{1-p_W}{\theta^i}\right)\right] \notag \\
	&= p\ibeta_{\frac{1}{2}}\left(\frac{q}{\theta^i},\frac{p}{\theta^i}\right) + q\ibeta_{\frac{1}{2}}\left(\frac{p}{\theta^i},\frac{q}{\theta^i}\right)
	\label{eq:pi}
\end{align}
as $p_B = p$ and $p_W = q$. Similarly, $P^{i}(C_1\mid\text{urn 2}) = P(B\mid\text{urn 2})\cdot P^{i}\left(\tilde{p}_B^{-i} > \frac{1}{2}\right) + P(W\mid\text{urn 2})\cdot P^{i}\left(\tilde{p}_W^{-i} > \frac{1}{2}\right) = q\ibeta_{\frac{1}{2}}(q/\theta^i,p/\theta^i) + p\ibeta_{\frac{1}{2}}(p/\theta^i,q/\theta^i)$,
% \begin{align*}
% 	P^i(C_1\mid\text{urn 2}) &= P(B\mid\text{urn 2})\cdot P^i\left(\tilde{p}_B^{-i} > \frac{1}{2}\right) + P(W\mid\text{urn 2})\cdot P^i\left(\tilde{p}_W^{-i} > \frac{1}{2}\right) \\
% 	&= q\ibeta_{\frac{1}{2}}\left(\frac{q}{\theta^i},\frac{p}{\theta^i}\right) + p\ibeta_{\frac{1}{2}}\left(\frac{p}{\theta^i},\frac{q}{\theta^i}\right),
% \end{align*}
$P^{i}(C_2\mid\text{urn 1}) = 1-P^{i}(C_1\mid\text{urn 1})$, and $P^{i}(C_2\mid\text{urn 2}) = 1-P^{i}(C_1\mid\text{urn 2})$.

We omit the index $i$ in the remaining proof for the sake of exposition. Define $\pi \equiv P(\one \mid\urnone) = P(\two\mid\urntwo)$, then in this case, $V_1 = P(\one\mid\urnone) v(p_{B\one}) + P(\two\mid\urnone)v(p_{B\two}) = \pi v(p_{B\one}) + (1-\pi) v(p_{B\two})$ and $V_2 = P(\one\mid\urntwo)v(p_{B\one}) + P(\two\mid\urntwo) v(p_{B\two}) = (1-\pi)v(p_{B\one}) + \pi v(p_{B\two})$.

Because $p>q$ and $\theta > 0$, we have%
\footnote{$\ibeta_{\frac{1}{2}}\left(q/\theta, p/\theta\right) = 1$ and $\ibeta_{\frac{1}{2}}\left(p/\theta, q/\theta\right) = 0$ if $\theta = 0$. But $\theta = 0$ corresponds to the private information case which has been discussed previously.}
\[
    \frac{1}{2}<\ibeta_{\frac{1}{2}}\left(\frac{q}{\theta}, \frac{p}{\theta}\right)< 1, \quad 0 < \ibeta_{\frac{1}{2}}\left(\frac{p}{\theta},\frac{q}{\theta}\right) = 1-\ibeta_{\frac{1}{2}}\left(\frac{q}{\theta},\frac{p}{\theta}\right) < \frac{1}{2}.
\]
Then by \eqref{eq:pi}, $\frac{1}{2} < \pi < p$. This implies that $p_{B\two} = \frac{p(1-\pi)}{p(1-\pi)+q\pi} > \frac{1}{2}$, and
\[
	p_{B\one } = \frac{p\pi}{p\pi + q(1-\pi)} = \frac{1}{1+\frac{q(1-\pi)}{p\pi}}
	> \frac{1}{1+\frac{q\pi}{p(1-\pi)}} = p_{B\two}.
\]
It follows that $v(p_{B\one}) > v(p_{B\two})$ and therefore $V_1 - V_2 = (2\pi - 1) [v(p_{B\one})-v(p_{B\two})] > 0$.
% Because $\frac{1}{2} < p_{B\two} < p_{B\one} < 1$, we have $V_1 < v(p_{B\one}) < 1$ and $V_2 > v(p_{B\two}) > \frac{1}{2}$.

\item Social information, $n = 3$

In this case, $V_1 = \pi^3 v(p_{B3\one}) + 3\pi^2 (1-\pi) v(p_{B2\one 1\two}) + 3\pi(1-\pi)^2 v(p_{B1\one 2\two}) + (1-\pi)^3 v(p_{B3\two})$ and $V_2 = (1-\pi)^3 v(p_{B3\one}) + 3\pi(1-\pi)^2 v(p_{B2\one 1\two}) + 3\pi^2(1-\pi) v(p_{B1\one 2\two}) + \pi^3 v(p_{B3\two})$.

Since $p>q$ and $\pi > \frac{1}{2}$,
\[
    p_{B3\one} = \frac{p\pi^3}{p\pi^3+q(1-\pi)^3} = \frac{1}{1+\frac{q}{p}\bigl(\frac{1-\pi}{\pi}\bigr)^3} > \frac{1}{1+\frac{q}{p}\bigl(\frac{\pi}{1-\pi}\bigr)^3} = \frac{p(1-\pi)^3}{p(1-\pi)^3+q\pi^3} = p_{B3\two}
\]
and
\[
    p_{B3\one} = \frac{1}{1+\frac{q}{p}\bigl(\frac{1-\pi}{\pi}\bigr)^3} > \frac{1}{1+\frac{p}{q}\bigl(\frac{1-\pi}{\pi}\bigr)^3} = \frac{q\pi^3}{p(1-\pi)^3+q\pi^3} = 1-p_{B3\two}.
\]
In addition, note that
\[
    p_{B2\one1\two}= p_{B\one} = \frac{p\pi}{p\pi+q(1-\pi)} > p_{B1\one2\two} = p_{B\two} = \frac{p(1-\pi)}{p(1-\pi)+q\pi} > \frac{1}{2}.
\]
So we have $v(p_{B3\one}) > v(p_{B3\two}) = v(1-p_{B3\two})$ and $v(p_{B2\one1\two}) > v(p_{B1\one2\two})$ by Proposition \ref{prop:subjective-assessment}. Thus
\begin{multline*}
    V_1-V_2 = [\pi^3-(1-\pi)^3][v(p_{B3\one})-v(p_{B3\two})] \\
    + 3\pi(1-\pi)(2\pi-1)[v(p_{B2\one1\two})-v(p_{B1\one2\two})] > 0.
\end{multline*}
\end{enumerate}

Note that in the private learning settings, $V_1$ and $V_2$ are either weighted averages
of $v(p_\BB)$ and $v(p_\BW)$ or weighted averages of $v(p_{4B})$, $v(p_{3B1W})$,
$v(p_{2B2W})$ and $v(p_{1B3W})$. And in the social learning settings, $V_1$ and $V_2$ are
either weighted averages of $v(p_{B\one})$ and $v(p_{B\two})$ or weighted averages
of $v(p_{B3\one})$, $v(p_{B2\one1\two})$, $v(p_{B1\one2\two})$ and $v(p_{B3\two})$.
Because for any signal set $x$, $\frac{1}{2} \leq \max(\tilde{p}_x, 1-\tilde{p}_x) \leq 1$,
\begin{equation}
	\frac{1}{2}\leq v(p_x) \leq 1. \label{eq:1/2<v<1}
\end{equation}
We have $V_1,V_2 \in \bigl[\frac{1}{2}, 1\bigr]$ in any case. However, the first
equality of \eqref{eq:1/2<v<1} holds if and only if $p_x = \frac{1}{2}$ and $\gamma = 0$,
and the second equality of \eqref{eq:1/2<v<1} holds if and only if $p_x = 1$ or 0 and $\gamma = 0$.
It is impossible that all $v(\wc)$'s being weighted are 0, 1, or $\frac{1}{2}$ at the same time. Therefore, $V_1,V_2\neq 1$ and $V_1,V_2\neq\frac{1}{2}$. This completes the proof.
\end{proof}

\subsection{Proof of Proposition \ref{pf:positive-information-value-predict}}
\label{pf:positive-private-info-value}

\begin{proof}
By Lemma \ref{lem:expected-win}, $V_{B+1}^{pri} = \tilde{p}_B^{i} V_1 + (1-\tilde{p}_B^{i})V_2$ with $V_1 = p v(p_\BB) + q v(p_\BW)$ and $V_2 = qv(p_\BB) + p v(p_\BW)$, where $v(p_x) = \mathbb{E}\max(\tilde{p}_x^i, 1-\tilde{p}_x^i)$. Then,
\begin{align*}
    V_{B+1}^{pri} - \max(\tilde{p}_B^i, 1-\tilde{p}_B^i)
    &= \begin{cases} (V_1-V_2-1)\tilde{p}_B^i + V_2  & \text{if } \tilde{p}_B^i > \frac{1}{2},\\
   (1+V_1-V_2)\tilde{p}_B^i + V_2-1  &  \text{if } \tilde{p}_B^i \leq \frac{1}{2}.
    \end{cases}
\end{align*}
Since $\frac{1}{2} < V_2 < V_1 <1$ by Lemma \ref{lem:expected-win}, it must be that
\begin{align*}
    V_{B+1}^{pri} - \max(\tilde{p}_B^i, 1-\tilde{p}_B^i)
    & \begin{cases} >0 & \text{if } \tilde{p}_B^i \in (\ubar{p},\bar{p}),\\
    \leq 0 &  \text{if } \tilde{p}_B^i \geq \bar{p}  \text{ or } \tilde{p}_B^i \leq \ubar{p},
    \end{cases}
\end{align*}
where $\ubar{p} \equiv \frac{1-V_2}{1+V_1-V_2} < \frac{1}{2} < \bar{p} \equiv \frac{V_2}{1-V_1+V_2}$. It remains to show that $\bar{p} > p_B \equiv p$ and $\bar{p}$ is increasing in the Bayesian posterior probability $p$ when $p > \frac{1}{2}$.

Since for any $p_x\in(0,1)$, $v(p_x)=\mathbb{E}\max(\tilde{p}_x, 1-\tilde{p}_x) >\max(\mathbb{E}\tilde{p}_x, 1-\mathbb{E}\tilde{p}_x) =\max(p_x, 1-p_x) \geq p_x$  by Assumption \ref{ass:distribution-p} and Jensen's inequality, then
\begin{align*}
V_1p+V_2(1-p) &= p\left[pv\left(\frac{p^2}{p^2+q^2}\right) + qv\left(\frac{1}{2}\right)\right] + q\left[qv\left(\frac{p^2}{p^2+q^2}\right) + pv\left(\frac{1}{2}\right)\right] \\
&= (p^2+q^2)v\left(\frac{p^2}{p^2+q^2}\right) + 2pq v\left(\frac{1}{2}\right) \\
&> (p^2+q^2)\cdot\frac{p^2}{p^2+q^2} + 2pq\cdot\frac{1}{2} = p,
\end{align*}
so $V_2 > p-V_1p+V_2p$, which implies that $\bar{p} = \frac{V_2}{1-V_1+V_2} > p$.

Note that $\partial \bar{p}/\partial p = \left[V_2\frac{\partial V_1}{\partial p} + (1-V_1)\frac{\partial V_2}{\partial p}\right]/(1-V_1+V_2)^2$.
Since $\frac{p^2}{p^2+q^2}$ is increasing in $p$, and by Assumption \ref{ass:distribution-p}, $v(p_x)$ is an increasing function for $p_x > \frac{1}{2}$, we have $\partial v\left(\frac{p^2}{p^2+q^2}\right)/\partial p > 0$ and
\begin{align*}
\frac{\partial V_1}{\partial p} &= p\cdot\frac{\partial}{\partial p}v\left(\frac{p^2}{p^2+q^2}\right) + v\left(\frac{p^2}{p^2+q^2}\right)-v\left(\frac{1}{2}\right) \\
&> q\cdot\frac{\partial}{\partial p}v\left(\frac{p^2}{p^2+q^2}\right) + \left|v\left(\frac{1}{2}\right) - v\left(\frac{p^2}{p^2+q^2}\right)\right| \\
&> \left|q\cdot\frac{\partial}{\partial p}v\left(\frac{p^2}{p^2+q^2}\right) + v\left(\frac{1}{2}\right) - v\left(\frac{p^2}{p^2+q^2}\right)\right| = \left|\frac{\partial V_2}{\partial p}\right|.
\end{align*}
Since $V_2 > \frac{1}{2} > 1-V_1 > 0$, we have
\[
V_2\frac{\partial V_1}{\partial p}+(1-V_1)\frac{\partial V_2}{\partial p} > (1-V_1)\frac{\partial V_1}{\partial p} - (1-V_1)\left|\frac{\partial V_2}{\partial p}\right| > 0.
\]
This implies that $\partial \bar{p}/\partial p > 0$ and therefore $\bar{p}$ is increasing in $p$.
\end{proof}

\subsection{Proof of Proposition \ref{prop:social-information-value}}
\label{pf:positive-social-info-value}

\begin{proof}
By Lemma \ref{lem:expected-win}, $V_{B+1}^{soc} = \tilde{p}_B^i V_1 + (1-\tilde{p}_B^i) V_2$ with $V_1 = \pi v(p_{B\one})+ (1-\pi) v(p_{B\two})$ and $V_2 = (1-\pi) v(p_{B\one}) + \pi v(p_{B\two})$, where $\pi = P^{i}(\one\mid\urnone) = P^{i}(\two\mid\urntwo)$ and $v(p_x) = \mathbb{E}\max(\tilde{p}_x^i, 1-\tilde{p}_x^i)$. Then applying the same argument as proving Proposition \ref{pf:positive-information-value-predict}, we have
\[
    V_{B+1}^{soc} - \max(\tilde{p}_B^i, 1-\tilde{p}_B^i) \begin{cases}
> 0 & \text{if $\tilde{p}_B^i \in (\ubar{p}', \bar{p}')$}, \\
\leq 0 & \text{if $\tilde{p}_B^i \geq \bar{p}'$ or $\tilde{p}_B^i \leq \ubar{p}'$},
\end{cases}
\]
where $\ubar{p}' \equiv \frac{1-V_2}{1+V_1-V_2} < \frac{1}{2} < \bar{p}' \equiv \frac{V_2}{1+V_2-V_1}$.
Using the property $v(p_x) \geq p_x$ for any $p_x\in (0,1)$, we have
\begin{align*}
	V_1p + V_2(1-p) &= [p\pi+q(1-\pi)]v\left(\frac{p\pi}{p\pi+q(1-\pi)}\right) + [p(1-\pi)+q\pi]v\left(\frac{p(1-\pi)}{p(1-\pi)+q\pi}\right) \\
	&> [p\pi+q(1-\pi)]\cdot\frac{p\pi}{p\pi+q(1-\pi)} + [p(1-\pi)+q\pi]\cdot\frac{p(1-\pi)}{p(1-\pi)+q\pi} \\
	&= p\pi+p(1-\pi) = p.
\end{align*}
It implies that $V_2 > p+V_2p-V_1p \Rightarrow \bar{p}' > p$.
% Showing $\bar{p}' > p_B = p$ is similar to Appendix \ref{pf:positive-private-info-value}, so we are going to omit the details.
\end{proof}

\subsection{Proof of Proposition \ref{prop:bidding-strategy}}
\label{pf:bidding-strategy}

\begin{proof}
 Since
\begin{align*}
    \frac{\partial U_{n,t}(b,\tilde{p}_B^i)}{\partial b} &\propto (w_0-b+r+\alpha^i )V_{B+n}^t + (w_0-b)(1-V_{B+n}^t) \\
    & \qquad - (w_0+r+\alpha^i)\max(\tilde{p}_B^i, 1-\tilde{p}_B^i) - w_0\left[1-\max(\tilde{p}_B^i, 1-\tilde{p}_B^i)\right] \\
    % & = (w_0-b+r)V_{B+n}^t + (w_0-b)\left[1-V_{B+n}^t\right] + \alpha^i\left[V_{B+n}^t - \max(\tilde{p}_B^i, 1-\tilde{p}_B^i)\right] \\
    % & \qquad - (w_0+r)\max(\tilde{p}_B^i, 1-\tilde{p}_B^i) - w_0\left[1-\max(\tilde{p}_B^i,1-\tilde{p}_B^i)\right] \\
    &= (r+\alpha^i)\left[V_{B+n}^t - \max(\tilde{p}_B^i, 1-\tilde{p}_B^i)\right] -b \\
    &\equiv h^i(b,\tilde{p}_B^i),
\end{align*}
and $\partial^2 U_{n,t}(b,\tilde{p}_B^i)/\partial b^2 = -1/w_0 < 0$, the bid maximizing the subject's expected utility, $b(\tilde{p}_B^i)$, is determined as follows:
\begin{enumerate}[itemsep=0pt, label=(\roman*)]
    \item If $h^i(0,\tilde{p}_B^i) \leq 0$, $b(\tilde{p}_B^i) = 0$;
    \item If $h^i(w_0,\tilde{p}_B^i) \geq 0$, $b(\tilde{p}_B^i) = w_0$;
    \item If $h^i(0,\tilde{p}_B^i) > 0 > h^i(w_0,\tilde{p}_B^i)$, $b(\tilde{p}_B^i)$ is the unique root that satisfies $h^i(b(\tilde{p}_B^i), \tilde{p}_B^i) = 0$, i.e., $b(\tilde{p}_B^i) =(r+\alpha^i)[V_{B+n}^t - \max(\tilde{p}_B^i, 1-\tilde{p}_B^i)]$.
\end{enumerate}
Then the optimal bidding function is $ b(\tilde{p}_B^i) = \min(w_0,\ \max(0,\ (r+\alpha^i)[V_{B+n}^t - \max(\tilde{p}_B^i, 1-\tilde{p}_B^i)])) $. By Lemma \ref{lem:expected-win},
\begin{align*}
    V_{B+n}^t - \max(\tilde{p}_B^i, 1-\tilde{p}_B^i) %&= V_1\tilde{p}_B^i + V_2(1-\tilde{p}_B^i) - \max(\tilde{p}_B^i, 1-\tilde{p}_B^i) \notag \\
    &= \begin{cases} (V_1-V_2+1)\tilde{p}_B^i-(1-V_2) & \text{if $0\leq\tilde{p}_B^i\leq\frac{1}{2}$}, \\
    (V_1-V_2-1)\tilde{p}_B^i + V_2 & \text{if $\frac{1}{2}< \tilde{p}_B^i\leq 1$}. \end{cases}
    \label{eq:ev-minus-max}
\end{align*}
So in the case of $h^i(0,\tilde{p}_B^i) > 0 > h^i(w_0,\tilde{p}_B^i)$, $b(\tilde{p}_B^i)$ is increasing on $\bigl[0,\frac{1}{2}\bigr]$ and decreasing on $\bigl[\frac{1}{2},1\bigr]$.

We now investigate the boundary situations in more details. Note that $h^i(0,\tilde{p}_B^i) \leq 0$ if and only if $\tilde{p}_B^i \leq \frac{1-V_2}{1+V_1 -V_2}$ when $\tilde{p}_B^i \leq \frac{1}{2}$, and $\tilde{p}_B^i \geq \frac{V_2}{1+V_2 -V_1}$ when $\tilde{p}_B^i > \frac{1}{2}$. Since $\frac{1}{2} < V_2 < V_1<1$, $\frac{1-V_2}{1+V_1 -V_2} < \frac{1}{2}$ and $\frac{V_2}{1+V_2 -V_1} > \frac{1}{2}$. Therefore, $b(\tilde{p}_B^i) = 0$ if and only if $\tilde{p}_B^i \leq \frac{1-V_2}{1+V_1 -V_2}$ or $\tilde{p}_B^i \geq \frac{V_2}{1+V_2 -V_1}$.

In addition, $h^i(w_0,\tilde{p}_B^i) \geq 0$ if and only if $\tilde{p}_B^i \geq \frac{1-V_2 + \Delta}{1+V_1 -V_2}$ when $\tilde{p}_B^i \leq \frac{1}{2}$, and $\tilde{p}_B^i \leq \frac{V_2- \Delta}{1+V_2 -V_1}$ when $\tilde{p}_B^i > \frac{1}{2}$, where $\Delta = w_0/(r+\alpha^i)$. Note that
\[ \frac{1-V_2 + \Delta}{1+V_1 -V_2} > \frac{1}{2} \ \Leftrightarrow\ \frac{V_2- \Delta}{1+V_2 -V_1} < \frac{1}{2} \ \Leftrightarrow\ \alpha^{i} < \frac{2w_0}{V_1 + V_2 -1} -r. \]
Hence, when $\alpha^{i} \le  \frac{2w_0}{V_1 + V_2 -1} -r$, for $\tilde{p}_B^i \in \Bigl[\frac{1- V_2}{1+V_1 - V_2}, \frac{V_2}{1+V_2 - V_1}\Bigr]$, $h^i(w_0,\tilde{p}_B^i) < 0$ and $b(\tilde{p}_B^i) =(r+\alpha^i)\left[V_{B+n}^t - \max(\tilde{p}_B^i, 1-\tilde{p}_B^i)\right]$. When $\alpha^{i} > \frac{2w_0}{V_1 + V_2 -1} -r$,  $b(\tilde{p}_B^i) =w_0$ \\
for $\tilde{p}_B^i \in \Bigl[\frac{1 - V_2 + \Delta}{1+ V_1 -V_2}, \frac{V_2-\Delta}{1+V_2 -V_1}\Bigr]$. This completes the proof.
\end{proof}

%% References
\bibliographystyle{ecta}
\bibliography{ref}
\end{document}